\crefname{algocf}{alg.}{algs.}
\Crefname{algocf}{Algorithm}{Algorithms}
\newcommand*{\defeq}{\stackrel{\mathsmaller{\mathsf{def}}}{=}}
\newcommand*{\disteq}{\stackrel{d}{=}}
\newcommand{\rmax}{r_{max}}
\newcommand{\Nats}{\mathbb{N}}
\newcommand{\Reals}{\mathbb{R}}
\newcommand{\parent}{\mathrm{par}}
\newcommand{\Prob}{\mathbb{P}}
\newcommand{\Exp}{\mathbb{E}}
\newcommand\eqdef{\stackrel{\tiny \text{def}}{=}}
\newcommand\eqdist{\stackrel{\tiny \text{d}}{=}}
\newtheorem{theorem}{Theorem}
\newtheorem{lemma}{Lemma}
\newtheorem{definition}{Definition}
\newtheorem{assumption}{Assumption}
\newtheorem{corollary}{Corollary}
\newtheorem*{theorem*}{Theorem}
\def\eqref#1{equation~\ref{#1}}
\def\1{\bm{1}}
\DeclareMathAlphabet{\mathsfit}{\encodingdefault}{\sfdefault}{m}{sl}
\SetMathAlphabet{\mathsfit}{bold}{\encodingdefault}{\sfdefault}{bx}{n}
\newcommand{\KL}{D_{\mathrm{KL}}}
\newcommand{\Var}{\mathrm{Var}}
\DeclareMathOperator*{\argmax}{arg\,max}
\DeclareMathOperator*{\argmin}{arg\,min}
\newcommand{\GumbelProcess}{\mathcal{GP}}
\newcommand{\TruncGumbel}[2]{\mathrm{TG}\left(#1, #2\right)}
\newcommand{\Uniform}[2]{\mathcal{U}\left(#1, #2\right)}
\newcommand{\ApproxExp}{\mathcal{I}}
\newcommand{\Oh}{\mathcal{O}}
\newcommand{\partition}{\texttt{partition} }
\newcommand{\Ancestors}{\mathcal{A}}
\newcommand{\Frontier}{\mathcal{F}}
\newcommand{\Tree}{\mathcal{T}}
\newcommand{\Ind}{\mathbbm{1}}
\renewcommand{\Var}{\mathbb{V}}
\newcommand{\Dinf}[2]{D_{\infty}[#1||#2]}
\newcommand{\nil}{\mathtt{nil}}
\DeclarePairedDelimiterX{\infdivx}[2]{[}{]}{%
  #1\delimsize\| #2%
}
\newcommand{\KLD}{\KL\infdivx}
\newcommand{\infD}{D_{\infty}\infdivx}
\DeclarePairedDelimiter{\norm}{\lVert}{\rVert}
\DeclarePairedDelimiter{\abs}{\lvert}{\rvert}
\DeclarePairedDelimiterX{\innerProd}[2]{\langle}{\rangle}{%
    #1,#2%
}
\icmltitlerunning{Fast Relative Entropy Coding with A* Coding}
\begin{document}

\twocolumn[
\icmltitle{Fast Relative Entropy Coding with A* coding}




\begin{icmlauthorlist}
\icmlauthor{Gergely Flamich$^*$}{cam}
\icmlauthor{Stratis Markou$^*$}{cam}
\icmlauthor{Jos\'e Miguel Hern\'andez-Lobato}{cam,msr,tur}
\end{icmlauthorlist}

\icmlaffiliation{cam}{Department of Engineering, University of Cambridge, Cambridge, UK}
\icmlaffiliation{msr}{Microsoft Research, Cambridge, UK}
\icmlaffiliation{tur}{Alan Turing Institute, London, UK}

\icmlcorrespondingauthor{Gergely Flamich}{gf332@cam.ac.uk}
\icmlcorrespondingauthor{Stratis Markou}{em626@cam.ac.uk}

\icmlkeywords{Machine Learning, ICML}

\vskip 0.3in
]



\printAffiliationsAndNotice{$^*$Equal contribution.} 

\begin{abstract}
Relative entropy coding (REC) algorithms encode a sample from a target distribution $Q$ using a proposal distribution $P$, such that the expected codelength is $\Oh(\KLD{Q}{P})$.
REC can be seamlessly integrated with existing learned compression models since, unlike entropy coding, it does not assume discrete $Q$ or $P$, and does not require quantisation.
However, general REC algorithms require an intractable $\Omega(e^{\KLD{Q}{P}})$ runtime.
We introduce AS* and AD* coding, two REC algorithms based on A* sampling.
We prove that, for continuous distributions over $\mathbb{R}$, if the density ratio is unimodal, AS* has $\Oh(\infD{Q}{P})$ expected runtime, where $\infD{Q}{P}$ is the R\'enyi $\infty$-divergence.
We provide experimental evidence that AD* also has $\Oh(\infD{Q}{P})$ expected runtime.
We prove that AS* and AD* achieve an expected codelength of $\Oh(\KLD{Q}{P})$.
Further, we introduce DAD*, an approximate algorithm based on AD* which retains its favourable runtime and has bias similar to that of alternative methods.
Focusing on VAEs, we propose the IsoKL VAE (IKVAE), which can be used with DAD* to further improve compression efficiency.
We evaluate A* coding with (IK)VAEs on MNIST, showing that it can losslessly compress images near the theoretically optimal limit.
\end{abstract} 

\section{Introduction}
\par
In recent years, there has been significant progress in compression using machine learning, an approach known as learned compression.
Most of the prominent learned compression methods, including the state-of-the-art in both lossless \citep{townsend2019hilloc, hoogeboom2019integer, zhang2021iflow} and lossy compression \citep{balle2017end}, perform non-linear transform coding  \citep{balle2020nonlinear}.

\par
In transform coding, a datum is first mapped to a latent representation and encoded with entropy coding. 
Entropy coding assumes that the latent representation and the coding distribution are discrete, which requires a non-differentiable quantization step.
Since gradient-based optimization requires derivatives, most state-of-the-art methods use a continuous approximation to quantization during training and switch to hard quantization only during compression time \cite{balle2017end}.
This mismatch has been argued to be harmful towards the compression efficiency of these methods \citep{havasi2018minimal,flamich2020compressing, Theis2021a}.

\par
Relative entropy coding \citep[REC;][]{flamich2020compressing} is a recently proposed alternative, which does not require quantization and avoids this mismatch.
A REC algorithm uses samples from a proposal distribution $P$ to produce a random code representing a sample from a target distribution $Q$, with expected length of approximately $\KLD{Q}{P}_2$,
where the subscript specifies that the KL is measured in bits rather than nats.
General-purpose REC algorithms place no restrictions on $Q$ and $P$ beyond that $\KLD{Q}{P}_2$ be finite, so they can be applied even when $Q$ and $P$ are continuous.
Thus REC can be naturally applied to perform compression with generative models trained via gradient descent, for applications including but not limited to: (1) data compression with variational autoencoders \citep[VAE;][]{kingma2013auto}, where $Q$ corresponds to a variational posterior over latent variables and $P$ to a prior over these latent variables; (2) model compression \cite{havasi2018minimal}, where $Q$ corresponds to an approximate posterior over parameters \cite{pmlr-v37-blundell15}, and $P$ to a prior over those parameters.

\par
However, REC algorithms that make no further assumptions on $Q$ or $P$ require $\Omega\left(2^{\KLD{Q}{P}_2} \right)$ steps to terminate in expectation \citep{agustsson2020universally}, which is a severe limitation in practice.
Thus, in order to make progress, it is necessary to impose additional assumptions on $Q$ and $P$.
Universal quantisation \citep{ziv1985universal} can be regarded as a REC algorithm that achieves $\Oh(\KLD{Q}{P}_2)$ runtime and has been demonstrated to work well with the  state-of-the-art VAE-based learned compression methods \citep{agustsson2020universally}. However, it places heavy limitations on $Q$ and $P$, which might be overly restrictive in many cases.

\par
In this work, we introduce AS* and AD* coding, two closely related REC algorithms based on A* sampling \citep{maddison2014sampling}, which achieve significantly faster runtimes than existing alternatives.
For $Q$ and $P$ over $\mathbb{R}$, and without further assumptions, we show that the expected codelength achieved by AS* and AD* is bounded by
\vspace{-0.07cm}
\begin{equation} \label{eq:intro_codelength}
    \lambda~\KLD{Q}{P}_2 + \lambda~\log_2 (\KLD{Q}{P}_2 + 1) + \Oh(1),\vspace{-0.05cm}
\end{equation}
where $\lambda \approx 2.41$ and $\lambda = 1$ for AS* and AD* respectively.
With the additional assumption that $Q$ and $P$ are continuous with unimodal density ratio, we show that the expected runtime of AS* is $\Oh(\infD{Q}{P})$, where $\infD{Q}{P} = \log \sup_{x \in \mathcal{X}}\frac{dQ}{dP}(x)$ is the R\'enyi $\infty$-divergence.
While we do not prove an analogous bound for the runtime of AD*, we conduct extensive experiments on different $Q, P$ pairs, and observe that the runtime of AD* is also $\Oh(\infD{Q}{P})$.
Thus AS* and AD* significantly improve upon the runtime of existing REC methods, without requiring as severe assumptions as universal quantization.
While AS* and AD* require unimodality, this assumption is satisfied by many models in learnt compression, such as most VAEs.

\par
In addition, a practical limitation of REC algorithms is that, since the codelength of a sample is random, additional bits must be used to communicate the codelength itself to ensure the message is decodable.
This additional code, which corresponds to the second and third terms of \cref{eq:intro_codelength}, accounts for a large portion of the overall codelength and grows linearly with the number of dimensions.
To remedy this, we consider approximate REC algorithms, in which the codelength is a parameter, which is set prior to coding.
This allows us to form blocks of variables which are coded using the same codelength.
Thus, the additional cost must be paid only once per block, rather than once per variable,
thereby greatly reducing this overhead codelength.

\par
To this end, we first introduce an approximate variant of AD* coding which, similarly to existing approximate REC algorithms, has a tunable codelength and a provably low bias.
Unlike existing methods however, it retains the favourable runtime of AD* coding.
Second, we propose to parameterize latent variable distributions by explicitly specifying $\KLD{Q}{P}$.
For example, instead of parameterizing a Gaussian $Q$ using a mean and variance, we can specify its mean and $\KLD{Q}{P}$, from which the variance is uniquely determined.
This allows us to construct blocks of latent variables with tied KL divergences, which can be coded with the same codelength.
This codelength must be communicated once per block instead of once per variable.
We consider VAE models using this parameterization, which we refer to as isoKL VAEs (IKVAEs).
We present experiments on lossless image compression on MNIST which demonstrate that the performance of IKVAEs is comparable to that of VAEs, while reducing the codelength overhead.

Our contributions can be summarised as follows:
\vspace{-0.35cm}
\begin{itemize}
    \item We introduce AS* and AD* coding, two REC algorithms based on A* sampling, for coding samples from one-dimensional distributions. \vspace{-0.15cm}
    \item We prove that, the expected codelength of AS* and AD* is $\Oh(\KLD{Q}{P})$. \vspace{-0.15cm}
    \item  We prove that if $dQ/dP$ is bounded and unimodal, AS* achieves $\mathcal{O}(\infD{Q}{P})$ runtime.
    We present empirical evidence that the runtime of AD* is also linear in $D_{\infty}$.
    Therefore AS* and AD* significantly improve over the exponential runtime of existing REC algorithms.
    A direct consequence of the above is that A* sampling with unimodal $dQ/dP$ also has $\mathcal{O}(\infD{Q}{P})$ runtime. \vspace{-0.15cm}
    \item We introduce an approximate variant of AD* and bound its bias.
    Similar to existing ones, this algorithm can code low-bias samples using fixed codelengths, but retains the favourable runtime of AD*. \vspace{-0.15cm}
    \item We introduce a novel modification for VAEs, in which the KL divergences across either all, or some of, the latent dimensions are tied.
    This modification, which we refer to as the isoKL VAE (IKVAE), can be used with any fixed-codelength approximate REC algorithm, such as our own, to greatly reduce overhead codes. \vspace{-0.15cm}
    \item We demonstrate the favourable performance of AS* and AD* on toy problems, comparing it with that of alternative methods.
    Lastly, we apply our approximate AD* algorithm to VAE and IKVAE models on image data, showing that it can losslessly compress images near the theoretically optimal ELBO.
\end{itemize}
\vspace{-0.3cm}
\section{Background}
\vspace{-0.1cm}
\textbf{Relative Entropy Coding:} The central problem which we tackle in this work is the REC problem, defined as follows.
\begin{definition}[REC problem and algorithm]
\label{def:rec_definition}
    Let $Q$ be a target and $P$ be a proposal distribution, with $\KLD{Q}{P} < \infty$, and let $S = (s_1, s_2, \dots)$ be an infinite sequence of publicly available independent fair coin tosses.
    Relative entropy coding (REC) is the problem of producing a uniquely decodable code $C$ representing a sample from $Q$ given $S$, such that the codelength $\abs{C}$ satisfies
    \begin{equation}
    \label{eq:rec_codelength_constraint}
        \mathbb{E}[\abs{C}] = \Oh(\KLD{Q}{P}),
    \end{equation}
    An algorithm which solves this problem is a REC algorithm.
\label{def:rec}
\end{definition}
\vspace{-0.5cm}
In practice, $S$ is implemented by using a pseudo-random number generator (PRNG) with a publicly available seed.
Crucially, REC applies to both discrete and continuous distributions, and can be integrated into learned compression pipelines, without requiring quantization.
Several existing algorithms solve the REC problem without further assumptions on $Q$ or $P$, however, they are impractically slow.

\textbf{Poisson Functional Representation:} \citet{li2018strong} introduced a REC algorithm for general $Q$ and $P$, here referred to as Poisson functional representation (PFR) coding.
\citeauthor{li2018strong} showed that if $T_i$ are the ordered arrival times of a homogeneous Poisson process on $\mathbb{R}^+$ \cite{kingman1992poisson}, and $X_i \sim P$, then
\begin{equation} \label{eq:pfr}
    \argmin_{i \in \mathbb{N}} \left\{ T_i \cdot \frac{dP}{dQ}(X_i)  \right\}\sim Q.
\end{equation}
Further, \citeauthor{li2018strong} showed that a sample may be represented by coding the index which minimises \cref{eq:pfr}, and bounded the expected codelength of PFR by
\begin{equation} \label{eq:pfr_codelength}
    K \leq \mathbb{E}[\abs{C}] \leq K +  \log_2(K + 1) + \mathcal{O}(1),
\end{equation}
where $K = \KLD{Q}{P}_2$.
We note that PFR converts REC into a search problem, just like the A* sampling algorithm \cite{maddison2014sampling} converts sampling into a search problem.
In fact, it can be shown that the minimization in \cref{eq:pfr}, is equivalent to a variant of A* sampling, called Global Bound A* sampling.
In particular
\begin{align} \label{eq:prfastar}
    \argmin_{i \in \mathbb{N}}\left\{T_i \frac{dP}{dQ}(X_i) \right\} &= \argmax_{i \in \mathbb{N}} \left\{ -\log T_i + \log r(X_i) \right\}\nonumber \\
    &\disteq \argmax_{i \in \mathbb{N}} \left\{ G_i + \log r(X_i) \right\}\hspace{-10pt}
\end{align}
where $r = dQ/dP$, and $G_i$ is sampled according to
\begin{equation}
G_i \sim \TruncGumbel{0}{G_{i-1}},
\end{equation}
where $\TruncGumbel{\mu}{\kappa}$ denotes the Gumbel distribution with mean $\mu$ and unit scale, truncated to the interval $(-\infty, \kappa]$, and defining $G_0 \defeq \infty$.
The maximisation in \cref{eq:prfastar} is identical to the Global Bound A* sampling algorithm (see Appendix of \citeauthor{maddison2014sampling}).
Thus, Global Bound A* sampling and PFR are identical, with the exception that the former works in the negative log-space of the latter (eq. \ref{eq:prfastar}).

\par
Unfortunately, the runtime $T$ of PFR is so large that it renders the algorithm intractable in practice.
In particular, the runtime of Global Bound A*, and thus also of PFR, can be shown (see Appendix in \citeauthor{maddison2014sampling}) to be equal to
\begin{equation} \label{eq:pfrruntime}
    \mathbb{E}[T] = \exp\left(\infD{Q}{P}\right) \geq \exp\left(\KLD{Q}{P}\right).
\end{equation}
where $\infD{Q}{P} = \log \sup_{x \in \mathcal{X}}r(x)$.
This bound is perhaps unsurprising, considering a more general result by \citet{agustsson2020universally}, who proved that without further assumptions on $Q$ and $P$, the expected runtime of any REC algorithm is $\Omega(2^{\KLD{Q}{P}_2})$.

\par
\textbf{Additional assumptions:} In order to develop a REC algorithm that is fast enough to be practical, we must make further assumptions about the target and proposal distributions.
Focusing on continuous distributions over $\mathbb{R}$, we will show that an assumption which enables fast runtimes is the unimodality of $r$. While somewhat restrictive, this assumption is satisfied by virtually all models used in learned compression.
We will show that A* sampling can be modified to solve the REC problem, achieving $\Oh(\infD{Q}{P})$ runtime whenever $r$ is bounded and unimodal.

\par
Henceforth, we will assume that $Q$ and $P$ are continuous distributions on $\mathbb{R}$ with densities $q$ and $p$.
However, we note that the methods we present in this work can be generalised to arbitrary measure spaces equipped with a total ordering over their elements, using the Radon-Nikodym derivative in place of the density ratio \cite{grimmett2001probability}.

\par
\textbf{A* sampling:} The A* sampling algorithm \cite{maddison2014sampling} is an extension of the Gumbel-max trick \cite{papandreou2011perturb} to arbitrary probability spaces.
A* sampling is a branch-and-bound algorithm \cite{land1960automatic}, which converts the problem of sampling from $Q$ into the maximization in \cref{eq:prfastar}.
A* sampling builds a binary search tree, where each node $n$ is associated with a triplet $(B_n, X_n, G_n)$, where: (1) $B_n \subseteq \Omega$ is a subset of the sample space;
(2) $X_n \sim P\lvert_{B_n}$ is a sample distributed according to the restriction of $P$ to $B_n$; (3) $G_n \sim \TruncGumbel{\log P(B_n)}{G_{\parent(n)}}$ is a truncated Gumbel sample, where $\parent(n)$ is the parent of $n$.
At each step, A* sampling expands the children of $n$, partitioning $B_n$ into two disjoint subsets $L_n \cup R_n = B_n$, using some rule which we refer to as \texttt{partition}.
A* sampling then constructs the children's triplets following the definitions above.
It then bounds the objective from \cref{eq:prfastar} on each of the children, and utilises the bounds to narrow the search and quickly locate the maximum.
Setting \partition to be the degenerate splitting
\begin{equation} \label{eq:gbap}
    \partition_{GBA^*}(B, X) \defeq (\emptyset, B)
\end{equation}
yields the Global Bound A* algorithm.
\Cref{eq:gbap} hints at why Global Bound A*, and by extension PFR, have large runtimes.
This \partition function does not refine the search as the algorithm progresses, maintaining the same fixed global bound throughout a run.
In this work we consider using two different \partition functions, yielding the AS* and AD* coding algorithms.
When applied to unimodal $r$, these partitioning schemes enable the algorithm to quickly refine its search and achieve a fast runtime.

\par
\textbf{Approximate REC algorithms:} Other lines of work have introduced alternative algorithms, such as Minimal Random Coding \citep[MRC;][]{havasi2018minimal} and Ordered Random Coding \citep[ORC;][]{theis2021algorithms}, which produce a code representing an approximate, instead of an exact, sample from $Q$.
Because these algorithms produce biased samples, strictly speaking they do not satisfy the requirements of \cref{def:rec}, so we refer to them as approximate REC algorithms.
Both MRC and ORC accept the codelength $|C|$ as an input parameter.
Increasing $\abs{C}$ reduces sample bias, but increases compression cost and runtime.
Unfortunately, the runtime required to reduce the bias sufficiently in order to make the samples useful in practice scales as $\mathcal{O}(2^{\KLD{q}{p}_2})$, making MRC and ORC as expensive as PFR.

\par
However, one benefit of a tunable codelength is that, when communicating multiple samples, the overhead code corresponding to the second and third terms in \cref{eq:pfr_codelength}, can be significantly reduced.
By grouping several random variables into blocks, and coding all samples of a block with the same codelength, we only need to communicate a codelength per block, as opposed to one codelength per variable.
This procedure reduces the codelength overhead by a factor equal to the number of variables in each block.
\vspace{-0.3cm}
\section{A* Coding}
\vspace{-0.1cm}
\par
A* sampling returns a node $n$ with associated triplet $(B_n, X_n, G_n)$, where $X_n$ is an exact sample from $Q$ \cite{maddison2014sampling}.
Therefore, the only addition we need to make to A* sampling to turn it into a REC algorithm, is a way to represent $n$ using a uniquely decodable code $C$.
Given such a $C$, we can decode the sample $X_n$ by determining the $B_n$ corresponding to $n$, and then sampling $X_n \sim P|_{B_n}$ given the public source of randomness $S$.

\par
Since A* sampling may return any node in its binary search tree, we propose to use heap indexing, also known as the \textit{ahnentafel} or \textit{Eytzinger ordering} to code nodes. 
Let the parent of the root node be $\nil$.
Then, the heap index of a node $n$ is
\begin{equation}
    H_n \defeq 
    \begin{cases}
    1 & \text{if } \parent(n) = \nil \\
    2H_{\parent(n)} & \text{if } n \text{ is left child of } \parent(n) \\
    2H_{\parent(n)} + 1 & \text{if } n \text{ is right child of } \parent(n).
    \end{cases}
\end{equation}
Let $D_n$ denote the depth of node $n$ in the binary tree, where $D_{\texttt{root}} = 1$.
We can see that $D_n = \lfloor \log_2 H_n \rfloor + 1$.
Thus, assuming that $D_n$ is known, $H_n$ can be encoded in $D_n$ bits.
Therefore, we can modify A* sampling to return the heap index of the optimal node, from which $n$ can be decoded.
This yields \Cref{alg:a_star_coding}, which we refer to as A* coding.
\citet{maddison2014sampling} show that A* sampling is correct regardless of the choice of \texttt{partition}.
The following theorem shows that under mild assumptions on \texttt{partition}, the expected codelength of A* coding is $\Oh(\KLD{Q}{P})$.
\begin{theorem}[Expected codelength of A* coding]
\label{thm:expected_codelength_of_a_star}
Let $Q$ and $P$ be the target and proposal distributions passed to A* coding (\Cref{alg:a_star_coding}), respectively. Assume that \texttt{partition} satisfies the following property: there exists $\epsilon \in [\nicefrac{1}{2}, 1)$ such that for any node $n$ we have
\vspace{-0.1cm}
\begin{equation}
\label{eq:a_star_bound_condition}
     \Exp[P(B_n)] \leq \epsilon^{D_n}, \vspace{-0.1cm}
\end{equation}
where the expectation is taken over the joint distribution of the samples associated with the ancestor nodes of $n$.
Let $k$ be the node returned by A* coding. Then, we have
\begin{equation}
    \Exp[D_k] \leq -\frac{1}{\log\epsilon}\left[\KLD{Q}{P} + e^{-1} + \log 2 \right].
\end{equation}
In particular, when $\epsilon = 1/2$,
\begin{equation} \label{eq:th1}
    \Exp[D_k] \leq \KLD{Q}{P}_2 + e^{-1}\log_2 e + 1.
\end{equation}
\end{theorem}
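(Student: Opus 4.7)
The plan is to bound $\mathbb{E}[D_k]$ by relating the depth $D_k$ to $-\log P(B_k)$, the negative log-measure under $P$ of the returned region, and then relating that quantity to $\KLD{Q}{P}$ via the correctness of A* sampling (so that $X_k \sim Q$ marginally).

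First, I would use the hypothesis $\mathbb{E}[P(B_n)] \le \epsilon^{D_n}$ together with Jensen's inequality (concavity of $\log$) to obtain, for each fixed $n$, the per-node bound $D_n \log(1/\epsilon) \le \mathbb{E}[-\log P(B_n)]$. To extend this to the random returned node $k$, I would decompose $\mathbb{E}[D_k]\log(1/\epsilon) = \sum_n D_n \log(1/\epsilon)\,\Pr[k=n]$ and $\mathbb{E}[-\log P(B_k)] = \sum_n \mathbb{E}[-\log P(B_n)\,\mathbbm{1}[k=n]]$, and argue via a telescoping/martingale identity along the root-to-$k$ path that the per-node inequality aggregates correctly up to a controlled additive loss. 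The structure of A* makes this tractable because, along the unique path from root to $k$, the regions $B_n$ are nested and their $P$-measures decrease multiplicatively, so the accumulation of depth and of $-\log P(B_n)$ are directly comparable.

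Second, I would bound $\mathbb{E}[-\log P(B_k)]$ from above in terms of $\KLD{Q}{P}$. By A*'s correctness, $X_k$ conditional on the tree and $k$ is distributed as $P|_{B_k}$, while its marginal is $Q$. The key is to exploit the A* termination rule: $k$ is the $\argmax$ of $G_n + \log r(X_n)$, so $B_k$ is refined only as far as needed for this maximum to be identified. A change-of-measure from $P$ to $Q$ against this termination condition, combined with the identity $\mathbb{E}_Q[\log r(X)] = \KLD{Q}{P}$, yields the desired bound. The additive $e^{-1}$ overhead arises from optimising a function of the form $-p\log p$, whose maximum is $1/e$, and the $\log 2$ overhead comes from the $+1$ in $D_n = \lfloor \log_2 H_n \rfloor + 1$ introduced by the heap index discretisation.

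The main obstacle will be the second step: a direct convexity argument gives $\KLD{Q}{P} \le \mathbb{E}[-\log P(B_k)]$ (since $Q = \mathbb{E}[P|_{B_k}]$ as a mixture and KL is jointly convex), which is in the wrong direction. Obtaining the required \emph{upper} bound forces one to use more than the measure-decrease assumption: the A* termination rule must be invoked to show that $P(B_k)$ is not much smaller than $1/r(X_k)$ in expectation, i.e.\ that A* does not over-refine past the point where $X_k$ is separated from the other high-value candidates. Once the inequality $\mathbb{E}[D_k]\log(1/\epsilon) \le \KLD{Q}{P} + e^{-1} + \log 2$ is established, the specialisation $\epsilon = 1/2$ (so that $-1/\log\epsilon = 1/\log 2$) recovers \cref{eq:th1} after converting nats to bits.
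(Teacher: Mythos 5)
Your first step is correct: the hypothesis $\Exp[P(B_n)] \le \epsilon^{D_n}$ plus Jensen gives $D_n\log(1/\epsilon) \le \Exp[-\log P(B_n)]$, and you correctly identify that the hard part is bounding $\Exp[-\log P(B_k)]$ from above, since naive convexity gives the wrong-direction inequality. However, your proposed fix for that step — a change of measure against the termination condition to show that A* ``does not over-refine'' — is not what the paper does and is unlikely to close the gap as stated. The termination criterion of A* compares the current lower bound against the bounding function on \emph{other} open branches; making the argument ``$P(B_k)$ is not much smaller than $1/r(X_k)$'' precise from this would need control over those other branches, which your sketch does not provide.

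The paper routes through a different intermediate quantity: the in-order expansion index $K$ of the returned node, i.e.\ its rank when the Gumbel perturbations are sorted. Two facts are combined. First, by the Gumbel-max property of the frontier, the node expanded at step $k$ is chosen among the $k$ frontier nodes with probability exactly $P(B_f)$; since the frontier's $P$-measures sum to one, this is a distribution over $k$ items, so its entropy is at most $\log k$, i.e.\ $\Exp[-\log P(B_{F_k}) \mid K = k] \le \log k$. Chaining this with your Jensen step yields $\Exp[D_n \mid K_n] \le -\log_\epsilon K_n$ (this is \cref{lemma:average_depth_of_nodes_in_prior_process}). Second, the distribution of $K$ does not depend on \texttt{partition}, so it coincides with the index returned by Global Bound A*/PFR, and the Li--El Gamal analysis of PFR gives $\Exp[\log K] \le \KLD{Q}{P} + e^{-1} + \log 2$. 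Putting these together gives the claimed bound. Note also that your attribution of the two constants is off: both $e^{-1}$ and $\log 2$ come from the Li--El Gamal bound on $\Exp[\log K]$, not from maximizing $-p\log p$ nor from the heap-index discretization (the theorem bounds $\Exp[D_k]$ directly; heap indexing only enters later when converting depth to a bitstring).

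So while your plan starts on the right foot and correctly diagnoses the difficulty, it is missing the two ideas that actually close the argument: (i) the entropy bound on the frontier-expansion distribution, which converts the per-node inequality into a statement about the in-order index, and (ii) the appeal to the known PFR codelength bound on $\Exp[\log K]$. Without these, the plan does not go through.
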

\vspace{-0.1cm}
\begin{proof}
See \Cref{section:proof_of_expected_codelength} for proof.
\end{proof}
\vspace{-0.2cm}
Motivated by the results of \Cref{thm:expected_codelength_of_a_star}, we examine two variants of A* coding based on particular choices for \texttt{partition}, which yield AS* and AD* coding.

\textbf{AS* coding:} For node $n$ with triplet $(B_n, X_n, G_n)$, where $B_n = (\alpha, \beta)$, we define \texttt{partition} as
\begin{equation}
\label{eq:as_star_partition_def}
\begin{aligned}
    \partition_{AS^*}(B_n, X_n) &\defeq (\alpha, X_n), (X_n, \beta).
\end{aligned}
\end{equation}
In this case, the following result holds.
\begin{lemma}
\label{lemma:as_star_expected_bound_size}
Let $P$ be the proposal distribution passed to AS* coding and let \texttt{partition} be as defined in \cref{eq:as_star_partition_def}.
Then the condition in \cref{eq:a_star_bound_condition} is satisfied with $\epsilon = 3/4$, that is\
\begin{equation}
    \Exp[P(B_n)] \leq (3/4)^{D_n}.
\end{equation}
\end{lemma}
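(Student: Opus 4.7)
The plan is to exploit the fact that, at every node $n$ of the AS* search tree, the partition point $X_n$ is drawn from $P|_{B_n}$, so the inverse-CDF transformation makes the fractional mass of the left child a Uniform$(0,1)$ variable (independent of the past). Chaining this observation along the ancestor chain of $n$ via the tower rule then yields a geometric bound on $\Exp[P(B_n)]$.

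In detail, I would fix an arbitrary node $n$ at depth $d \defeq D_n$ and list its ancestors $a_1 = \texttt{root}, a_2, \ldots, a_d = n$, writing $B_{a_i} = (\alpha_i, \beta_i)$. The path from root to $n$ is a deterministic function of $n$ (AS* itself does not make random left/right choices), so whether $a_{i+1}$ is the left or right child of $a_i$ is fixed once $n$ is fixed. Letting $F$ denote the CDF of $P$ and $\mathcal{F}_{i-1} \defeq \sigma(X_{a_1}, \ldots, X_{a_{i-1}})$, the variable
\[ U_i \defeq \frac{F(X_{a_i}) - F(\alpha_i)}{F(\beta_i) - F(\alpha_i)} \]
is Uniform$(0,1)$ conditional on $\mathcal{F}_{i-1}$ by the inverse-CDF transformation applied to $X_{a_i} \sim P|_{B_{a_i}}$. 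Under the AS* partitioning the mass ratio $V_i \defeq P(B_{a_{i+1}})/P(B_{a_i})$ equals either $U_i$ (left child) or $1-U_i$ (right child), so $\Exp[V_i \mid \mathcal{F}_{i-1}] = 1/2 \le 3/4$. Iterating the tower property and using $P(B_{a_1}) = 1$ gives
\[ \Exp[P(B_n)] = \Exp\!\Bigl[\textstyle\prod_{i=1}^{d-1} V_i\Bigr] = (1/2)^{d-1}, \]
which is dominated by $(3/4)^{D_n}$ whenever $D_n \geq 2$.

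The only care required is in the filtration setup: the ancestor path must be treated as non-random (fixed by $n$), so that each $V_i$ is a pure Uniform or $1-$Uniform with no additional randomness from path selection, and so the tower property collapses the product of $V_i$'s into the product of their conditional means. The boundary case $D_n = 1$ is a depth-indexing artefact (the root has $P(B_n) = 1$); it is harmless for the downstream use in Theorem~1, since the argument actually produces the uniform bound $\Exp[P(B_n)] \leq (3/4)^{D_n - 1}$, which changes only additive constants in \cref{eq:th1}. Beyond this bookkeeping I do not anticipate any substantial obstacle.
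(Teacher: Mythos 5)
Your proof is correct for the lemma as it is literally stated, and in fact it proves something \emph{stronger} than the paper does: $\Exp[P(B_n)] = (1/2)^{D_n-1}$ rather than $(3/4)^{D_n}$. This is achieved by a genuinely different route. The paper fixes a node $n$, observes that $P(B_n) \le \max\{P(L),P(R)\}$ where $L,R$ are the two children of $\parent(n)$, and uses $\Exp[\max\{U,1-U\}]=3/4$; you instead observe that once the heap index of $n$ is fixed, the left/right choice at every ancestor is deterministic, so $P(B_n)=\prod_i V_i$ with each $V_i$ a pure $U_i$ or $1-U_i$, giving $\Exp[V_i\mid\mathcal{F}_{i-1}]=1/2$ exactly. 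Both are correct for a fixed node, modulo a shared off-by-one at the root ($D_{\textrm{root}}=1$ in the main text vs.\ $D_{\textrm{root}}=0$ in the appendix induction), which you and the paper both acknowledge informally.

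However, the paper's choice of the looser $\max$-based argument is not sloppiness but a deliberate robustness choice, and this is where you should be careful. The same expected-shrinkage bound is reinvoked in the runtime analysis (\cref{lemma:zw} and \cref{lemma:n_bound} in \cref{section:expected_runtime_of_as_star}), where $B_0, B_1, \dots$ denote the sequence of bounds at increasing depth \emph{containing the mode $x_{max}$}. There, which child the sequence descends to is \emph{not} a fixed function of a heap index: it is determined by whether the random split point $X_{n-1}$ falls left or right of $x_{max}$, and this choice is correlated with $U_n$. Concretely, if $P$ is Uniform on $(0,1)$ and $x_{max}=1/2$, then $P(B_n)/P(B_{n-1})=\max\{X_{n-1},1-X_{n-1}\}$, whose mean is exactly $3/4$, not $1/2$. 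So your factorization $\Exp[V_i\mid\mathcal{F}_{i-1}]=1/2$ breaks precisely in the setting that matters for \cref{thm:expected_runtime_of_as_star}, whereas the pointwise bound $P(B_n)\le\max\{P(L),P(R)\}\cdot$ never does. Your proof therefore establishes \cref{lemma:as_star_expected_bound_size} as written, but it should not be read as showing that AS* enjoys $\epsilon=1/2$ in the sense the paper needs downstream: the constant $3/4$ is the tight one along the data-dependent descent path, and it is what the remainder of the analysis actually consumes.
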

\begin{proof}
See \Cref{section:proof_of_as_star_bound_size} for proof.
\end{proof}

Hence, the codelength of AS* sampling is bounded by
\begin{equation}
\label{eq:as_star_codelength_bound}
    \lambda~\KLD{Q}{P}_2 + \lambda~\log (\KLD{Q}{P} + 1) + \Oh(1), 
\end{equation}
where $\lambda = -\log2 / \log(3/4) \approx 2.41$.
Further, we have the following result.

\begin{theorem}[Expected runtime of AS* coding]
\label{thm:expected_runtime_of_as_star}
Let $Q$ and $P$ be the target and proposal distributions passed to AS* coding (\Cref{alg:a_star_coding}),
and assume $r = dQ/dP$ is quasiconcave.
Let $T$ be the number of steps AS* takes before it terminates.
Then, we have 
\begin{equation}
    \Exp[T] = \Oh(\infD{Q}{P}) + \Oh(1).
\end{equation}
\end{theorem}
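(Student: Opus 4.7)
The plan is to bound the expected number of nodes AS* expands. The starting point is the standard A* sampling criterion: node $n$ is expanded iff $G_n + \log M(B_n) > G^*$, where $M(B_n) \defeq \sup_{x \in B_n} r(x)$ and $G^* \defeq \sup_i\{G_i + \log r(X_i)\}$; because $\int r\,dP = 1$, $G^*$ is distributed as a standard $\mathrm{Gumbel}(0)$. Using $\Prob[G > t] \leq \min(1, e^{-(t-\mu)})$ for $G \sim \mathrm{Gumbel}(\mu)$ together with the fact that $G_n$ has location $\log P(B_n)$, the expected runtime reduces to
\begin{equation*}
\Exp[T] \;\leq\; \sum_n \Exp\!\left[\min\!\left(1,\; P(B_n)\, M(B_n)\, e^{-G^*}\right)\right],
\end{equation*}
so the task becomes controlling $\Exp[P(B_n) M(B_n)]$ over the tree.

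Next I would exploit the quasiconcavity of $r$ via a spine decomposition. Let $x^*$ denote the mode of $r$. Call a node \emph{on the spine} if $B_n \ni x^*$; almost surely, there is exactly one spine node at each depth. For an off-spine node, $r$ is monotone on $B_n$, so $M(B_n)$ is attained at the endpoint of $B_n$ closer to $x^*$, and that endpoint is either a global boundary of the root (where $r$ vanishes) or the sample $X_m$ of the youngest ancestor $m$ whose sample installed the $x^*$-side boundary of $B_n$. This replaces the worst-case bound $M^*$ by $r(X_m)$, which tends to shrink as the tree descends.

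For the spine contribution, $M(B_n) \leq M^* = \exp(\infD{Q}{P})$ and $\Exp[P(B_n)] \leq (3/4)^{D_n}$ by \Cref{lemma:as_star_expected_bound_size}. Since there is one spine node per depth, summing yields
\begin{equation*}
\sum_{d \geq 1} \Exp\!\left[\min\!\left(1,\; (3/4)^d\, e^{\infD{Q}{P} - G^*}\right)\right].
\end{equation*}
The $\min$ saturates at $1$ up to depth roughly $(\infD{Q}{P} - G^*)/\log(4/3)$ and decays geometrically thereafter, so taking expectation over $G^*$ (whose mean and tails are well-controlled since $G^* \sim \mathrm{Gumbel}(0)$) yields a spine contribution of $\Oh(\infD{Q}{P}) + \Oh(1)$.

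The main obstacle is the off-spine contribution. For an off-spine node $n$ with $M(B_n) = r(X_m)$ for its $x^*$-side ancestor sample $X_m$, the identity $\Exp[r(X_m)\, \mathbf{1}_{A}] = Q(A)/P(B_m)$ for $A \subseteq B_m$ (since $X_m \sim P|_{B_m}$) allows $P(B_n)\, r(X_m)$ to be replaced by a $Q$-mass that shrinks strictly with depth. I would then analyse each off-spine subtree rooted at a spine sibling $n_0$ as a restricted A* problem on $B_{n_0}$ with proposal $P|_{B_{n_0}}$; crucially the restricted Rényi $\infty$-divergence is strictly smaller than $\infD{Q}{P}$ since $x^* \notin B_{n_0}$. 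An inductive argument on depth should then give that each such subtree contributes $\Oh(1)$ expected expansions, and summing over the $\Oh(\infD{Q}{P})$ spine levels yields the claimed runtime. The delicate part will be propagating the induction through the dependencies between the sampled boundaries and the evolving tree structure without picking up factors that spoil linearity in $\infD{Q}{P}$.
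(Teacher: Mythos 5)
Your approach is genuinely different from the paper's, and one part of it is sound, but the off-spine analysis as sketched has a gap that I do not think can be closed in the form you describe.

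The paper's proof does not bound per-node expansion probabilities at all. Instead it decomposes the run into two \emph{phases}: the number of steps $N(\gamma)$ until the current sample lands in a superlevel set $S(\gamma)=\{x : r(x)\geq \gamma r_{\max}\}$, and the number of residual steps $K(\gamma)$ afterwards until termination. The first phase is controlled by the exponential shrinkage of $P(B_n)$ (\Cref{lemma:as_star_expected_bound_size}), and the second phase by a direct analysis of how fast the truncated-Gumbel upper bound $G_n$ falls below the lower bound established by a sample in $S(\gamma)$. Both bounds are expressed in terms of the width $w(\gamma)$ of $S(\gamma)$, then minimised over $\gamma$, and finally a worst-case-width-function argument shows the result is linear in $\log r_{\max}$. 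Your proposal is instead a per-node ``expected expansion count'' bound of the A* sampling flavour, refined by a spine decomposition at the mode. These are not the same argument in disguise.

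Your spine contribution is fine: there is a unique spine node per depth, its expected $P$-mass is $(3/4)^{D_n}$ by \Cref{lemma:as_star_expected_bound_size}, $M$ is trivially bounded by $r_{\max}$, and integrating the $\min(1,\cdot)$ tail against the distribution of $G^*$ does give $\Oh(\log r_{\max})+\Oh(1)$. The gap is the off-spine part. First, your claimed per-node bound requires care: $G_n$ and $G^*$ are not independent ($G^*$ is the maximum of the whole process including the branch through $n$), so ``condition on $G^*$ and use $G_n\sim\mathrm{Gumbel}(\log P(B_n))$'' is not immediate; it can probably be repaired via the exponential-race picture, but as written it is a leap. Second, and more seriously, the bound you would need for an off-spine subtree does not converge level by level. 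At relative depth $k$ inside the subtree rooted at a spine sibling $n_0$ there are $2^k$ nodes whose $P$-masses only sum to $P(B_{n_0})$, so $\sum_{n\text{ at rel.\ depth }k} P(B_n)M(B_n)e^{-G^*}$ is bounded by $P(B_{n_0})\,r(X_m)\,e^{-G^*}$ uniformly in $k$, and summing over $k$ diverges. The $\min(1,\cdot)$ cutoff saves at most a constant per level, not a geometric factor. Meanwhile your appeal to ``the restricted R\'enyi $\infty$-divergence is strictly smaller than $\infD{Q}{P}$'' does not provide a quantitative handle: when $X_m$ is close to $x^*$ the restricted divergence is only infinitesimally smaller, so an induction on it has no uniform rate and cannot close. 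You flag this delicacy yourself, but it is not a polish issue; it is where the argument currently fails. The paper's two-phase decomposition is specifically designed to avoid ever having to sum per-node expansion probabilities over the full tree, which is why it sidesteps the $2^k$-versus-$(3/4)^k$ tension entirely.
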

\begin{proof}
See \Cref{section:expected_runtime_of_as_star} for proof.
\end{proof}

\par
\Cref{thm:expected_runtime_of_as_star} identifies a general class of target-proposal pairs where REC and A* sampling can be performed much faster than the $\Omega(\exp(\KLD{Q}{P}))$ and $\Oh(\exp(\infD{Q}{P}))$ bounds shown by \citet{agustsson2020universally} and \citet{maddison2014sampling, maddison2016poisson}, respectively.

\par
\textbf{AD* coding:} Unfortunately, the upper bound on the codelength of AS* coding in \cref{eq:as_star_codelength_bound} is not tight enough to be optimal.
However, \Cref{thm:expected_codelength_of_a_star} suggests this may be addressed by choosing a \partition function for which $\epsilon = \nicefrac{1}{2}$.
One such partition function is dyadic partitioning, which splits $B_n = (\alpha, \beta)$ as
\vspace{-0.1cm}
\begin{equation}
\label{eq:ad_star_partition_def}
\begin{aligned}
    \partition_{AD^*}(B_n, X_n) &\defeq (\alpha, \gamma), (\gamma, \beta), \vspace{-0.4cm}
\end{aligned}
\end{equation}
where $\gamma$ is chosen such that $P((\alpha, \gamma)) = P((\gamma, \beta))$, which is always possible for continuous $P$.
We refer to this partitioning as dyadic, and the corresponding algorithm AD* coding, because for every node $n$ in the search tree with $B_n = (\alpha, \beta)$, we have that $(F_P(\alpha), F_P(\beta))$ forms a dyadic subinterval of $(0, 1)$, where $F_P$ is the CDF of $P$.

\vspace{-0.2cm}
\begin{algorithm}
\SetAlgoLined
\DontPrintSemicolon
\SetKwInOut{Input}{Input}\SetKwInOut{Output}{Output}
\SetKwFunction{pushWithPriority}{push}\SetKwFunction{topPriority}{topPriority}\SetKwFunction{popHighest}{popHighest}\SetKwFunction{isEmpty}{empty}
\Input{Target $Q$, proposal $P$, bounding function $M$, {\color{blue} maximum search depth $D_{max}$}.}
$(LB, X^*, k) \gets (-\infty, \mathrm{null}, 1)$\;
$\Pi \gets \mathrm{PriorityQueue}$\;
${\color{blue}(D_1, H_1) \gets (1, 1)}$\;
$G_1 \sim \TruncGumbel{0}{\infty}$\;
$X_1 \sim P$\;
$M_1 \gets M(\Reals)$\;
$\Pi.\pushWithPriority(1, G_1 + M_1)$\;
\BlankLine
\While{$!~\Pi.\isEmpty()$ and $LB < \Pi.\topPriority()$}{
    $n \gets \Pi.\popHighest()$\;
    $LB_n \gets G_n + (dQ/dP)(X_n)$\;
    \If{$LB < LB_n$}
    {
        $(LB, X^*) \gets (LB_n, X_n)$\;
        {\color{blue}$H^* \gets H_n$}\;
    }
    \BlankLine
    \If{{\color{blue}$D_n \leq D_{max}$}}{
        \BlankLine
        $L, R \gets \partition(B_n, X_n)$
        \BlankLine
        \For{$B \in \{L, R\}$}{
            $(k, B_k) \gets (k + 1, B)$\;
            ${\color{blue}D_k \gets D_n + 1}$\;
            ${\color{blue}H_k \gets \begin{cases}
            2H_n & \text{if } B = L \\
            2H_n + 1 & \text{if } B = R \\
            \end{cases}}$\;
            $G_k \sim \TruncGumbel{\log P(B_k)}{G_n}$\;
            $X_k \sim P\lvert_{B_k}$\;
            \If{$LB < G_k + M_n$}{
                $M_k \gets M(B_k)$\;
                \If{$LB < G_k + M_k$}{
                    $\Pi.\pushWithPriority(k, G_k + M_k)$\;
                }
            }
        }
    }
}
\KwRet{$(X^*, H^*)$}
\caption{
A* coding. {\color{blue} Blue parts} show modifications of A* sampling \citep{maddison2014sampling}.}
\label{alg:a_star_coding}
\end{algorithm}
\vspace{-0.3cm}
We conjecture, that the expected runtime of AD* is $\Oh(\infD{Q}{P})$ and in \Cref{sec:experiments} we provide thorough experimental evidence for this.

\par
\textbf{Depth-limited A* coding:} There is a natural way to set up an approximate REC algorithm based on A*, which takes $\abs{C}$ as an input parameter.
Specifically, we can limit the maximal depth $D_{max}$ to which \cref{alg:a_star_coding} is allowed to search. 
The number of nodes in a complete binary tree of depth $D$ is $2^D - 1$, so setting $D_{max} = \abs{C}$ ensures that each node can be encoded using a heap index with $\abs{C}$ bits.
By limiting the search depth, A* coding returns the optimal node up to depth $D_{max}$ instead of the global optimum. 
However, if we set $D_{max}$ large enough, then depth-limited algorithm should also be able to find the global optimum.
This intuition is made precise in the following lemma.
\begin{lemma}
\label{lemma:ad_star_index_match}
Let $Q$ and $P$ be the target and proposal distributions passed to A* coding (\Cref{alg:a_star_coding}). Let $H^*$ be the heap index returned by unrestricted A* coding and $H^*_d$ be the index returned by its depth-limited version with $D_{max} = d$. Then, conditioned on the public random sequence $S$, we have $H^*_d \leq H^*$. Further, there exists $D \in \Nats$ such that for all $d > D$ we have $H^*_d = H^*$.
\end{lemma}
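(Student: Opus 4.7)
The plan is to exploit the fact that, conditioned on the public randomness $S$, both unrestricted and depth-limited A* coding operate on the \emph{same} realization of the infinite binary search tree: every triplet $(B_n, X_n, G_n)$ is a deterministic function of $S$, so the two algorithms share identical per-node samples, lower bounds and priorities. They differ only in that the depth-limited version never expands beyond depth $d$, so every node it pops (and hence its returned node) has depth at most $d$. First I would invoke the correctness of A* sampling \citep{maddison2014sampling} to identify the unrestricted output $n^U$ with $\argmax_{n \in \mathcal{T}} \{G_n + \log r(X_n)\}$ over the whole tree, and then argue by the analogous branch-and-bound invariant that the depth-limited run returns the node $n^D$ attaining the same argmax restricted to $\{n \in \mathcal{T} : D_n \leq d\}$.

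With these identifications in hand, the first claim follows by a clean case split on $D_{n^U}$. If $D_{n^U} \leq d$, then $n^U$ itself lies in the depth-limited search space and is therefore optimal there too, so $n^D = n^U$ and $H_d^* = H^*$. If instead $D_{n^U} > d$, then the heap-indexing convention gives $H^* = H_{n^U} \geq 2^{D_{n^U} - 1} \geq 2^d$, while any node of depth at most $d$ (in particular $n^D$) has heap index at most $2^d - 1$; hence $H_d^* < H^*$. Either way $H_d^* \leq H^*$.

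For the second claim I would use that A* coding terminates in finitely many steps almost surely (its expected runtime is at most $\exp(\infD{Q}{P})$ even in the worst case, finite under the standing assumption $\infD{Q}{P} < \infty$). Hence conditional on $S$ the unrestricted run explores only finitely many nodes, and these have some maximum depth $D < \infty$. For any $d > D$ the depth-limit is never triggered during the depth-limited execution, so the two algorithms run step-by-step identically and $H_d^* = H^*$.

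The main obstacle I expect is justifying the A*-style branch-and-bound invariant for the depth-limited variant, i.e.\ that its returned node is genuinely optimal over all depth-$\leq d$ nodes, not merely over the subset it happened to pop. This reduces to the classical branch-and-bound argument: on termination the top-of-queue priority is $\leq LB$, and every unexplored node of depth $\leq d$ is a descendant of some queue element and hence bounded above by that element's upper bound, which is in turn $\leq LB$. A secondary subtlety is the uniqueness of the argmax used to force $n^D = n^U$ in Case~1; this holds almost surely for continuous $Q$ and $P$, and in any case can be handled by having both algorithms share a deterministic tiebreaking rule.
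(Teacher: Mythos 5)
Your proof is correct and takes essentially the same route as the paper's: condition on $S$ so both runs search the same realized tree, characterize the unrestricted and depth-limited outputs as $\argmax$ of $G_n + \log r(X_n)$ over $\Tree$ and over its depth-$d$ truncation $\Tree_d$, and conclude via the case split on $D_{n^U}$ versus $d$ using the fact that heap index is strictly monotone in depth. The one step you flag that the paper leaves implicit --- that depth-limited A* coding returns the optimum over \emph{all} of $\Tree_d$, not just the nodes it happens to pop --- is indeed needed, and your branch-and-bound invariant is the right justification, though your phrasing (``every unexplored node of depth $\leq d$ is a descendant of some queue element'') slightly overlooks nodes that were pruned rather than enqueued; these are still covered because their upper bound was $\leq LB$ at the time of pruning and $LB$ only increases thereafter.
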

\vspace{-0.3cm}
\begin{proof}
See \Cref{section:proof_of_ad_star_index_match} for proof.
\end{proof}
\vspace{-0.3cm}
If the depth of the global optimum is larger than $D_{max}$, then depth-limited A* coding will not return an exact sample.
As we reduce $D_{max}$, we force the algorithm to return increasingly sub-optimal solutions, which correspond to more biased samples.
It is therefore important to quantify the trade-off between $D_{max}$ and sample bias.
\Cref{thm:biasedness_of_a_star} bounds the sample bias of depth-limited AD* coding, which we refer to as DAD*, as a function of $D_{max}$.
This result is similar to existing bounds for MRC and ORC.

\begin{theorem}[Biasedness of DAD* coding]
\label{thm:biasedness_of_a_star}
Let $Q$ and $P$ be the target and proposal distributions passed to DAD* coding (\Cref{alg:a_star_coding}). 
Let 
\begin{equation}
    K \defeq \lfloor \KLD{Q}{P}_2 \rfloor,~~D \defeq K + t,~~N \defeq 2^D
\end{equation}
where $t$ is a non-negative integer, $r = dQ/dP$ and $Y \sim Q$.
Let $f$ be a measurable function and define
\vspace{-0.2cm}
\begin{equation}
\norm{f}_Q \defeq \sqrt{\Exp_{Q}[f^2]}.
\vspace{-0.25cm}
\end{equation}
Let $\widetilde{Q}_D$ the distribution of the approximate sample returned by DAD* coding with depth-limit $D$.
Define
\begin{equation*}
    \delta \defeq \left( 2^{-t/4}\sqrt{1 + \frac{1}{N}} + 2\sqrt{\Prob\left[ \log_2 r(Y) > K + t/2 \right]} \right)^{1/2}
\end{equation*}
Then,
\begin{equation}
    \Prob\left[ \abs*{\Exp_{\widetilde{Q}_D}[f] - \Exp_Q[f]} \geq \frac{2\norm{f}_Q \delta}{1 - \delta} \right] \leq 2\delta.
\end{equation}
\end{theorem}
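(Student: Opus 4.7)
The plan is to reduce the bias analysis of DAD* to a self-normalised importance-sampling bound of the Chatterjee--Diaconis type, mirroring the strategy used to analyse MRC and ORC. The key structural observation is that, because AD* partitions dyadically, the leaves of the search tree at depth $D$ partition $\Reals$ into $N = 2^D$ intervals $\{B_\ell\}$ with $P(B_\ell) = 1/N$, and each leaf carries an independent sample $X_\ell \sim P\lvert_{B_\ell}$ together with a Gumbel perturbation $G_\ell$.

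First, I would characterise the law of the leaf returned by DAD*. Using the Gumbel-process description of A* sampling from \citet{maddison2014sampling}, the collection $\{G_\ell\}$ of depth-$D$ Gumbels is jointly distributed as $N$ independent Gumbels with locations $\log P(B_\ell) = -\log N$, even though the Gumbels along any single root-to-leaf path are coupled by truncation. The Gumbel-max identity then gives, conditional on the per-leaf samples,
\begin{equation*}
    \Prob\bigl[\ell^{*}=\ell \,\big|\, \{X_\ell\}\bigr] \;=\; \frac{r(X_\ell)}{\sum_{\ell'} r(X_{\ell'})}.
\end{equation*}
Combined with $X^{*}\sim P\lvert_{B_{\ell^{*}}}$ within the chosen leaf, this identifies $\widetilde Q_D$ with a stratified self-normalised importance sampler using $N$ strata of equal $P$-mass and weights $r(X_\ell)$. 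The bounding/pruning step in \Cref{alg:a_star_coding} only discards nodes whose upper bound provably cannot beat the current lower bound, so it cannot change the identity of the maximising leaf at depth $\le D$; the equivalence above therefore holds exactly.

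Second, I would apply a Chatterjee--Diaconis type bound to this importance sampler. Split on the event $E := \{\log_2 r(Y) \le K + t/2\}$ with $Y\sim Q$, and write $\bar r := r \Ind_E$. A direct computation gives $\Exp_P[\bar r^2] \le 2^{K+t/2}\cdot\Exp_P[r] = 2^{K+t/2}$, so dividing by $N = 2^{K+t}$ produces a normalised second moment of at most $2^{-t/2}(1+1/N)$; taking square roots in the standard variance-to-total-variation step yields the $2^{-t/4}\sqrt{1+1/N}$ contribution to $\delta$. The complementary event $E^c$ contributes $\Prob[\log_2 r(Y) > K + t/2]$ to the bias, producing the second term of $\delta$ after a Markov-type square root. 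Combining the two contributions via Cauchy--Schwarz (to extract $\norm{f}_Q$) and Markov's inequality (for the outer probability $2\delta$), and converting a multiplicative ratio bound on $\widetilde Q_D(f)/Q(f)$ into an additive one in the standard way, accounts for the $1/(1-\delta)$ factor and yields the stated inequality.

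The main obstacle is the first step: rigorously identifying the law of the leaf returned by DAD* with a self-normalised importance sampler on independent, equally $P$-weighted strata. Because the Gumbels along any root-to-leaf path are correlated through truncation, this requires the full Gumbel-process structure rather than the elementary Gumbel-max identity. A secondary technicality is to verify that the probability of the true A* optimum lying at depth $>D$ is already absorbed into the tail term $\Prob[\log_2 r(Y) > K + t/2]$ via a Markov argument combined with \Cref{thm:expected_codelength_of_a_star} applied to the $E$-restricted measure, so that the depth-limiting itself introduces no additional error term.
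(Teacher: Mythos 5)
Your overall strategy---reduce the analysis to a self-normalised importance-sampling bound of Chatterjee's type---matches the paper's, and your second step (tail-split on $\{\log_2 r(Y) > K+t/2\}$, then Cauchy--Schwarz and Markov, absorbing the self-normalisation into a $1/(1-\delta)$ factor) is a correct sketch. But the first step, the identification of $\widetilde{Q}_D$, is wrong, and this is the crux of the argument.

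DAD* does \emph{not} perform the Gumbel-max trick over the leaves at depth $D$. \Cref{alg:a_star_coding} returns the node achieving $\max_{n}\{G_n + \log r(X_n)\}$ over \emph{every} node in the depth-limited tree, of which there are $N = 2^D - 1$ in total (the $2^D$ in the theorem statement corresponds to the paper's ``use-all-codewords'' modification that splits the root). Internal nodes are genuine candidates: the truncated Gumbel values along a root-to-leaf path are decreasing, so a parent can and frequently does beat all of its descendants. Restricting to leaves therefore changes the law of the returned node. Concretely, your estimator is a standard equal-mass stratified SNIS, whereas the correct one is
\begin{equation*}
\ApproxExp_D(f) \;=\; \frac{1}{N}\sum_{n \in \Tree_D} r(X_n)\, f(X_n),
\end{equation*}
a sum over nodes at \emph{all} depths $1,\dots,D$, whose $P$-masses $P(B_n) = 2^{-(D_n-1)}$ are non-uniform. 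The paper's key intermediate result is \cref{lemma:unbiasedness_of_importance_estimator}, showing this mixed-weight sum is still unbiased for $\Exp_Q[f]$ because $\sum_{d=1}^{D}2^{d-1}=N$; the same identity is what makes the depth sum collapse in the second-moment step of \cref{lemma:mean_abs_dev_of_imp_estimator}. Your leaves-only formulation silently bypasses this: it analyses a simpler object that is not the output distribution of DAD*.

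Your ``secondary technicality''---verifying that the probability of the unrestricted A* optimum lying at depth $>D$ is absorbed into the tail term---is not needed and is not what the paper argues. The paper bounds the bias of the depth-limited argmax directly via the SNIS framework, with no reference to whether that argmax coincides with the unrestricted A* output. The tail probability $\Prob[\log_2 r(Y) > K + t/2]$ arises from clipping $r$ inside the Chatterjee argument (controlling $\Exp_Q[f] - \Exp_Q[f\,\Ind[r \le a]]$ and the analogous estimator gap), not from a depth-of-optimum argument via \cref{thm:expected_codelength_of_a_star}.
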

\begin{proof}
See \Cref{section:proof_of_ad_star_bias} for proof.
\end{proof}
\vspace{-0.2cm}
\Cref{thm:biasedness_of_a_star} says that the bit budget for DAD* should be approximately $\KLD{Q}{P}_2$ in order to obtain reasonably low-bias samples. 
As we increase the budget beyond this point, we observe that in practice $\delta$, and by extension the bias, decay quickly. 
In particular, as $t \to \infty$, $\delta \to 0$, and we recover exact AD*. 
We note that it is also possible to depth-limit other variants of A* coding, such as AS*.
However, for any fixed $D_{max}$, the sample bias of these variants will be larger than the bias of DAD*.
This is because, as shown in \cref{thm:expected_codelength_of_a_star}, if we use a different \texttt{partition} with $\epsilon > 1/2$, A* coding will need to search deeper down the tree to find the global optimum.
AD* achieves the lowest possible average depth out of all variants of A* coding, because it has $\epsilon = 1/2$.
Equivalently, AD* can achieve the same sample quality with a lower $D_{max}$ than any other variant of A* coding.
In practice, we observed that depth limited AS* gives significantly more biased samples than AD*, in line with the above reasoning, so we did not further pursue any depth-limited variants other than DAD*.


\par
\textbf{Runtime of DAD* coding:} Based on our conjecture for exact AD* and the result of \Cref{lemma:ad_star_index_match}, we conjecture that DAD* runs in $\Oh(\infD{Q}{P})$ time.
We provide experimental evidence for this in \Cref{sec:experiments}.

\par
\textbf{Using all available codewords:}
When $D_{max} = D$, the number of nodes in the binary search tree of DAD* is $2^D - 1$, which is one fewer than the number of items we can encode in $D$ bits.
In particular, the codeword 0 is never used, since heap indexing starts at 1.
We can make AD* slightly more efficient by drawing 2 samples and arrival times at the root node instead of a single one.
We found that this has a significant effect on the bias for small $D$, and becomes negligible for large $D$.
In \Cref{sec:experiments}, we perform our experiments with this modified version of DAD*.

\par
\textbf{Tying codelengths:} Since the codelength is a parameter of DAD*, we can code samples from different variables using the same codelength, grouping them in a block and passing $D_{max} = |C|$ to \Cref{alg:a_star_coding} for each variable in the block.
Since the variables have the same codelength, we only need to communicate this codelength once per block.
\vspace{-0.3cm}
\section{IsoKL layers and VAEs}
\vspace{-0.1cm}
\par
\textbf{Tying KL divergences:} Although DAD* can be used to tie together the codelengths of different samples, \Cref{thm:biasedness_of_a_star} suggests that the search depth $|C|$ used in DAD* affects the sample bias.
In order to obtain low-bias samples, we must set $|C| \geq \KLD{Q}{P}$.
If we group variables with different KL divergences and code them using the same $|C|$, one of the two following unwanted effects might occur: (1) if $\KLD{Q}{P} \gg |C|$ for a variable, then the corresponding sample will be highly biased; (2) if $\KLD{Q}{P} \ll |C|$ for a variable, then an excessive codelength is being used to code its sample, which is inefficient.
\begin{figure}[h!]
\centering
  \includegraphics[width=0.35\textwidth]{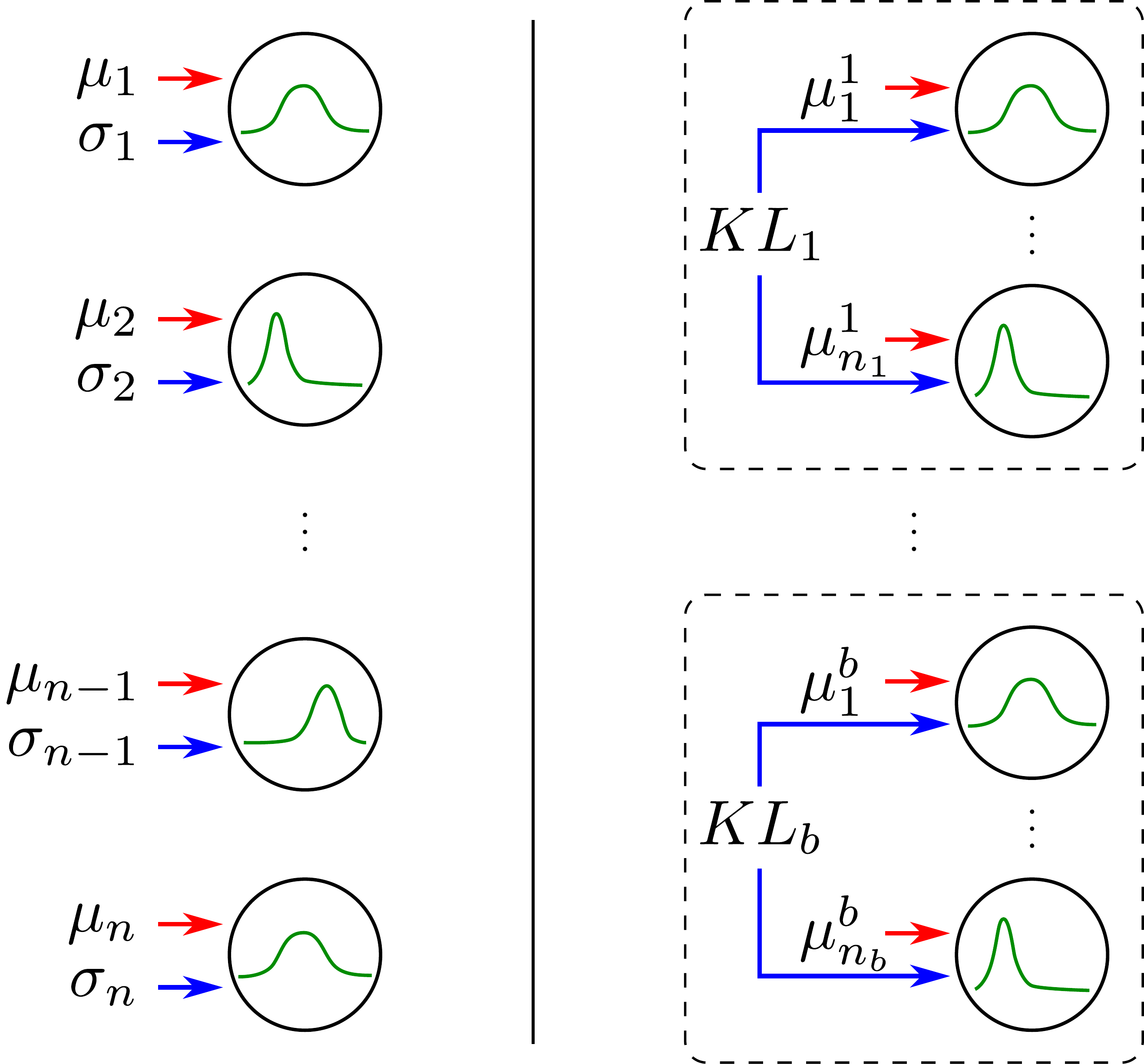}
  \vspace{-0.2cm}
  \caption{Two different ways of parameterizing a Gaussian variational posterior $Q$. \textbf{Left:} The usual mean-variance parameterization. \textbf{Right:} IsoKL layer using our proposed mean-KL parameterization where the latent dimensions are divided into $b$ blocks, and the KLs are shared within each block.}
  \vspace{-0.3cm}
  \label{fig:usual_vs_isokl_layer}
\end{figure}
To avoid these cases, it would be useful if $|C| \approx \KLD{Q}{P}$ for all variables in a block.
We can achieve this by constraining the KL divergences of all variables in a block to be equal.

\par
Suppose $Q$ and $P$ are diagonal Gaussians over $(X_1, \hdots X_N)$ with means $(\mu_1, \hdots, \mu_N)$ and $(\nu_1, \hdots, \nu_N)$, and variances $(\sigma^2_1, \hdots, \sigma^2_N)$ and $(\rho^2_1, \hdots, \rho^2_N)$, respectively.
We can parameterize $Q$ such that \vspace{-0.05cm}
\begin{equation} \label{eq:isokl_param}
    \KLD{\mathcal{N}(\mu_n, \sigma_n^2)}{\mathcal{N}(\nu_n, \rho_n^2)} = \kappa ~ \text{ and } ~ \sigma_n < \rho_n. \vspace{-0.1cm}
\end{equation}
for each $n = 1, \dots, N$, by setting \vspace{-0.05cm}
\begin{align}
    |\mu_n - \nu_n| &< \rho_n \sqrt{2\kappa} \\ \sigma^2_n &= -\rho^2_n W\left( -\exp\left( \Delta_n^2 - 2\kappa - 1 \right) \right), \vspace{-0.05cm}
\end{align}
where $\Delta_n = (\mu_n - \nu_n) / \rho_n$ and $W$ is the principal branch of the Lambert $W$ function \citep{lambert1758observationes}.
Note that the condition that $\sigma_n < \rho_n$ will ensure that $\infD{Q}{P} < \infty$.
While the $W$ is not elementary, it can be computed numerically in a fast and efficient way.
We refer to this as an IsoKL Gaussian layer (see \Cref{fig:usual_vs_isokl_layer}), and call a VAE model using such a $Q$ an IsoKL VAE (IKVAE).
Although we focus on Gaussians,
this approach can be extended to other distributions.
See \Cref{section:iso_kl_layer} for implementation details and mathematical details on deriving the necessary quantities for IsoKL layers.
\vspace{-0.3cm}
\section{Experiments}
\vspace{-0.1cm}
\label{sec:experiments}
\begin{figure*}[h!]
  \vspace{-0.1cm}
  \includegraphics[width=\textwidth]{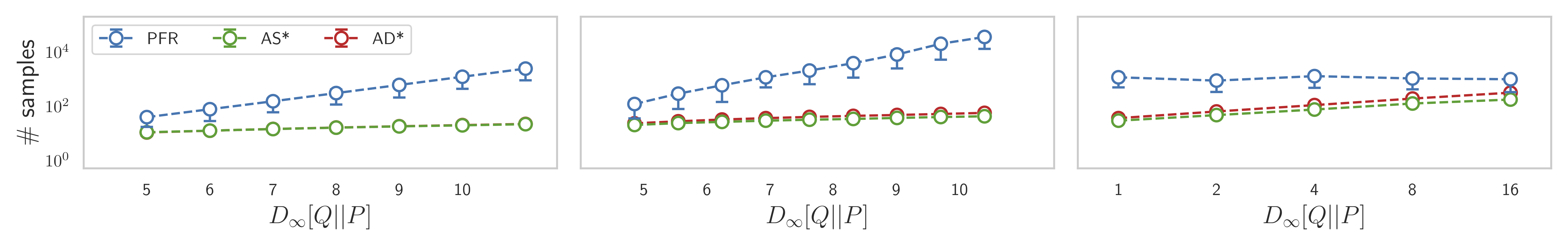} \vspace{-0.3cm}
  \includegraphics[width=\textwidth]{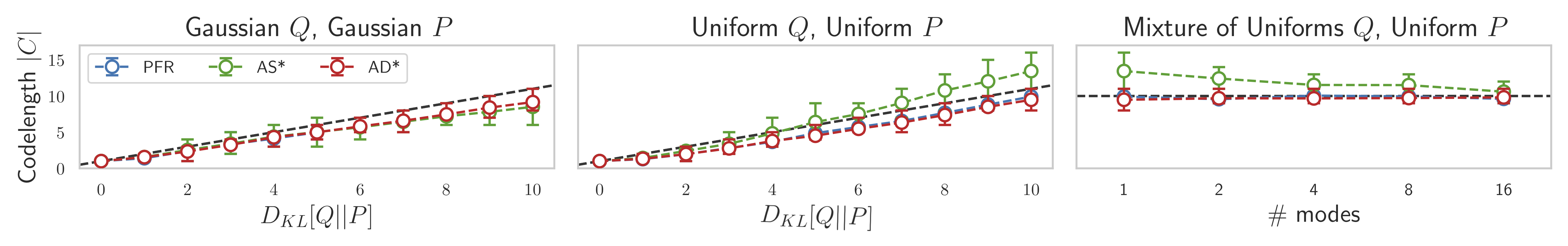}
  \vspace{-0.6cm}
  \caption{Number of steps (top) and codelength (bottom) for PFR, AS* and AD* coding.
  Reported codelengths do not include the overhead terms.
  Circles show mean values, and error bars show first and third quantiles.
  AS* and AD* are significantly quicker than PFR.}
  \label{fig:exact_rec}
  \vspace{-0.4cm}
\end{figure*}

\begin{figure*}[h!]
  \includegraphics[width=\textwidth]{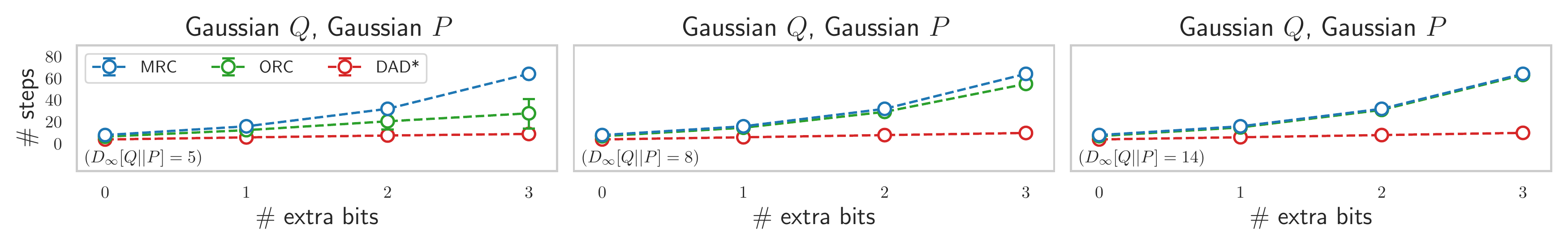} \vspace{-0.3cm}
  \includegraphics[width=\textwidth]{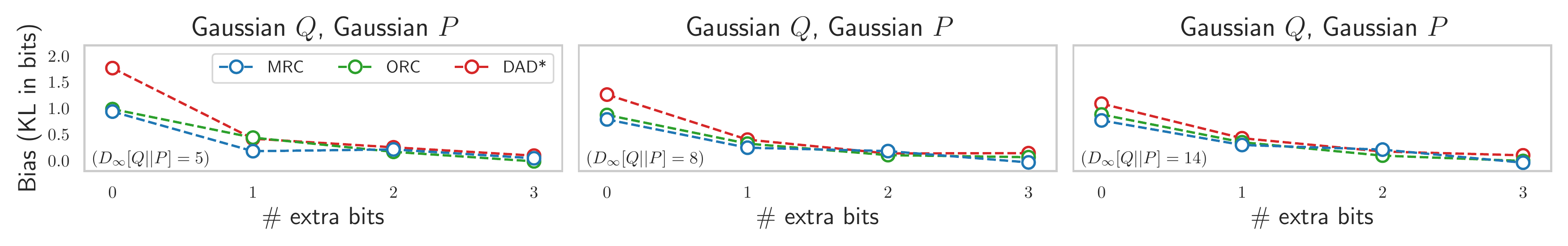}
  \vspace{-0.6cm}
  \caption{Number of steps (top) and bias (bottom) for MCR, OCR and DAD* coding.
  The bias is measured as the KL divergence from the empirical distribution of samples, to the true target (discussion in text). 
  DAD* has similar bias to MRC and ORC, but is much faster.}
  \label{fig:approx_rec}
  \vspace{-0.5cm}
\end{figure*}
\vspace{-0.05cm}
\textbf{Experiments for exact REC:} We conducted experiments using PFR, AS* and AD* coding to perform REC on Gaussian, uniform and disjoint mixture of uniform distributions, where we systematically vary the problem parameters.
\vspace{-0.05cm}
\par
\Cref{fig:exact_rec} shows measurements for PFR, AS* and AD* coding.
We report the number of steps executed by each algorithm rather than wall-clock time, as the former is proportional to the runtime and is unaffected by specific differences in implementation.
We also report the codelength, in bits, excluding the additional logarithmic and constant overhead.
First, we observe that the number of steps taken by PFR scales exponentially with $\infD{Q}{P}$, as expected.
This scaling renders PFR computationally intractable for practical problems.
By contrast, the number of steps taken by AS* and AD* coding increases linearly with $\infD{Q}{P}$.
\vspace{-0.05cm}
\par
Second, we observe that the number of modes has an effect on the runtime of both AS* and AD* coding.
In the top right of \Cref{fig:exact_rec} we have set $Q$ to a mixture of uniforms with disjoint support, $P$ to a uniform, and fixed $\infD{Q}{P}$ while varying the number of modes in $Q$.
For a small number of modes, AS* and AD* are both significantly faster than PFR.
As the number of modes increases, so does the number of steps executed by AS* and AD*.
This trend is expected, because for unimodal $Q$, A* coding can quickly bound the objective for large regions of the input space which have low $dQ/dP$ throughout.
As $dQ/dP$ becomes increasingly multimodal, the bounding function $M$ becomes large in increasingly many disconnected regions of the search space.
Thus both AS* and AD* must drill down to and search increasingly many of these regions, before producing a sample, which requires a larger number of steps.
By contrast, PFR is unaffected by the number of modes in $dQ/dP$, since it is equivalent to Global Bound A*, which retains the entire input space in a single active search branch.
These results suggest that the unimodality of $dQ/dP$ is a key attribute for enabling fast coding with AS* and AD*.
\vspace{-0.05cm}
\par
Third, we observe that the codelengths of PFR, AS* and AD* scale linearly with $\KLD{Q}{P}$, as expected.
However, in some cases AS* produces a larger mean codelength than PFR and AD*.
This can be explained by the fact that the dominant term in the codelength bounds of the three algorithms is $\lambda~ \KLD{Q}{P}$, where PFR and AD* have $\lambda = 1$, while AS* has $\lambda \approx 2.41$.
AS* can produce larger $|C|$ than AD* because, as \Cref{lemma:as_star_expected_bound_size} states, the expected rate of shrinkage of its search region is slower than AD*.
Therefore, in order to refine its search by the same amount, AS* needs a greater number of steps, leading to a larger expected $|C|$.

\par
\textbf{Experiments for approximate REC:} We conducted experiments using MRC, ORC and DAD* coding to perform approximate REC with Gaussian $Q$ and $P$, varying the bit budget that is allowed to the coders, in addition to the baseline budget of $\KLD{Q}{P}_2$ bits.
\Cref{fig:approx_rec} shows the effect of the additional bit budget on the number of steps and the sample bias for three problems with different $\infD{Q}{P}$.
We quantify the sample bias as the KL divergence, $\KLD{\hat{Q}}{Q}_2$, from the empirical distribution $\hat{Q}$ of the approximate samples, to the true $Q$.
In each case, we draw $100$ samples and follow the method of \citet{perez2008kullback} to estimate $\KLD{\hat{Q}}{Q}_2$.
We observe that in all three cases (\cref{fig:exact_rec} bottom), while there is a slight difference in the level of bias when no extra bits are allowed, the difference in bias becomes negligible when one or more extra bits are added.
However, DAD* achieves this bias far faster than MRC and ORC (\cref{fig:exact_rec} top), making it a far more tractable method.
\setlength\tabcolsep{4.0pt}
\vspace{-0.2cm}
\begin{table}[h!] \small
\begin{center}
\begin{tabular}{ l c c c c } 
 \toprule
 & {\scshape \# latent} & {\scshape NEG. ELBO} & {\scshape AD*} & {\scshape DAD*} \\ \midrule
 \multirow{ 2}{*}{{\scshape VAE}} & 20 & {\scriptsize $1.43 \pm 0.01$} & {\scriptsize $1.53 \pm 0.01$} & $-$  \\
  & 50 & {\scriptsize $1.40 \pm 0.01$} & {\scriptsize $1.66 \pm 0.01$} & $-$ \\ \midrule
 \multirow{ 2}{*}{{\scshape IKVAE}} & 20 & {\scriptsize $1.44 \pm 0.01$} & {\scriptsize $1.55 \pm 0.01$} & {\scriptsize $1.51 \pm 0.01$} \\
  & 50 & {\scriptsize $1.44 \pm 0.01$} & {\scriptsize $1.69 \pm 0.01$} & {\scriptsize $1.60 \pm 0.01$} \\
 \bottomrule
\end{tabular}
\end{center}
\vspace{-0.2cm}
\caption{ELBO and lossless compression rates (bpp) on MNIST.}
\label{tab:mnist1}
\end{table}
\setlength\tabcolsep{5.0pt}
\begin{table}[h!] \vspace{-0.4cm} \small
\begin{center}
\begin{tabular}{ l c c c } 
 \toprule
 & {\scshape \# latent} &  {\scshape AD*} & {\scshape DAD*} \\ \midrule
 \multirow{ 2}{*}{{\scshape IKVAE}} & 20 & {\scriptsize $86.31 \pm 0.01$} & {\scriptsize $5.00 \pm 0.01$}  \\
  & 50 & {\scriptsize $195.20 \pm 0.01$} & {\scriptsize $5.00 \pm 0.01$} \\
 \bottomrule
\end{tabular}
\end{center}
\vspace{-1em}
\caption{Overhead codelength (bits) on MNIST.}
\label{tab:mnist2}
\end{table}
\vspace{-0.4cm}
\par
\textbf{Image compression MNIST:} We compared the performance of AD* and DAD* on image compression experiments on MNIST, using the feedforward VAE architecture of \citet{townsend2018practical}, with Gaussian $Q$ and $P$.
We also trained IKVAEs with the same architecture, using an IsoKL Gaussian $Q$.
As \Cref{thm:expected_codelength_of_a_star} shows and as we discuss above, AS* has strictly worse expected codelength than AD* and hence we did not include it in our experiments.
For DAD*, we set $\kappa = 2$ based on preliminary experiments.
\Cref{tab:mnist1} shows the lossless compression rates of different model and coder combinations, from which we observe the following trends.
First, the IKVAEs achieve similar ELBOs to the VAEs, suggesting that tying the KLs does not degrade model performance.
Further, we observe that for the IKVAE architectures, using DAD* improves the compression rate over AD*.
We do not provide results for DAD* applied to a standard VAE posterior,
as it would yield a strictly worse performance than AD*.
This is because the KLs in each latent dimension are different, and hence not only do we need to communicate the codelength for each dimension, but DAD* returns approximate samples from the target distribution, as opposed to AD*.
This is because, as corroborated by \cref{tab:mnist2}, DAD* significantly reduces codelength overheads, improving performance over AD*.
We also note that in this task, the number of latent dimensions is relatively small (20 or 50) compared to the number of image pixels (784).
Since the overhead costs increase with the number of latent variables, we expect the savings of DAD* to be more pronounced for larger IKVAEs.
Overall, this experiment demonstrates that AD* can be effective for VAEs, while DAD* and IKVAEs further reduce coding overheads.
\section{Related Work}
\vspace{-0.1cm}
\par
\textbf{Quantization-based approaches:} Most state-of-the-art methods in both lossy and lossless compression are based on including quantization in the compression pipeline, and somehow circumventing its non-differentiability during training.
Current widespread approaches in lossy compression use VAEs with a particular choice of latent distributions \citep{balle2017end, balle2018variational}.
Instead of quantizing latent representations during training, these methods perturb the latents with uniform noise, a technique known as dithering.
To perform compression, the methods switch back to hard quantization.
Dithering is equivalent to using a uniform variational posterior distribution, and has been demonstrated to work well in practice \citep{balle2020nonlinear}. 
However, this introduces a mismatch between the training and compression phases and it also constrains the design choices for new methods to use uniform distributions.
A related variant is the work of \citet{agustsson2020universally}, who propose to use universal quantization \citep{ziv1985universal} for compression.
Universal quantisation can be regarded as a REC algorithm.
While this method performs very well in the experiments of \citet{agustsson2020universally}, it is limited to particular choice of distribution.
Our work can be regarded as a step towards lifting the restrictions imposed by quantization based methods.

\par
\textbf{Bits-back coding:}
\citet{townsend2018practical} introduced a practical way to combine bits-back coding \citep{hinton1993keeping} and VAEs to perform lossless data compression. 
The method of \citet{townsend2018practical} has later been applied to normalizing flows \citep{ho2019compression}, which, together with discrete \citep{van2020idf++} and quantization based approaches \citep{zhang2021iflow} represent the state-of-the-art in lossless image compression. 
However, bits-back coding is only applicable to perform lossless compression.
Furthermore, it is only asymptotically efficient, meaning that it has a large constant codelength overhead that becomes insignificant as we compress larger batches of data.
In contrast, our method is applicable to both lossy and lossless compression and can be used to perform one-shot compression.

\par
\textbf{REC and reverse channel coding:} REC was first proposed by \citet{havasi2018minimal}, who developed MRC for compressing Bayesian neural networks, and was later extended to data compression using VAEs by \citet{flamich2020compressing}.
Both of these works were inspired by reverse channel coding \citep{bennett2002entanglement}.
REC and reverse channel coding can be viewed as the worst-case and average-case approaches to the same problem.
Concretely, REC requires that for fixed proposal $P$ and public randomness $S$, the codelength bound in \cref{eq:rec_codelength_constraint} holds for any target $Q$.
In contrast, reverse channel coding assumes a distribution over a family of possible targets and requires that \cref{eq:rec_codelength_constraint} holds in expectation.

The first general REC algorithm for discrete distributions was proposed by \citet{harsha2007communication}, however this method has an impractically long runtime.
\citet{li2018strong} developed PFR coding, an alternative based on Poisson processes, however this also is computationally impractical.
Recently, \citet{theis2021algorithms} proposed ORC, which combines some the benefits of MRC and PFR, but remains computationally impractical.
These algorithms all share the limitation that their expected runtime is $\Omega(\exp(\KLD{Q}{P}))$.
\par
\textbf{Hybrid Coding:}
To improve the runtime of REC/RCC algorithms, \citet{theis2021algorithms} proposed hybrid coding (HC).
HC is applicable whenever the support of the target is compact, and can be combined with any of the existing REC/RCC algorithms to improve their runtime by a multiplicative factor.
However, HC does not change the asymptotic complexity of the REC/RCC algorithm that it is combined with.
We note that while \citet{theis2021algorithms} used HC in conjunction with ORC, HC can equally well be combined with A* coding.
Thus the speedup that HC provides is complementary to that of A* coding.

\par
\textbf{Efficient rejection sampling:} 
While rejection sampling is a applicable to any target $Q$ and proposal $P$ where $\infD{Q}{P} < \infty$ and $dQ/dP$ can be evaluated, the basic algorithm has $\Oh(\exp(\infD{Q}{P}))$ expected runtime complexity \cite{maddison2016poisson}. 
Thus, a natural question is to ask under what assumptions it is possible to perform rejection sampling from a target $Q$ using a proposal $P$ efficiently.
\par
Recently, \citet{chewi2021rejection} consider the problem of constructing an upper envelope for rejection sampling from discrete probability distributions.
In particular, they study the time complexity of constructing the envelope as a function of the alphabet size.
They show that shape constraints on the target distribution, such as monotonicity and log-concavity can be utilized to design algorithms whose runtime scales logarithmically in the alphabet size.
Our work is complementary to theirs, as \Cref{thm:expected_runtime_of_as_star} provides an initial result showing that shape constraints can be leveraged to design more efficient sampling algorithms for continuous distributions as well.
\vspace{-0.3cm}
\section{Conclusion}
\vspace{-0.1cm}
\textbf{Summary:} In this work we proposed AS* coding and AD* coding, two algorithms based on A* sampling for performing REC with one-dimensional target and proposal distributions $Q$ and $P$.
We proved that the expected codelengths of AS* and AD* are $\Oh(\KLD{Q}{P})$ and that, whenever $dQ/dP$ is unimodal, the expected runtime of AS* is $\Oh(\infD{Q}{P})$.
Experimental evidence suggests that the runtime of AD* is also $\Oh(\infD{Q}{P})$.
This runtime significantly improves upon the existing runtimes of existing REC algorithms, without placing severe conditions on $Q$ and $P$.

In addition, we proposed two methods to eliminate overhead codelength when encoding multiple samples.
First, we introduced an approximate depth-limited variant of AD* coding, DAD* coding, in which the codelength of the encoder is a tunable variable, and proved an upper bound for its bias.
Second, we introduced the IsoKL parameterization, in which latent dimensions are grouped into blocks, and the KL divergences of all latent dimensions in a block are constrained to be equal.
DAD* together with the IsoKL parameterization allow us to encode multiple samples with the same codelength, thereby amortising coding overheads, while maintaining low sample bias.
Experimentally, we demonstrated the favourable runtimes of AS* and AD* coding on extensive toy experiments.
We also showed that DAD* coding achieves levels of bias comparable to existing approximate REC algorithms, while maintaining a significantly faster runtime.
On lossless image compression experiments on MNIST, DAD* together with an IsoKL VAE (IKVAE) parameterization achieved a compression rate close to the theoretically optimal ELBO.

\textbf{Further work:} One of the central remaining questions of this work is the runtime of AD* coding.
Based on our experiments, we conjecture that the runtime of AD* coding is also $\Oh(\infD{Q}{P})$ whenever $dQ/dP$ is unimodal, however this remains to be shown.
\par
In general, for fixed $\KLD{Q}{P}$ we can have arbitrarily high $\infD{Q}{P}$, hence a second, more general question is if there exists a REC algorithm with $\Oh(\KLD{Q}{P})$ expected runtime.
Adaptive rejection sampling \citep{gilks1992adaptive}, OS* sampling \citep{dymetman2012algorithm} and ideas from \citep{theis2021algorithms} could be good starting points for developing such an algorithm.
\par
Another promising direction is to apply the IsoKL parameterization to larger VAEs, such as those used by \citet{townsend2019hilloc}, and scale our approach up to real-world compression tasks.
Lastly, our methods can also be readily applied to lossy compression.

\section{Author Contributions}

GF discovered that A* sampling can be modified to obtain A* coding (\cref{alg:a_star_coding}), which can be used to perform relative entropy coding, and provided proofs for \cref{thm:expected_codelength_of_a_star,thm:biasedness_of_a_star}.
SM provided a proof for \cref{thm:expected_runtime_of_as_star}.
GF and SM contributed equally to the experiments and the writing of this paper.
JMH supervised and steered the project.

\section{Acknowledgements}

We would like to thank Rich Turner and Lennie Wells for useful feedback on an early manuscript of this paper.
GF acknowledges funding from DeepMind.
SM acknowledges funding from the Vice Chancellor's \& George and Marie Vergottis scholarship of the Cambridge Trust.



\bibliography{main}
\bibliographystyle{icml2021}


\newpage
~\
\newpage

\appendix
\onecolumn
\section{Proof of \Cref{thm:expected_codelength_of_a_star}}
\label{section:proof_of_expected_codelength}
\paragraph{Notation: }
Throughout the appendix, we will write $[a:b] \defeq [a, b] \cap \Nats$ for $a, b \in \Nats$ and $[a] \defeq [1:a]$. Furthermore, for a vector we will write $x_{1:n} \defeq (x_1, \hdots x_n)$ for $n \in \Nats$.

\par
We make use of the top-down construction of Gumbel processes \citep[\Cref{alg:priority_queue_construction};][]{maddison2014sampling}. 
The top-down construction realizes samples from a base distribution $P$ along with their associated Gumbel values, which together form a Gumbel process. 
A Gumbel process can be thought of as a generalization of the Gumbel-Max trick \citep{papandreou2011perturb}, where the log-probability of each member of a sample space is perturbed with i.i.d.\ Gumbel noise.
Then, it can be shown that the maximum of this process is Gumbel distributed, and the argmaximum has law $P$.
Gumbel processes can be shown to be equal in distribution to exponential races, where the time variable is mapped to its negative logarithm \citep{maddison2016poisson}. 
This is important, as it allows us to switch between the Gumbel and Poisson process representations, to leverage existing results in our analysis.
\par
\Cref{alg:priority_queue_construction} realizes its Gumbel process using a space partitioning binary tree construction while also recording the depths and heap indices of nodes. 
It is therefore an extension of Algorithm 1 in \citet{maddison2014sampling}.
\Cref{alg:priority_queue_construction} can be realized using public randomness by anyone with access to the public seed $S$.
\par
A* sampling can be viewed as performing a binary tree search on the Gumbel process with measure $P$, as realized by \Cref{alg:priority_queue_construction}, to search for a sample with distribution $Q$. 
The key observation is that for the search to proceed, the whole realization of the Gumbel process with measure $P$ is not needed, and in fact it can be realized on-the-go.
\par
A* coding (\Cref{alg:a_star_coding}) first runs the regular A* sampling procedure by simulating the Gumbel process with the proposal measure $P$ using the publicly available randomness $S$. 
Then, it encodes the returned sample using the heap index $H_n$ of the node $n$ with which the sample is associated.
Since any node with a given heap index can be simulated without reference to $Q$ using \Cref{alg:priority_queue_construction}, the code returned by A* coding is always uniquely decodable given $S$, and the correctness of A* sampling \citep{maddison2014sampling} will guarantee that the sample A* coding returns has the correct distribution $Q$.
\par
PFR and ORC also operate on Gumbel/Poisson processes, however, they use a different encoding process. 
They encode the \textit{index} $K$ of a sample as opposed to its heap index $H$.
For a sample $x$, $K$ is obtained by sorting the arrival times in the Gumbel/Poisson process with measure $P$, and returning the index associated with $x$ in the sorted list.
PFR and ORC obtain $K$ easily, because they realize the Gumbel/Poisson process in-order (for example, see Algorithm 3 in \citep{maddison2014sampling}). 
Similarly, \Cref{alg:priority_queue_construction} also constructs the process in order, however, it uses the top-down construction.

\par
\Cref{thm:expected_codelength_of_a_star} shows that A* coding is not only correct and uniquely decodable, but its expected codelength is also optimal. 
However, to show this, we first show the following intermediate result, which relates the index $K$ of a sample to its expected depth $D$ in the top-down construction.

\begin{lemma}[Average depth of nodes in a Gumbel process]
\label{lemma:average_depth_of_nodes_in_prior_process}
Let $P$ be a Borel probability measure over some Polish space that is supplied to \Cref{alg:priority_queue_construction}. Assume that  \texttt{partition} satisfies the following property: there exists $\epsilon \in [\nicefrac{1}{2}, 1)$, such that for any node $n$ we have
\begin{equation}
    \Exp[P(B_n)] \leq \epsilon^{D_n},
\end{equation}
where the expectation is taken over the joint distribution of the samples associated with the ancestor nodes of $n$. Let $K_n$ be the index of a node $n$ realized by \Cref{alg:priority_queue_construction} and let $D_n$ be its depth in the tree. Then,
\begin{equation}
    \Exp[D_n \mid K_n] \leq -\log_{\epsilon}K_n.
\end{equation}
\end{lemma}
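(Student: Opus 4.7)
My plan is to reduce the conditional expectation to a one-sided exponential-moment bound via Jensen's inequality, then establish that bound through the Poisson-process representation underlying Algorithm~1. Since $x\mapsto \epsilon^{-x}$ is convex for $\epsilon\in[1/2,1)$, Jensen gives $\Exp[\epsilon^{-D_n}\mid K_n=k]\geq \epsilon^{-\Exp[D_n\mid K_n=k]}$, so it suffices to prove
\[
\Exp\!\left[\epsilon^{-D_n}\mid K_n=k\right]\;\leq\;k ,
\]
from which $\Exp[D_n\mid K_n=k]\leq -\log_\epsilon k$ follows by taking $\log_{1/\epsilon}$ on both sides. This reduction converts a statement about an unwieldy conditional mean into a per-$k$ moment inequality that matches the natural scale of the problem.

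To prove this moment bound I would work in the Poisson representation of the Gumbel process that Algorithm~1 simulates: the arrivals $(T_i,X_i)_{i\geq 1}$ form a rate-$1$ Poisson process on $\mathbb{R}^+$ with independent marks $X_i\sim P$, and the $i$-th arrival is routed to the unique leaf $h$ of the current tree whose region $B_h$ contains $X_i$. Decomposing the moment over heap positions, with $\mathcal{H}_d$ the set of heap positions at depth $d$,
\[
\Exp\!\left[\epsilon^{-D_n}\mathbbm{1}[K_n=k]\right]=\sum_{d\geq 1}\epsilon^{-d}\sum_{h\in\mathcal{H}_d}\Prob[K_h=k] .
\]
The event $\{K_h=k\}$ forces $X_k\in B_h$, and since $X_k$ is independent of $\sigma(X_1,\ldots,X_{k-1})$ while $B_h$ is already measurable with respect to this sigma-algebra through the $d(h)-1$ ancestor samples (all of which must be among the first $k-1$ arrivals), conditioning yields $\Prob[K_h=k]\leq \Exp[P(B_h)\,r_{h,k}]$ for a routing factor $r_{h,k}$ that encodes the probability that the tree path to $h$ has been opened by step $k-1$ without any earlier arrival displacing $h$ from position $k$. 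The hypothesis $\Exp[P(B_h)]\leq\epsilon^{d(h)}$ cancels the $\epsilon^{-d}$ factor, and what remains, $\sum_h r_{h,k}$, should collapse to $k$ via the realization-wise identity $\sum_h \mathbbm{1}[K_h\leq k]=k$.

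The step I expect to be the main obstacle is closing the bound with a tight enough constant in the borderline case $\epsilon=1/2$ used by AD*, where the naive estimate $\Prob[K_h=k]\leq\epsilon^{d(h)}$ summed over the $|\mathcal{H}_d|\leq 2^{d-1}$ positions at each depth is already at the edge of summability, so any wasteful step in the chain blows the bound up. The cleanest way to absorb this is probably to run the estimate globally along realizations of the process rather than per heap position: for each realization, the first $k$ pops label a prefix of $k$ heap positions, and one telescopes $\sum_h \epsilon^{-d(h)}\mathbbm{1}[K_h\leq k]$ along the traversed binary-tree edges, applying the hypothesis edge by edge. Making this telescoping rigorous under the AS* partition, where the regions $B_h$ are themselves random functions of the ancestor samples (so that even $d(h)$-indexed conditioning mixes several random objects), is where the bulk of the technical work will lie.
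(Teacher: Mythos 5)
Your Jensen reduction is valid: since $x\mapsto\epsilon^{-x}$ is convex, $\Exp[D_n\mid K_n=k]\leq-\log_\epsilon k$ would indeed follow from $\Exp[\epsilon^{-D_n}\mid K_n=k]\leq k$, so the shape of your target is right. But the argument you sketch for that moment bound has a genuine gap, which you partly acknowledge. Decomposing over all heap positions and applying the per-node marginal bound $\Exp[P(B_h)]\leq\epsilon^{d(h)}$ gives a sum of $1$'s over all positions at all depths, which diverges; your proposed fix, telescoping realization-wise along traversed edges against the identity $\sum_h\mathbbm{1}[K_h\leq k]=k$, is not exhibited, and the hypothesis on \texttt{partition} is an \emph{unconditional} per-node expectation bound, so it cannot simply be applied edge-by-edge along a path that is itself selected by the search (conditioning on $K_h=k$ size-biases $P(B_h)$ upward).

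The structural fact you are missing is the \emph{frontier}. After $k-1$ expansions the priority queue holds exactly $k$ candidate nodes $\Frontier_k$, and by the Gumbel-max property together with the equivalence-under-\texttt{partition} result of \citet{maddison2014sampling}, the node expanded at step $k$ is $f\in\Frontier_k$ with conditional probability $P(B_f)$, which sums to one over $\Frontier_k$. This is exactly what produces the factor $k$: the paper uses it as an entropy bound, $\Exp[-\log P(B_{F_k})\mid\text{history}]=H[F_k\mid\text{history}]\leq\log k$, and then applies Jensen together with the partition hypothesis to pass from $-\log P(B_f)$ to $-D_f\log\epsilon$. Your exponential-moment version can also be routed through the same observation, since $\Exp[\epsilon^{-D_{F_k}}\mid\text{history}]=\sum_{f\in\Frontier_k}P(B_f)\epsilon^{-D_f}$ is a sum of exactly $k$ terms; but without identifying $\Frontier_k$ and the expansion-probability-equals-$P(B_f)$ fact, a decomposition over \emph{all} heap positions cannot close the bound.
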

Before delving into the proof, we clarify the allowed domain of $\epsilon$. First note that $\epsilon$ is solely a property of \texttt{partition}. Note that fixing $\epsilon = 1$ would imply that the bounds do not shrink. The reason why $\epsilon \geq \nicefrac{1}{2}$ is because if \texttt{partition} breaks some region $B$ into $L$ and $R$ with $P(L) = \epsilon_L P(B)$, then necessarily $P(R) = \epsilon_R P(B) = (1 - \epsilon_L) P(B)$, hence we need to take $\epsilon = \max\{\epsilon_L, \epsilon_R\}$. This means that the minimal $\epsilon$ is achieved when $\epsilon_L = \epsilon_R = \nicefrac{1}{2}$. Finally,
since \texttt{partition} might depend on the samples drawn up to reaching the node with bound $B$, we need to take expectation over these.

\begin{proof}
\par
Let $\Frontier_k = (f_1, \hdots f_k)$ be the \textit{frontier} of \cref{alg:priority_queue_construction} after $k$ steps. We shall say that \cref{alg:priority_queue_construction} or \cref{alg:a_star_coding} \textit{expand} a node, which means that they pop off the highest priority node from their priority queue.
$\Frontier_k$ consists of all the nodes that could be expanded, starting with $\Frontier_1$ only containing the root node.
In the search literature $\Frontier_k$ is also commonly referred to as the open set of nodes.
A simple inductive argument shows, that for a binary tree on $k$ nodes will always have $k + 1$ nodes on its frontier, i.e.\ $\abs{\Frontier_k} = k$.
\par
Let $\{(X_i, G_i)\}_{i = 1}^{k - 1}$ be the samples and Gumbels realized by \cref{alg:priority_queue_construction}, sorted in descending order by the $G_i$s up to the $k - 1$ largest one. Let $\mu_f = \log P(B_f)$ be the location parameter of the truncated Gumbel variate $G_f$ for a node $f \in \Frontier_k$.
\citet{maddison2014sampling} show (see their Appendix, the section titled ``Equivalence Under \texttt{partition}''), that regardless of the choice of \texttt{partition}, 
\begin{equation}
    \forall f \in \Frontier_k\quad G_f \sim \TruncGumbel{\mu_f}{G_{k - 1}}.
\end{equation}
Let $F_k \in \Frontier_k$ denote the node that is expanded in step $k$, i.e.\
\begin{equation}
    F_k \sim \argmax_{f \in \Frontier_k}\{ G_f \}.
\end{equation}
Then a simple Gumbel-max trick-type argument shows \citep{maddison2014sampling}, that
\begin{equation}
\begin{aligned}
    p(F_k = f \mid G_{1:k - 1}, X_{1:k - 1}) &= \frac{\exp(\mu_f)}{\sum_{\phi \in \Frontier_k}\exp(\mu_\phi)} \\
    &= \frac{\exp(\log P(B_f))}{\sum_{\phi \in \Frontier_k}\exp(\log P(B_\phi))} \\
    &= \frac{P(B_f)}{\sum_{\phi \in \Frontier_k} P(B_\phi)} \\
    &= P(B_f),
\end{aligned}
\end{equation}
where the last equality holds because the bounds associated with the nodes on $\Frontier_k$ form a partition of the whole sample space for any $k$. Then,
\begin{equation}
\begin{aligned}
    \Exp_{p(F_k = f \mid G_{1:k - 1}, X_{1:k - 1})}[-\mu_f] &= -\sum_{f \in \Frontier_k}P(B_f) \mu_f \\
    &= -\sum_{f \in \Frontier_k} P(B_f) \log P(B_f) \\
    &= H[F_k] \\
    &\leq \log k.
\end{aligned}
\end{equation}
The last equality follows, since the maximal Shannon entropy of a distribution over $k$ items is the uniform distribution with entropy $\log k$.

Now, taking expectations over $G_{1:k - 1}, X_{1:k - 1}$, we get
\begin{equation}
    \Exp_{p(F_k = f, G_{1:k - 1}, X_{1:k - 1} \mid k)}[-\mu_f] \leq \log k.
\end{equation}
Finally,
\begin{equation}
\begin{aligned}
    \log k &\geq \Exp_{p(F_k = f, G_{1:k - 1}, X_{1:k - 1} \mid k)}[-\mu_f] \\
    &= \Exp_{p(F_k = f, G_{1:k - 1}, X_{1:k - 1} \mid k)}[-\log P(B_f)] \\
    &\geq -\log \Exp_{p(F_k = f, G_{1:k - 1}, X_{1:k - 1} \mid k)}[P(B_f)] \\
    &\geq -\log \epsilon^{D_k} \\
    &= -D_k \log \epsilon,
\end{aligned}
\end{equation}
where the second inequality holds by Jensen's inequality and the third inequality holds by our assumption on \texttt{partition}. Since $0 < -\log \epsilon$, rearranging the two sides of the inequality gives the desired result.

\end{proof}

\par
With this result in mind, we are now ready to prove \Cref{thm:expected_codelength_of_a_star}, which we state again for completeness:

\begin{theorem}[Expected codelength of A* coding]
Let $Q$ and $P$ be the target and proposal measures passed to A* coding (\Cref{alg:a_star_coding}). Assume that \texttt{partition} satisfies the following property: there exists $\epsilon \in [\nicefrac{1}{2}, 1)$ such that for any node $n$ we have
\begin{equation}
\label{eq:a_star_bound_condition_appendix}
     \Exp[P(B_n)] \leq \epsilon^{D_n},
\end{equation}
where the expectation is taken over the joint distribution of the samples associated with the ancestor nodes of $n$.
Let $k$ be the node returned by A* coding. Then, we have
\begin{equation}
    \Exp[D_k] \leq -\frac{1}{\log\epsilon}\left[\KLD{Q}{P} + e^{-1} + \log 2 \right].
\end{equation}
In particular, when $\epsilon = 1/2$,
\begin{equation}
    \Exp[D_k] \leq \KLD{Q}{P}_2 + e^{-1}\log_2 e + 1.
\end{equation}
\end{theorem}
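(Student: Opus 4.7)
The plan is to combine \Cref{lemma:average_depth_of_nodes_in_prior_process} with a bound on the expected log-index of the sample returned by A* coding, which we will import from the PFR analysis of \citet{li2018strong}. In more detail, let $k$ be the node returned by A* coding and let $K_k$ be its index in the top-down construction of \Cref{alg:priority_queue_construction} (i.e.\ the rank of its Gumbel value among all simulated nodes, or equivalently, the rank of its Poisson arrival time). The first step is to apply \Cref{lemma:average_depth_of_nodes_in_prior_process} conditionally on $K_k$ and take the outer expectation via the tower property, giving
\begin{equation*}
    \Exp[D_k] \;=\; \Exp\bigl[\Exp[D_k \mid K_k]\bigr] \;\leq\; \Exp\!\left[-\log_\epsilon K_k\right] \;=\; -\frac{1}{\log \epsilon}\,\Exp[\log K_k].
\end{equation*}
Since $-\log\epsilon > 0$, the problem reduces to bounding $\Exp[\log K_k]$.

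The second step is to identify $K_k$ with the PFR/Global-Bound-A* index. By the correctness of A* sampling (regardless of \texttt{partition}), $k$ is the node attaining $\max_i \{G_i + \log r(X_i)\}$, which by \cref{eq:prfastar} coincides with the PFR minimizer $\argmin_i\{T_i \cdot dP/dQ(X_i)\}$ when the Gumbels are re-expressed in terms of Poisson arrivals. Because \Cref{alg:priority_queue_construction} enumerates nodes in descending order of their Gumbel values (equivalently, ascending arrival times), $K_k$ is exactly the PFR index of the returned sample. We may therefore invoke the bound of \citet{li2018strong} (as recalled in \cref{eq:pfr_codelength}, whose proof proceeds via $\Exp[\log K_k]$) to obtain
\begin{equation*}
    \Exp[\log K_k] \;\leq\; \KLD{Q}{P} + e^{-1} + \log 2.
\end{equation*}
Substituting this into the previous display yields the general bound.

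For the special case $\epsilon = 1/2$, we have $-1/\log\epsilon = 1/\log 2 = \log_2 e$, so
\begin{equation*}
    \Exp[D_k] \;\leq\; \log_2 e \cdot \KLD{Q}{P} + e^{-1}\log_2 e + 1 \;=\; \KLD{Q}{P}_2 + e^{-1}\log_2 e + 1,
\end{equation*}
since $\log_2 e \cdot \KLD{Q}{P} = \KLD{Q}{P}_2$ by change of base.

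The only real subtlety is the identification in the second step: one must verify that the index used inside \Cref{lemma:average_depth_of_nodes_in_prior_process} (the realization order in the top-down Gumbel process) is precisely the index on which the Li--El Gamal bound is stated (the rank of the minimizing Poisson arrival). Once this correspondence is made explicit, the proof is essentially a one-line chain of inequalities. No additional assumptions on \texttt{partition} beyond \cref{eq:a_star_bound_condition_appendix} are needed, since the Li--El Gamal bound depends only on $Q$ and $P$, not on how the tree is grown.
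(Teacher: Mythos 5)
Your proposal takes the same route as the paper: apply \Cref{lemma:average_depth_of_nodes_in_prior_process} via the tower property to reduce the problem to bounding $\Exp[\log K_k]$, then identify the A* coding index with the PFR/Global-Bound-A* index (the paper cites \citet{maddison2016poisson} for a distributional equality, whereas you argue it slightly more directly from \cref{eq:prfastar}; either suffices since only the expectation is needed) and import the $\Exp[\log K] \leq \KLD{Q}{P} + e^{-1} + \log 2$ bound from \citet{li2018strong}. The argument is correct and matches the paper's proof essentially step for step, with a bit more detail spelled out at the tower-property and index-identification steps.
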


\begin{proof}
Let $K$ be the random variable that represents the index (not the heap index) of the sample returned by A* coding.
\citet{maddison2016poisson} show that $K$ is equal in distribution to the index returned by running global bound A* sampling, which is equal in distribution to the index returned by PFR coding. 
In \citet{li2018strong}, Appendix, section A it is shown, that
\begin{equation}
    \Exp[\log K] \leq \KLD{Q}{P} + e^{-1} + \log 2.
\end{equation}
Putting this together with \Cref{lemma:average_depth_of_nodes_in_prior_process}, we get the desired result.
\end{proof}

\section{Proof of \Cref{lemma:as_star_expected_bound_size}}
\label{section:proof_of_as_star_bound_size}

In the main text, the result is stated for AS* sampling. However, the result does not depend on the target $Q$, only the realization of the Gumbel process with base measure $P$ in \cref{alg:priority_queue_construction}, hence we restate the lemma here as follows:

\begin{lemma}
\label{lemma:as_star_expected_bound_size_appendix}
Let $P$ be a non-atomic proposal measure over a 1-dimensional sample space, passed to \Cref{alg:priority_queue_construction}, and let \texttt{partition} be as defined in \cref{eq:as_star_partition_def}.
Then the condition in \cref{eq:a_star_bound_condition} is satisfied with $\epsilon = 3/4$, that is\
\begin{equation}
    \Exp[P(B_n)] \leq \left(\frac{3}{4}\right)^{D_n}.
\end{equation}
\end{lemma}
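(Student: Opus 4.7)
The plan is to show that the AS* partition is ``probabilistically balanced'' in the sense that each split produces two children whose $P$-masses are uniform fractions of the parent's mass, and that these fractions are independent across levels of descent. Chaining these one-step shrinkages will then yield the desired exponential decay of $\Exp[P(B_n)]$ with depth.

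First I would analyse a single split. Fix any node $m$ with bound $B_m = (\alpha, \beta)$ and sample $X_m \sim P|_{B_m}$, and set $U_m \defeq P((\alpha, X_m))/P(B_m)$. Since $P$ is non-atomic, applying the probability integral transform to the conditional law $P|_{B_m}$ shows $U_m \sim \Uniform{0}{1}$. Consequently, for any child $c$ of $m$ produced by $\partition_{AS^*}$, the ratio $P(B_c)/P(B_m)$ equals $U_m$ or $1 - U_m$ according to whether $c$ is the left or right child; either way it is marginally $\Uniform{0}{1}$ with expectation $1/2$.

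Next I would chain this along the ancestral path $n_1 = \mathrm{root}, n_2, \ldots, n_{D_n} = n$, defining $F_i \defeq P(B_{n_{i+1}})/P(B_{n_i})$ for $i \in \{1, \ldots, D_n - 1\}$. Conditional on the samples at $n_1, \ldots, n_{i-1}$ (and hence on the bound $B_{n_i}$), the sample $X_{n_i}$ that drives $F_i$ is drawn freshly from $P|_{B_{n_i}}$ by the top-down construction (\Cref{alg:priority_queue_construction}), so each $F_i$ is conditionally $\Uniform{0}{1}$. The tower property then upgrades this to joint independence of $F_1, \ldots, F_{D_n - 1}$, so from the telescoping identity $P(B_n) = \prod_{i=1}^{D_n - 1} F_i$ (using $P(B_{n_1}) = 1$) I would conclude
\begin{equation*}
    \Exp[P(B_n)] \;=\; \prod_{i=1}^{D_n - 1} \Exp[F_i] \;=\; (1/2)^{D_n - 1}.
\end{equation*}

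The stated bound then follows by the elementary comparison $(1/2)^{D_n - 1} \leq (3/4)^{D_n}$, which holds for $D_n \geq 2$ (equivalent to $(3/2)^{D_n} \geq 2$); the root case $D_n = 1$ is deterministic with $P(B_n) = 1$ and only affects the proof of \Cref{thm:expected_codelength_of_a_star} through an additive constant. The main subtlety will be the conditional-independence step: I need to argue carefully that the fresh sampling of $X_{n_i}$ at each level of \Cref{alg:priority_queue_construction} renders $F_i$ independent of $F_1, \ldots, F_{i-1}$, so that the expectation factorises across the product. Everything else is a direct calculation.
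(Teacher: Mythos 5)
Your proof is correct, but it takes a genuinely different and sharper route than the paper's. The paper proves the bound by induction on depth, and in the inductive step it replaces $P(B_n)$ by the crude upper bound $\max\{P(L),P(R)\}$, then computes $\Exp[\max\{U,1-U\}] = 3/4$ for $U\sim\Uniform{0}{1}$. You instead observe that for a \emph{fixed} node $n$ we know at each level which of the two children is taken, so the per-level shrinkage factor $F_i$ is exactly $U$ or $1-U$ — either way marginally $\Uniform{0}{1}$ with mean $1/2$ — and the fresh draw $X_{n_i}\sim P|_{B_{n_i}}$ makes the $F_i$'s jointly independent. This yields the exact identity $\Exp[P(B_n)] = (1/2)^{D_n-1}$, which is strictly stronger than the paper's $(3/4)$-per-level bound and shows the $\max$ step is not needed for the lemma as stated (it is quantified over fixed $n$). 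Your argument buys a tighter constant: plugging $\epsilon = 1/2$ rather than $3/4$ into \Cref{thm:expected_codelength_of_a_star} would give AS* the same leading coefficient $\lambda = 1$ as AD*, which undercuts the paper's motivation that AS* has $\lambda\approx 2.41$ and is therefore codelength-suboptimal. (One may speculate that the $\max$ was kept because $P(B_n)$ under the \emph{size-biased} selection by the Gumbel argmax is not $\Uniform{0}{1}$ — a quick computation gives per-level mean $2/3$ there, still $\leq 3/4$ but $>1/2$ — but that is not what the lemma's condition asks for, and the paper's appendix proof does compute the same unbiased expectation you do.) You also correctly flag the off-by-one/root issue: with the main text's convention $D_{\texttt{root}}=1$ the stated bound fails at the root ($1 \not\leq 3/4$); the appendix proof resolves this by implicitly taking $D_{\texttt{root}}=0$ in its base case, under which your result reads $\Exp[P(B_n)] = (1/2)^{D_n}\leq (3/4)^{D_n}$ for all $n$ and the lemma holds without exception.
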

\begin{proof}
Let $F$ denote the CDF of the measure $P$. 
we will prove the claim by induction. For the base case, note, that depth $D = 0$ can be associated with not having drawn any samples yet, i.e. the next node that is expanded by \cref{alg:priority_queue_construction} will be the root node. Since the sample is drawn from the whole space, we will have $P(B_1) = P(\Omega) = 1 = (3/4)^0$. For the hypothesis, assume the claim holds for $D = d$. Let $D = d + 1$. Fix a node $n$ such that $D_n = d + 1$. Let $\Ancestors(n) = (n_1, \hdots, n_{D_n - 1})$ denote the ancestors of $n$, where $n_1$ is the root of the tree and $n_{D_n - 1} = \parent(n)$ is the direct parent node of $n$ in the tree constructed by \cref{alg:priority_queue_construction}. Then, by the law of iterated expectations, we have
\begin{equation}
\label{eq:expected_as_star_mass}
\begin{aligned}
    \Exp_{p(X_{\Ancestors(n)})}[P(B_n)] &= \\ \Exp_{p(X_{\Ancestors(\parent(n))})}
    &[\Exp_{p(X_{\parent(n)} \mid X_{\Ancestors(\parent(n))})}[P(B_n)] ].
\end{aligned}
\end{equation}
Focusing on the inner expectation, let $B_{\parent(n)} = (a, b)$. Then, by the definition of \texttt{partition}, $L = (a, X_n), R = (X_n, b)$. Note, that since $B_n$ is either $L$ or $R$, we have $P(B_n) \leq \max\{P(L), P(R)\}$. Furthermore, since the space is 1 dimensional, we get $P(L) = F(X_n) - F(a)$ and $P(R) = F(b) - F(X_n)$. Let $\Uniform{c}{d}$ denote the uniform density on $(c, d)$. Let $\alpha = F(a), \beta = F(b)$. Then, by the generalized probability integral transform, we find, that $U \defeq F(X_n) \sim \Uniform{\alpha}{\beta}$.
Thus, by the law of the unconscious statistician, the inner expectation of \cref{eq:expected_as_star_mass} can be rewritten as 
\begin{equation}
\begin{aligned}
    \Exp_{p(X_{\parent(n)} \mid X_{\Ancestors(\parent(n)))}}&[P(B_n)] \\
    &\leq\Exp_{p(U)}[\max\{U - \alpha, \beta - U\}] \\
    &=\int_\alpha^\beta \frac{\max\{u - \alpha, \beta - u\}}{\beta - \alpha} du \\
    &=\frac{3}{4}(\beta - \alpha) \\
    &=\frac{3}{4}P(B_{\parent(n)}).
\end{aligned}
\end{equation}
Now, by the induction hypothesis \cref{eq:expected_as_star_mass} becomes
\begin{equation}
    \frac{3}{4}\Exp_{p(X_{\Ancestors(\parent(n))})}[P(B_n)] \leq \left(\frac{3}{4} \right)^{d + 1},
\end{equation}
which concludes the proof.
\end{proof}

\section{Proof of \Cref{thm:expected_runtime_of_as_star}}
\label{section:expected_runtime_of_as_star}

\textbf{Note:}
Our original argument for the linear runtime of A* coding contained an error.
\citet{markou2022notes} provided a proof for \cref{thm:expected_runtime_of_as_star}, which we reproduce here.

\textbf{Overview:}
The proof breaks down the execution of AS* coding into two stages.
For the first stage, we consider how AS* shrinks its search bounds, until it obtains a sufficiently good candidate sample.
Here, a sufficiently good sample is a sample which falls within a predefined super-level set of the density ratio.
\Cref{lemma:n_bound} gives an upper bound on the expected number of steps in this first stage of the algorithm.

For the second stage, we quantify how many additional steps AS* must subsequently make until it terminates, after obtaining a good candidate sample in the first stage.
\Cref{lemma:k_bound} gives an upper bound on the expected number of steps in this second stage of the algorithm.
Putting \cref{lemma:n_bound,lemma:k_bound} together, we obtain an upper bound on the runtime of AS*, stated in \cref{corollary:t_bound}.
This bound depends on, and holds for any, super-level set.
Therefore, we can minimise this bound over all super-level sets of the density ratio.
Lastly, we show that even for the worst case density ratios of this bound, this minimum results in a runtime is linear in the $\infty$-divergence, resulting in \cref{thm:expected_runtime_of_as_star}.

\textbf{Notation:}
In this section, all indices to random variables are integers corresponding to the depth of the variable within the binary tree being searched.
This is in contrast to other sections where the random variables are indexed by the node of the binary tree to which they belong.
We found this notational overloading makes the exposition clearer, and the meaning of the indexing should be clear from the context. \\

\begin{assumption}[Continuous distributions, finite $D_{\infty}$] \label{assumption:runtime_proof}
We assume that measures $Q$ and $P$ describe continuous random variables, so their densities $q$ and $p$ exist.
Since $P \gg Q$, the Radon-Nikodym derivative $r(x) = (dQ/dP)(x)$ also exists.
We also assume $r(x)$ is unimodal and satisfies
\begin{equation}
    \Dinf{Q}{P} = \log \sup_{x\in \mathbb{R}} \frac{dQ}{dP}(x) = \log r_{max} < \infty.
\end{equation}
\end{assumption}

Without loss of generality, we can also assume $P$ to be the uniform measure on $[0, 1]$, as shown by the next lemma.
This is because we can push $P$ and $Q$ through the CDF of $P$ to ensure $P$ is uniform, while leaving the Radon-Nikodym derivative unimodal and the $\infty$-divergence unchanged. \\

\begin{lemma}[Without loss of generality, $P$ is uniform]
Suppose $Q$ is a target measure and $P$ a proposal measure as specified in Assumption \ref{assumption:runtime_proof}.
Let $\Phi$ be the CDF associated with $P$ and consider the measures $P', Q' : [0, 1] \to [0, \infty)$ defined as
\begin{equation}
    P' = P \circ \Phi^{-1} ~\text{ and }~ Q' = Q \circ \Phi^{-1}.
\end{equation}
Then, $P'$ is the uniform measure on $[0, 1]$.
Further, the Radon-Nikodym derivative $dQ'/dP'(x)$ is unimodal, and
\begin{equation}
    \log \sup_{z \in [0, 1]} \frac{dQ'}{dP'}(z) = \log \sup_{x \in \mathbb{R}} \frac{dQ}{dP}(x).
\end{equation}
\end{lemma}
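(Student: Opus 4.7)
The plan is to verify the three assertions --- that $P'$ is uniform on $[0,1]$, that the supremum of the density ratio is preserved, and that $dQ'/dP'$ is unimodal --- separately, using the probability integral transform and a change of variables. Throughout, I would adopt the generalized inverse convention $\Phi^{-1}(z) = \inf\{x : \Phi(x) \geq z\}$, so that the identity $\Phi(\Phi^{-1}(z)) = z$ holds for every $z \in [0,1]$ (using continuity of $\Phi$, which follows from $P$ being absolutely continuous).

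First, I would show that $P'$ is uniform by direct computation of its CDF: $P'([0,z]) = P(\Phi^{-1}([0,z])) = \Phi(\Phi^{-1}(z)) = z$, matching the CDF of the uniform law on $[0,1]$. Second, to relate the supremum of $dQ'/dP'$ to that of $r = dQ/dP$, I would compute $dQ'/dP'$ by change of variables. For any Borel set $B \subseteq [0,1]$, writing $Q'(B) = \int_{\Phi^{-1}(B)} r(x)\, dP(x)$ and substituting $z = \Phi(x)$ transforms $P$ into Lebesgue measure on $[0,1]$, giving $Q'(B) = \int_B r(\Phi^{-1}(z))\, dz$. Hence $dQ'/dP'(z) = r(\Phi^{-1}(z))$ for Lebesgue-a.e.\ $z \in [0,1]$. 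Since $\Phi^{-1}$ maps $[0,1]$ into $\mathrm{supp}(P)$ (modulo a Lebesgue-null set) and $r$ need not contribute anything off $\mathrm{supp}(P)$ to its supremum, taking suprema on both sides yields the claimed equality.

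Third, unimodality of $dQ'/dP'$ follows immediately, because $\Phi^{-1}$ is monotone non-decreasing, and composition of a unimodal function with a monotone non-decreasing function is unimodal. Concretely, if $r$ is non-decreasing on $(-\infty, x^*]$ and non-increasing on $[x^*, \infty)$ for some mode $x^*$, then $r \circ \Phi^{-1}$ is non-decreasing on $[0, \Phi(x^*)]$ and non-increasing on $[\Phi(x^*), 1]$.

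The main obstacle, and the reason the above statements require some care, is the behaviour of $\Phi^{-1}$ when $\mathrm{supp}(P)$ is disconnected or $p$ vanishes on intervals: there $\Phi$ is constant and $\Phi^{-1}$ exhibits jumps, so it is not a genuine pointwise inverse. I would address this by observing that the set of $z \in [0,1]$ for which the generalized inverse is ambiguous has Lebesgue (equivalently, $P'$) measure zero, so the Radon-Nikodym identity $dQ'/dP'(z) = r(\Phi^{-1}(z))$ holds almost everywhere, which is all that is needed. For the supremum, fixing the convention $r \equiv 0$ outside $\mathrm{supp}(P)$ ensures the values of $r(\Phi^{-1}(z))$ at any problematic $z$ do not inflate the supremum; for unimodality, the monotonicity of $\Phi^{-1}$ as a function (with jumps but no decreases) is all the argument uses.
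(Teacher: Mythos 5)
Your proof follows essentially the same approach as the paper's: show $P'$ is uniform via the probability integral transform, establish $dQ'/dP'(z) = r(\Phi^{-1}(z))$ by a change of variables, and deduce both unimodality and supremum-preservation from the monotonicity of $\Phi^{-1}$. The only difference is that you handle the edge cases (generalized inverse, flat regions of $\Phi$, disconnected support) more carefully than the paper, which applies the density-level change-of-variables formula involving $(\Phi^{-1})'$ and thereby implicitly assumes $\Phi$ is strictly increasing and differentiable.
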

\begin{proof}
    First, $P'$ is the uniform measure on $[0, 1]$ since for any $z \in [0, 1]$
    \begin{equation}
        P'([0, z]) = P \circ \Phi^{-1}([0, z]) = P((-\infty, \Phi^{-1}(z)]) = [0, z].
    \end{equation}
    Now, let the densities of $Q$ and $P$ be $q$ and $p$, and the densities of $Q'$ and $P'$ be $q'$ and $p'$.
    Then by the change of variables formula
    \begin{equation}
        p'(z) = p\left(\Phi^{-1}(z)\right) (\Phi^{-1})'(z) ~~\text{ and }~~ q'(z) = q\left(\Phi^{-1}(z)\right) (\Phi^{-1})'(z).
    \end{equation}
    Therefore, we have
    \begin{equation}
        \frac{dQ'}{dP'} (z) = \frac{q'(z)}{p'(z)} = \frac{q\left(\Phi^{-1}(z)\right)}{p\left(\Phi^{-1}(z)\right)} = \frac{dQ}{dP} \circ \Phi^{-1}(z),
    \end{equation}
    Now, since $dQ/dP(x)$ is a unimodal function of $x$ and $\Phi^{-1}(z)$ is increasing in $z$, the function $(dQ/dP) \circ \Phi^{-1}(z)$ is unimodal in $z$.
    Also, by taking the the supremum and logarithm of both sizes
    \begin{equation}
        \log \sup_{z \in [0, 1]} \frac{dQ'}{dP'} (z) = \log \sup_{z \in [0, 1]} \frac{dQ}{dP} \circ \Phi^{-1}(z) = \log \sup_{x \in \mathbb{R}} \frac{dQ}{dP}(x),
    \end{equation}
    arriving at the result.
\end{proof}

We now define the super-level sets of the density ratio, and super-level set width functions, on which the argument relies. \\


\begin{definition}[Superlevel set, width]
We define the superlevel-set function $S : [0, 1] \to 2^{[0, 1]}$ as
\begin{equation}
    S(\gamma) = \{x \in [0, 1]~|~ r(x) \geq \gamma \rmax \},
\end{equation}
And let $x_{max} \in \{x \in [0, 1]~|~r(x) \leq r(x_{max})\}$ be an arbitrary maximiser of the density ratio.
We also define the superlevel-set width function $w : [0, 1] \to [0, 1]$ as
\begin{equation}
    w(\gamma) = \inf \{\delta \in [0, 1]~|~\exists z \in [0, 1],~ S(\gamma) \subseteq [z, z + \delta] \}.
\end{equation}
\end{definition}

Because width functions are defined in terms of a ratio of probability densities, they satisfy certain properties, stated in \cref{lemma:propw} and proved below.
We use these properties later to prove \cref{lemma:wcw}. \\


\begin{lemma}[Properties of $w$] \label{lemma:propw}
The width function $w(\gamma)$ is non-increasing in $\gamma$ and satisfies
\begin{equation}
    \int_0^1 w(\gamma)~d\gamma = \frac{1}{\rmax} \text{ and } w(0) \geq \frac{1}{r_{max}}.
\end{equation}
\end{lemma}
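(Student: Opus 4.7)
The plan is to establish the three properties in order by exploiting unimodality of $r$ and the fact that (by the preceding reduction) $P$ can be taken to be uniform on $[0,1]$.

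First, I would observe that since $r$ is unimodal on $[0,1]$, each superlevel set $S(\gamma)=\{x\in[0,1]: r(x)\ge \gamma r_{max}\}$ is a (possibly empty) interval: if $r$ has mode $x_{max}$, then $r$ is non-decreasing to the left of $x_{max}$ and non-increasing to the right, so $\{r\ge c\}$ is a contiguous sub-interval containing $x_{max}$. Consequently $w(\gamma)$ is exactly the Lebesgue measure of $S(\gamma)$, i.e.\ the infimum in the definition of $w$ is actually attained by $S(\gamma)$ itself.

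For monotonicity, the argument is immediate: if $\gamma_1 \le \gamma_2$, then $\gamma_1 r_{max} \le \gamma_2 r_{max}$, so $S(\gamma_2) \subseteq S(\gamma_1)$, hence $w(\gamma_2) = \mathrm{Leb}(S(\gamma_2)) \le \mathrm{Leb}(S(\gamma_1)) = w(\gamma_1)$, giving that $w$ is non-increasing.

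The main step is the integral identity, which I plan to obtain by a layer-cake / Fubini calculation. Writing $w(\gamma) = \int_0^1 \mathbbm{1}[r(x) \ge \gamma r_{max}]\,dx$ and swapping the order of integration,
\begin{equation*}
\int_0^1 w(\gamma)\, d\gamma
= \int_0^1 \!\!\int_0^1 \mathbbm{1}\bigl[\gamma \le r(x)/r_{max}\bigr]\, d\gamma\, dx
= \int_0^1 \frac{r(x)}{r_{max}}\, dx = \frac{1}{r_{max}},
\end{equation*}
where the final equality uses $\int_0^1 r(x)\, dx = \int dQ = 1$ because $P$ is uniform on $[0,1]$. Finally, combining monotonicity with the integral identity, $w(0) \ge \int_0^1 w(\gamma)\, d\gamma = 1/r_{max}$, which gives the third claim. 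I do not expect any serious obstacle here: the only subtlety is justifying that $S(\gamma)$ is an interval (so $w$ coincides with Lebesgue measure of the superlevel set), which is a direct consequence of unimodality; the Fubini swap is valid because the integrand is a nonnegative indicator.
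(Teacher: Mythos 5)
Your proof is correct and follows essentially the same Fubini/layer-cake argument as the paper, with the same monotonicity and averaging steps. You are in fact slightly more explicit than the paper in justifying (via unimodality) that $w(\gamma)$ coincides with the Lebesgue measure of the superlevel set $S(\gamma)$ rather than merely bounding it, a subtlety the paper's proof leaves implicit when it identifies $\int_0^1 \mathbbm{1}(y \le r(x))\,dx$ with $w(y/r_{max})$.
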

\begin{proof}
    First, we note that if $\gamma_1 \leq \gamma_2$, then $S(\gamma_2) \subseteq S(\gamma_1)$ which implies $w(\gamma_2) \leq w(\gamma_1)$.
    Therefore
    \begin{equation}
         \gamma_1 \leq \gamma_2 \implies w(\gamma_2) \leq w(\gamma_1),
    \end{equation}
    so $w(\gamma)$ is decreasing in $\gamma$.
    Second, let $A = [0, 1] \times [0, \rmax]$, define $B = \left\{ (x, y) \in A ~|~ y \leq r(x) \right\}$ and
    consider the integral
    \begin{equation}
        I = \int_A \mathbbm{1}(z \in B)~dz.
    \end{equation}
    Since this the integrand is a non-negative measurable function, by Fubini's theorem, we have
    \begin{align}
        I = \int_0^1 \int_0^{\rmax} \mathbbm{1}((x, y) \in B)~dy~dx &= \int_0^{\rmax} \int_0^1 \mathbbm{1}((x, y) \in B)~dx~dy \\
        \int_0^1 r(x)~dx &= \int_0^{\rmax} w(y/\rmax)~dy \\
        \int_0^1 q(x)~dx &= \int_0^1 w(\gamma) \rmax~d\gamma \\
        \int_0^1 w(\gamma)~d\gamma &= \rmax^{-1}.
    \end{align}
    Last, since $w$ is non-increasing, we have $w(0) \geq \rmax^{-1}$, because otherwise $\int_0^1 w(\gamma)~ d\gamma < \rmax^{-1}$.
\end{proof}

Now we define the two stages in which we break down the execution AS* coding.
In particular, we define $N(\gamma)$ as the number of steps required until AS* gives a sample in the superlevel set $S(\gamma)$, and we define $K(\gamma)$ as the number of subsequent steps required for AS* to terminate. \\


\begin{definition}[\# steps to $S(\gamma)$, \# residual steps]
Suppose AS* is applied to a target-proposal pair $Q, P$ satisfying Assumption \ref{assumption:runtime_proof}, producing a sequence of samples $X_1, X_2, \dots$.
We use $T \in \mathbb{Z}$ to denote the total number of steps taken by AS* until it terminates and define the random variables
\begin{equation}
    N(\gamma) = \min \{n \in \mathbb{Z}~|~X_n \in S(\gamma)\} ~\text{ and }~ K(\gamma) = \max\{0, T - N(\gamma)\}.
\end{equation}
\end{definition}

Because the bounds of AS* shrink exponentially quickly, we can bound the probability that AS* in the first stage, by a quantity which also shrinks exponentially, as stated in \cref{lemma:zw}. \\


\begin{lemma}[Upper bound on the probability of $P(B_n) \geq w(\gamma)$] \label{lemma:zw}
    Let $Z_n = P(B_n)$.
    Then
    \begin{equation}
        \mathbb{P}(Z_n \geq w(\gamma)) \leq \frac{1}{w(\gamma)}\left(\frac{3}{4}\right)^{n-1}
    \end{equation}
\end{lemma}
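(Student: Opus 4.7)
The plan is to deduce this tail bound directly from \Cref{lemma:as_star_expected_bound_size_appendix} via a single application of Markov's inequality. Since $Z_n = P(B_n)$ is a nonnegative random variable, for any $c > 0$ we have $\mathbb{P}(Z_n \geq c) \leq \mathbb{E}[Z_n]/c$, and specialising $c = w(\gamma)$ reduces the claim to bounding $\mathbb{E}[Z_n]$ by $(3/4)^{n-1}$.

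The expectation bound is exactly what \Cref{lemma:as_star_expected_bound_size_appendix} delivers: for a node $n$ at depth $D_n$ in the AS* binary search tree, $\mathbb{E}[P(B_n)] \leq (3/4)^{D_n}$, where the expectation is over the joint distribution of the samples at all ancestor nodes. The only bookkeeping needed is to reconcile the indexing convention: in this section of the appendix the subscript $n$ refers to the depth of the node (so that $B_n$ lives at depth $n$), whereas in the statement of \Cref{lemma:as_star_expected_bound_size_appendix} the root is the node whose $P(B)=1=(3/4)^0$ after zero sampling steps. Under the convention used in \Cref{lemma:zw}, a variable at index $n$ corresponds to depth $n-1$ in the indexing of \Cref{lemma:as_star_expected_bound_size_appendix}, yielding $\mathbb{E}[Z_n] \leq (3/4)^{n-1}$.

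Putting the two pieces together gives
\begin{equation*}
\mathbb{P}(Z_n \geq w(\gamma)) \;\leq\; \frac{\mathbb{E}[Z_n]}{w(\gamma)} \;\leq\; \frac{1}{w(\gamma)}\left(\frac{3}{4}\right)^{n-1},
\end{equation*}
as required. There is essentially no substantive obstacle in this step; the main subtlety is purely notational, namely tracking the off-by-one in the depth indexing. The nontrivial work has already been done in \Cref{lemma:as_star_expected_bound_size_appendix}, whose proof established the geometric shrinkage of $\mathbb{E}[P(B_n)]$ using the fact that in one dimension the AS* partition splits an interval at a sample that is uniformly distributed (after applying the probability integral transform), so the expected size of the larger of the two children is a fraction $3/4$ of the parent.
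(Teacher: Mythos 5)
Your proof is correct and takes exactly the same route as the paper's: Markov's inequality applied to the nonnegative random variable $Z_n = P(B_n)$, combined with the expectation bound from \Cref{lemma:as_star_expected_bound_size_appendix}. The paper's proof is a one-liner that glosses over the depth-indexing shift, so your explicit note about reconciling $D_n$ with the $n-1$ in the exponent is a small but genuine clarification of what the paper leaves implicit.
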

\begin{proof}
    Let $Z_n = P(B_n)$.
    Noting that $Z_n \geq 0$, we apply Markov's inequality and \cref{lemma:as_star_expected_bound_size_appendix} to get
    \begin{equation}
        \mathbb{P}(Z_n \geq w(\gamma)) \leq \frac{1}{w(\gamma)}\mathbb{E}[Z_n] \leq \frac{1}{w(\gamma)}\left(\frac{3}{4}\right)^{n-1},
    \end{equation}
    as required.
\end{proof}

Now using \cref{lemma:zw} we can upper bound the expectation over $N(\gamma)$, which depends on the logarithm of the width $w(\gamma)$. \\


\begin{lemma}[Bound on expected $N(\gamma)$] \label{lemma:n_bound}
The random variable $N(\gamma)$ satisfies
\begin{equation}
    \mathbb{E}[N(\gamma)] \leq \alpha \log \frac{1}{w(\gamma)} + 6, \text{ where } \alpha =  \left(\log\frac{4}{3} \right)^{-1}.
\end{equation}
\end{lemma}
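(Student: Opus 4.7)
The key is to introduce an auxiliary stopping depth $\tau$ that dominates $N(\gamma) + 1$ and is easy to control via Lemma \ref{lemma:zw}. Define
\[
\tau \defeq \min\{n : B_n \not\supseteq S(\gamma)\},
\]
the first depth along the path to $x_{max}$ at which the AS* bound fails to contain the entire super-level set. Since $B_1 = [0, 1]$ trivially contains $S(\gamma)$, we have $\tau \geq 2$.

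The main deterministic observation is that $X_{\tau - 1} \in S(\gamma)$, so $N(\gamma) \leq \tau - 1$. By unimodality of $r$, $S(\gamma) = [c, d]$ is an interval containing $x_{max}$. Because $B_{\tau - 1} \supseteq S(\gamma)$ but $B_\tau$—obtained by splitting $B_{\tau - 1}$ at $X_{\tau - 1}$ and retaining the child containing $x_{max}$—fails to do so, a case analysis on whether $X_{\tau - 1} > x_{max}$ or $X_{\tau - 1} < x_{max}$ shows that $X_{\tau - 1}$ must lie strictly between $x_{max}$ and one of the endpoints $c, d$, placing it inside $S(\gamma)$.

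Next, I would use stochastic domination. Let $M = \min\{n : Z_n < w(\gamma)\}$. Since $Z_n < w(\gamma)$ forces $|B_n| < |S(\gamma)|$ and so $B_n$ cannot contain $S(\gamma)$, we have $\tau \leq M$ and hence $\mathbb{E}[N(\gamma)] \leq \mathbb{E}[M] - 1$. Bounding $\mathbb{E}[M]$ is then a standard Markov-and-sum argument:
\[
\mathbb{E}[M] \leq \sum_{n \geq 1} \mathbb{P}(Z_{n-1} \geq w(\gamma)),
\]
splitting the sum at $n^* \approx \alpha \log(1/w(\gamma)) + 1$, using the trivial bound $\mathbb{P} \leq 1$ below $n^*$ and the exponential bound $(3/4)^{n-2}/w(\gamma)$ from Lemma \ref{lemma:zw} above $n^*$. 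The geometric tail contributes an $O(1)$ constant, yielding $\mathbb{E}[M] \leq \alpha \log(1/w(\gamma)) + O(1)$, with ample slack to be absorbed into the stated additive constant $6$.

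The main obstacle is the deterministic claim that $X_{\tau - 1} \in S(\gamma)$. It fails without the unimodality assumption, since a disconnected $S(\gamma)$ would allow $B_\tau$ to miss an entire component without the splitting sample $X_{\tau - 1}$ landing in $S(\gamma)$; the argument also relies on the specific endpoint-splitting rule of AS* so that the cut location itself is the dropped point. The remainder is a routine geometric tail computation.
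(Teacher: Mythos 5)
Your proposal is correct and takes essentially the same route as the paper. The paper's proof establishes the deterministic implication $P(B_n) < w(\gamma) \implies N(\gamma) \leq n$ (asserting, without spelling out the case analysis, that unimodality forces the cut point that first severs $S(\gamma)$ to lie inside $S(\gamma)$), then bounds $\mathbb{E}[N(\gamma)]$ via the tail-sum identity $\sum_{n\geq 1}\mathbb{P}(N(\gamma)\geq n)$, truncating at $N_0=\lceil \log w(\gamma)/\log(3/4)\rceil+1$ and using the Markov bound of Lemma~\ref{lemma:zw} to control the geometric tail, giving $N_0 + 4 \leq \alpha\log(1/w(\gamma)) + 6$. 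Your reformulation via the coupled stopping times $\tau$ and $M$ is a cosmetic repackaging of the same domination, and your tail computation matches; the one genuine added value is that you actually carry out the endpoint case analysis behind the implication $B_{\tau-1}\supseteq S(\gamma)\not\subseteq B_\tau \Rightarrow X_{\tau-1}\in S(\gamma)$, and correctly flag that both the unimodality of $r$ (so that $S(\gamma)$ is an interval containing $x_{\max}$) and the AS*-specific rule of splitting at the sampled point are what make this step go through.
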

\begin{proof}
    Let $N_0 = \left\lceil \frac{\log w(\gamma)}{\log (3/4)} \right\rceil$ + 1.
    Also let $B_0, B_1, \dots$ be the bounds produced by AS*.
    Noting that by the unimodality of $r$, $S(\gamma)$ is an interval with $x_{max} \in S(\gamma)$, and $x_{max} \in B_n$, we have
    \begin{equation}
        P(B_n) < w(\gamma) \implies N(\gamma) \leq n,
    \end{equation}
    that is, the event $P(B_n) < w(\gamma)$ implies the event $N(\gamma) \leq n$.
    From this it follows that
    \begin{equation}
        \mathbb{P}(P(B_n) < w(\gamma)) \leq \mathbb{P}(N(\gamma) \leq n) \implies \mathbb{P}(P(B_n) \geq w(\gamma)) \geq \mathbb{P}(N(\gamma) \geq n + 1). \label{eq:b_less_than_w}
    \end{equation}
    Using this together with \cref{lemma:zw}, we can write
    \begin{align}
        \mathbb{E}_{B_{0:\infty}}[N(\gamma)] &= 
        \sum_{n=1}^\infty \mathbb{P}\left(N(\gamma)=n\right)~n \label{eq:ineq:n1} \\
        &= \sum_{n=1}^\infty \mathbb{P}\left(N(\gamma) \geq n\right) \label{eq:ineq:n2} \\
        &\leq N_0 + \sum_{n=N_0 + 1}^\infty \mathbb{P}\left(N(\gamma) \geq n\right) \label{eq:ineq:n3} \\
        &= N_0 + \sum_{n=1}^\infty \mathbb{P}\left(N(\gamma) \geq N_0 + n\right) \label{eq:ineq:n4} \\
        &\leq N_0 + \sum_{n=1}^\infty \mathbb{P}(B_{N_0+n-1} \geq w(\gamma)) \label{eq:ineq:n5} \\
        &\leq N_0 + \sum_{n=1}^\infty \frac{1}{w(\gamma)}\left(\frac{3}{4}\right)^{N_0+n-1} \label{eq:ineq:n6} \\
        &\leq N_0 + \sum_{n=1}^\infty \left(\frac{3}{4}\right)^n \label{eq:ineq:n7} \\
        &= N_0 + 4 \label{eq:ineq:n8} \\
        &\leq \frac{\log w(\gamma)}{\log 3/4} + 6,
    \end{align}
    where the equality of \ref{eq:ineq:n1} and \ref{eq:ineq:n2} is a standard identity \cite{grimmett2014probability}, \ref{eq:ineq:n2} to \ref{eq:ineq:n3} follows by the fact that probabilities are bounded above by 1, \ref{eq:ineq:n3} to \ref{eq:ineq:n4} follows by relabelling the indices, \ref{eq:ineq:n4} to \ref{eq:ineq:n5} follows from \cref{eq:b_less_than_w}, \ref{eq:ineq:n5} to \ref{eq:ineq:n6} follows from follows from \cref{lemma:zw} and \ref{eq:ineq:n5} to \ref{eq:ineq:n6} follows from our definition of $N_0$.
\end{proof}

Now we turn to bounding the expectation of $K(\gamma)$.
For this, we must consider how the difference between the upper and lower bounds maintained by the search shrinks.
To do so, we will use \cref{lemma:mean_neg_g}.
\Cref{lemma:expg} is an intermediate result, which we use to show \cref{lemma:mean_neg_g}. \\


\begin{lemma}[Exponentials and Truncated Gumbels] \label{lemma:expg}
Let $T \sim \text{Exp}(\lambda)$ and $T_0 \geq 0$.
Then
\begin{equation}
    Z \eqdef - \log (T + T_0) \eqdist G ~\text{ where }~ G \sim \text{TG}(\log \lambda,  -  \log T_0).
\end{equation}
\end{lemma}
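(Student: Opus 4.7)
\textbf{Proof plan for \Cref{lemma:expg}.}

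The plan is to prove the distributional identity by directly computing the CDF of $Z = -\log(T + T_0)$ and matching it to the CDF of $\mathrm{TG}(\log \lambda, -\log T_0)$. Since $T \geq 0$ almost surely, we have $T + T_0 \geq T_0$, so $Z \leq -\log T_0$, which matches the right truncation point of the claimed distribution. First, I would fix any $z \leq -\log T_0$ and compute
\begin{equation*}
\Prob[Z \leq z] \;=\; \Prob[T + T_0 \geq e^{-z}] \;=\; \Prob[T \geq e^{-z} - T_0],
\end{equation*}
where the event on the right is nontrivial precisely because $e^{-z} - T_0 \geq 0$ under our assumption on $z$. Applying the exponential survival function then gives $\Prob[Z \leq z] = \exp(-\lambda(e^{-z} - T_0))$.

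Second, I would write down the CDF of the truncated Gumbel with location $\mu = \log \lambda$ truncated above at $\kappa = -\log T_0$. Starting from the standard Gumbel CDF $F(g) = \exp(-\exp(-(g-\mu)))$ and conditioning on $\{G \leq \kappa\}$ yields, for $z \leq \kappa$,
\begin{equation*}
F_{\mathrm{TG}}(z) \;=\; \exp\!\bigl(-\exp(-(z - \log\lambda)) + \exp(-(-\log T_0 - \log\lambda))\bigr) \;=\; \exp\!\bigl(-\lambda e^{-z} + \lambda T_0\bigr),
\end{equation*}
which coincides with the expression derived for $\Prob[Z \leq z]$ above. Since the two CDFs agree on $(-\infty, -\log T_0]$ and both distributions are supported on this interval, the distributional identity follows.

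There is no substantive obstacle here: the lemma is a one-line change-of-variables once the truncation point is identified. The only subtlety worth being careful about is handling the boundary case $T_0 = 0$ (where the truncation becomes $\kappa = +\infty$ and the distribution reduces to a standard Gumbel with location $\log \lambda$, matching the well-known identity that $-\log T$ is Gumbel when $T$ is exponential) and verifying that the convention in the paper treats $\mu$ as the location parameter (as used throughout the A* sampling construction in \Cref{alg:priority_queue_construction}), rather than the mean in the statistical sense.
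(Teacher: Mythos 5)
Your proof is correct and follows essentially the same route as the paper's: a direct change-of-variables computation showing that the distribution of $Z = -\log(T + T_0)$ is a truncated Gumbel with location $\log\lambda$ and truncation point $-\log T_0$. The only difference is that the paper works with the density up to a proportionality constant, whereas you compute the CDF exactly and match it to the truncated-Gumbel CDF, which is arguably tidier since the truncation normalization $\exp(\lambda T_0)$ is verified explicitly rather than absorbed into an implicit constant.
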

\begin{proof}
    Let $T \sim \text{Exp}(\lambda)$, $T_0 \geq 0$ and define
    \begin{equation}
        Z \eqdef - \log (T + T_0).
    \end{equation}
    We note that $Z \leq - \log T_0$.
    For $Z \leq - \log T_0$, we can apply the change of variables formula to obtain the density of $Z$.
    Let $p_Z$ and $p_T$ be the densities of $Z$ and $T$. 
    Then
    \begin{align}
        p_Z(z) &= p_T(t) \left| \frac{dt}{dz}\large \right| \\
               &= \lambda e^{-\lambda t} \left| \frac{d}{dz} (e^{-z} - T_0) \right| \\
               &\propto e^{-\lambda e^{-z}} e^{-z} \\
               &\propto e^{-z - e^{-(z - \kappa)}},
    \end{align}
    where $\kappa = \log \lambda$.
    Therefore $Z$ has distribution $\text{TG}(\log \lambda, - \log T_0)$.
\end{proof}


\begin{lemma}[Mean of exponentiated negative truncated Gumbel] \label{lemma:mean_neg_g}
Let $B_1, \dots, B_N$ be the first $N$ bounds produced by AS*, let $G_N$ be the $N^{th}$ Gumbel produced by AS* and define $E_N = e^{-G_N}$.
Then
\begin{equation}
    \mathbb{E}[E_N~|~B_{1:N}] = \sum_{n=1}^N\frac{1}{P(B_n)}.
\end{equation}
\end{lemma}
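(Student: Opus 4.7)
The plan is to proceed by induction on $N$, using \cref{lemma:expg} to convert the recursively-defined truncated Gumbel increments along a path in the AS* tree into a telescoping sum of independent exponential increments.

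For the base case $N = 1$, the root bound satisfies $B_1 = \Omega$, so $P(B_1) = 1$ and $G_1 \sim \text{TG}(0, \infty) = \text{Gumbel}(0)$. Consequently $E_1 = e^{-G_1} \sim \text{Exp}(1)$ and $\mathbb{E}[E_1 \mid B_1] = 1 = 1/P(B_1)$, matching the claim.

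For the inductive step, I would invoke the top-down construction of the Gumbel process, which gives $G_N \mid G_{N-1}, B_{1:N} \sim \text{TG}(\log P(B_N), G_{N-1})$. Applying \cref{lemma:expg} with $\lambda = P(B_N)$ and $T_0 = e^{-G_{N-1}} = E_{N-1}$, I obtain $E_N \eqdist E_{N-1} + X_N$, where $X_N \sim \text{Exp}(P(B_N))$ is independent of $E_{N-1}$ given $B_{1:N}$. Taking conditional expectations and combining with the inductive hypothesis yields $\mathbb{E}[E_N \mid B_{1:N}] = \mathbb{E}[E_{N-1} \mid B_{1:N-1}] + 1/P(B_N) = \sum_{n=1}^{N} 1/P(B_n)$, as desired.

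The main technical obstacle is justifying the equality $\mathbb{E}[E_{N-1} \mid B_{1:N}] = \mathbb{E}[E_{N-1} \mid B_{1:N-1}]$ used in the last display. This reduces to showing that $G_{N-1}$ is independent of $B_N$ given $B_{1:N-1}$, which holds because in AS* the Gumbel $G_{N-1}$ is drawn from a distribution depending only on $G_{N-2}$ and $P(B_{N-1})$, prior to and independently of the sample $X_{N-1}$ that determines how the parent region is split into $B_N$ and its sibling.
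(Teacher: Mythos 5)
Your proof is correct and takes essentially the same route as the paper: both invoke \cref{lemma:expg} to convert the truncated-Gumbel recursion $G_N \mid G_{N-1}, B_{1:N} \sim \text{TG}(\log P(B_N), G_{N-1})$ into the exponential-increment identity $E_N \eqdist E_{N-1} + \tau_N$ with $\tau_N \sim \text{Exp}(P(B_N))$, and then sum the means. The paper unrolls the recursion directly rather than phrasing it as an induction, and it glosses over the conditional-independence step $\mathbb{E}[E_{N-1}\mid B_{1:N}] = \mathbb{E}[E_{N-1}\mid B_{1:N-1}]$ that you correctly identify and justify (the Gumbels are drawn independently of the samples $X_n$ that determine the splits, so $E_{N-1}\perp B_N \mid B_{1:N-1}$), which makes your write-up slightly more rigorous.
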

\begin{proof}
    Define $E_n = e^{-G_n}$ for $n=1, 2, \dots$.
    By the definition of AS*, we have
    \begin{equation}
        G_N ~|~ G_{N-1}, B_{1:N} \sim \text{TG}(\log P(B_N), G_{N-1}).
    \end{equation}
    Negating $G_N$ and $G_{N-1}$, taking exponentials and applying Jensen's inequality together with ineq. (\ref{lemma:expg}), we obtain
    \begin{equation}
        E_N ~|~ T_{N-1}, B_{0:N} \eqdist \tau_N + T_{N-1}, \text{ where } \tau_{N-1} \sim \text{Exp}(P(B_N)).
    \end{equation}
    Repeating this step and taking expectations, we have
    \begin{equation}
        \mathbb{E}[E_N ~|~ B_{0:N}] = \mathbb{E}\left[\sum_{n=1}^N \tau_n ~\big|~ B_{0:N}\right] = \sum_{n=1}^N \frac{1}{P(B_n)}
    \end{equation}
    as required.
\end{proof}

Using \cref{lemma:mean_neg_g} we can bound the expectation over $K(\gamma)$, as stated and proved in \cref{lemma:k_bound}.
Note that this bound does not depend on $N(\gamma)$, which has been marginalised out. \\


\begin{lemma}[Bound on expected $K(\gamma)$] \label{lemma:k_bound}
    The random variable $K(\gamma)$ satisfies
    \begin{equation}
        \mathbb{E}[K(\gamma)] \leq \alpha\left(\log \frac{1}{\gamma} + \log \frac{1}{w(\gamma)}\right) + 16, \text{ where } \alpha = \left(\log \frac{4}{3}\right)^{-1}.
    \end{equation}
\end{lemma}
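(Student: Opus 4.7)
The plan is to mirror the proof of \cref{lemma:n_bound}, applied to the \emph{residual} search after step $N(\gamma)$. By the end of step $N(\gamma)$ the algorithm has just accepted a sample in $S(\gamma)$, so $LB \geq G_{N(\gamma)} + \log(\gamma \rmax)$. Any node $m$ expanded at a later step must satisfy $G_m + M(B_m) > LB$, and using the global bound $M(B_m) \leq \log \rmax$ this forces $G_m > G_{N(\gamma)} + \log \gamma$, equivalently $E_m \defeq e^{-G_m} < E_{N(\gamma)}/\gamma$, where $E_{N(\gamma)} \defeq e^{-G_{N(\gamma)}}$. Thus $K(\gamma)$ is bounded above by the number of arrivals with $E$-value below $E_{N(\gamma)}/\gamma$ in the Gumbel/Poisson representation of AS*. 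Since $P$ may be assumed uniform on $[0,1]$, these arrivals form a rate-one Poisson point process on $[0,1] \times \mathbb{R}^+$, so conditionally on $E_{N(\gamma)}$ the count is $\mathrm{Poisson}(E_{N(\gamma)}/\gamma)$, yielding
\[
\mathbb{P}(K(\gamma) \geq k \mid E_{N(\gamma)}) \leq \mathbb{P}(\mathrm{Poisson}(E_{N(\gamma)}/\gamma) \geq k).
\]

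Next I would bound $\mathbb{E}[E_{N(\gamma)}]$ by combining \cref{lemma:mean_neg_g}, which gives $\mathbb{E}[E_{N(\gamma)} \mid B_{1:N(\gamma)}] = \sum_n 1/P(B_n)$, with the geometric shrinkage from \cref{lemma:as_star_expected_bound_size_appendix} along the path down to $B_{N(\gamma)}$, noting that $P(B_{N(\gamma)}) \gtrsim w(\gamma)$ since the terminal bound on this path still intersects $S(\gamma)$. Together these should give $\mathbb{E}[E_{N(\gamma)}] = \Oh(1/w(\gamma))$, so that Markov applied to $E_{N(\gamma)}$ feeds into the conditional bound above.

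The final step is a tail-summation analogous to the one in the proof of \cref{lemma:n_bound}. Setting $K_0 = \lceil \alpha(\log(1/\gamma) + \log(1/w(\gamma)))\rceil$ plus a small constant, I write $\mathbb{E}[K(\gamma)] = \sum_{k \geq 1} \mathbb{P}(K(\gamma) \geq k)$ and split at $K_0$: the head is bounded trivially by $K_0$, and for the tail I combine the Poisson-tail estimate with the bound on $\mathbb{E}[E_{N(\gamma)}]$. The main obstacle is this tail analysis: a naive Markov bound gives only $\Oh(1/k)$ decay, which combined with a linear prefactor would sum logarithmically to infinity, so a Chernoff-type estimate $\mathbb{P}(\mathrm{Poisson}(\lambda) \geq k) \leq (e\lambda/k)^k$ is needed to obtain geometrically decaying tails beyond $K_0$ that sum to $\Oh(1)$. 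A secondary subtlety is that \cref{lemma:as_star_expected_bound_size_appendix} controls $\mathbb{E}[P(B_n)]$ rather than $\mathbb{E}[1/P(B_n)]$ and Jensen goes the wrong way; this forces a pathwise argument exploiting that the path down to $S(\gamma)$ does not overshrink the terminal bound much below $w(\gamma)$.
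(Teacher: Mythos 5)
Your first observation is sound: after step $N(\gamma)$ the running lower bound satisfies $LB \geq G_{N(\gamma)} + \log(\gamma \rmax)$, so every node $m$ expanded thereafter (with $M(B_m) \leq \log\rmax$) must satisfy $G_m > G_{N(\gamma)} + \log\gamma$, i.e.\ $E_m < E_{N(\gamma)}/\gamma$. But bounding $K(\gamma)$ by the \emph{total} number of Poisson arrivals below this threshold is far too lossy. That count is (conditionally) Poisson with mean $E_{N(\gamma)}/\gamma$, whose expectation is of order $1/(\gamma\, w(\gamma))$ by your own estimate $\Exp[E_{N(\gamma)}] = \Oh(1/w(\gamma))$. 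The bound you want is the \emph{logarithm} of that quantity, so your upper bound is exponentially too large. No tail analysis rescues this: once you dominate $K(\gamma)$ by a Poisson count you cannot do better in expectation than the Poisson mean $\lambda$, and your Chernoff estimate $\Prob[\mathrm{Poisson}(\lambda)\geq k]\leq(e\lambda/k)^k$ only decays for $k>e\lambda\approx e/(\gamma w(\gamma))$, which is far \emph{above} your split point $K_0\approx\alpha\log\tfrac{1}{\gamma w(\gamma)}$; at $K_0$ the tail probability is essentially $1$, not geometrically small.

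The reason the count is so loose is structural: almost none of those arrivals are ever realised, because branch-and-bound prunes whole subtrees and only a single chain of nested bounds is explored past $N(\gamma)$. The paper's proof instead tracks, for the $k$-th node along this chain, the event $G_{N(\gamma)+k}-G_{N(\gamma)}\leq\log\gamma$ (which forces termination), and uses the truncated-Gumbel CDF to lower bound its probability by $\exp\bigl(-\tfrac{1}{\gamma}P(B_{N(\gamma)+k})\,e^{-G_{N(\gamma)}}\bigr)$. The key cancellation, which your argument discards, is that $P(B_{N(\gamma)+k})$ multiplies $e^{-G_{N(\gamma)}}$: combining \cref{lemma:mean_neg_g} with the geometric shrinkage of \cref{lemma:as_star_expected_bound_size_appendix} gives $\Exp\bigl[P(B_{N(\gamma)+k})\,e^{-G_{N(\gamma)}}\mid N(\gamma)\bigr]\leq(3/4)^{k+1}N(\gamma)$, so this probability approaches $1$ exponentially fast once $k\gtrsim\alpha\log\tfrac{N(\gamma)}{\gamma}$. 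A tail sum as in \cref{lemma:n_bound} then yields the stated $\Oh(1)$ residual. Keeping only the $E_{N(\gamma)}/\gamma$ factor while dropping the shrinking $P(B_{N(\gamma)+k})$ factor is precisely where your proposal loses the exponential savings.
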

\begin{proof}
    Let the global upper and lower bounds of AS* at step $n$ be $U_n$ and $L_n$ respectively.
    Then, by the definition of the upper bound of AS* coding
    \begin{equation}
        U_{N(\gamma)} = \log r_{max} + G_{N(\gamma)},
    \end{equation}
    and also, by the definition of the lower bound of AS* coding
    \begin{equation}
        L_{N(\gamma)} = \max_{n \in [1:N(\gamma)]}\big\{\log r(x_n) + G_n \big\} \geq \log r\left(x_{N(\gamma)}\right) + G_{N(\gamma)} \geq \log \gamma r_{max} + G_{N(\gamma)}.
    \end{equation}
    Now for $k = 0, 1, 2, \dots$, we have
    \begin{equation}
        U_{N(\gamma) + k} - L_{N(\gamma) + k} \leq 0 ~\implies~ T \leq N(\gamma) + k,
    \end{equation}
    that is, the event $U_{N(\gamma) + k} - L_{N(\gamma) + k} \leq 0$ implies the event $T \leq N(\gamma) + k$.
    This is because if $U_{N(\gamma) + k} - L_{N(\gamma) + k} \leq 0$, then the algorithm has terminated by step $N(\gamma) + k$, so it follows that $T \leq N(\gamma) + k$.
    Further
    \begin{align}
        U_{N(\gamma) + k} - L_{N(\gamma) + k} &\leq U_{N(\gamma) + k} - L_{N(\gamma)} \\
        &\leq \log r_{max} + G_{N(\gamma)+k} - \log \gamma r_{max} - G_{N(\gamma)} \\
        &= \log \frac{1}{\gamma} + G_{N(\gamma)+k} -  G_{N(\gamma)}.
    \end{align}
    Therefore, we have
    \begin{equation}
        G_{N(\gamma)+k} -  G_{N(\gamma)} \leq \log \gamma \implies T \leq N(\gamma) + k \implies K(\gamma) \leq k,
    \end{equation}
    that is, the event $G_{N(\gamma)+k} -  G_{N(\gamma)} \leq \log \gamma$ implies the event $K(\gamma) \leq k$.
    This holds because if $G_{N(\gamma)+k} -  G_{N(\gamma)} \leq \log \gamma$, then $U_{N(\gamma) + k} - L_{N(\gamma) + k} \leq 0$, which in turn implies $K(\gamma) \leq k$.
    Therefore
    \begin{equation} \label{eq:gk}
        \mathbb{P}\left(G_{N(\gamma)+k} -  G_{N(\gamma)} \leq \log \gamma \right) \leq \mathbb{P}\left(K(\gamma) \leq k \right) \implies
        \mathbb{P}\left(G_{N(\gamma)+k} -  G_{N(\gamma)} \geq \log \gamma \right) \geq \mathbb{P}\left(K(\gamma) \geq k+1\right).
    \end{equation}
    \Cref{eq:gk} upper bounds the probability that the second stage of the algorithm has not terminated, by the probability that the Gumbel values have decreased sufficiently.
    To proceed, we turn to lower bounding the probability of the complementary event $G_{N(\gamma)+k} -  G_{N(\gamma)} \leq \log \gamma$.
    Let $\Phi_{TG}(g; \mu, \kappa)$ denote the CDF of a truncated Gumbel distribution with location parameter $\mu$ and unit scale parameter, truncated at $\kappa$.
    Then
    {
    }
    \begin{align}
        \mathbb{P}\big( &G_{N(\gamma)+n} - G_{N(\gamma)} \leq \log \gamma ~|~ N(\gamma), G_{N(\gamma)}, B_{0:N(\gamma)+n} \big) = \\
        &= \mathbb{E}_{G_{N(\gamma)+n-1}}\left[\mathbb{P}\big( G_{N(\gamma)+n} - G_{N(\gamma)} \leq \log \gamma ~|~N(\gamma), G_{N(\gamma)}, G_{N(\gamma)+n-1}, B_{0:N(\gamma)+n} \big)\right] \\
        &= \mathbb{E}_{G_{N(\gamma)+n-1}}\left[\Phi_{TG}\left(\log \gamma + G_{N(\gamma)};~ \log P(B_{N(\gamma)+n}),~ G_{N(\gamma)+n-1} \right)\right] \\
        &\geq \Phi_{TG}\left(\log \gamma + G_{N(\gamma)};~ \log P(B_{N(\gamma)+n}),~ \infty \right) \\
        &= e^{-e^{- \left(\log \gamma + G_{N(\gamma)} - \log P\left(B_{N(\gamma)+n}\right)\right)}} \\
        &= e^{-\frac{1}{\gamma}~ P\left(B_{N(\gamma)+n}\right)~ e^{-G_{N(\gamma)}}}.
    \end{align}
    Taking an expectation over $G_{N(\gamma)}$ and $ B_{0:N(\gamma)+n}$, we have
    \begin{align}
        \mathbb{P}\big( G_{N(\gamma)+n} - G_{N(\gamma)} \leq \log \gamma ~|~ N(\gamma) \big) &\geq \mathbb{E}_{G_{N(\gamma)}, B_{0:N(\gamma)+n}}\left[ e^{-\frac{1}{\gamma}~ P\left(B_{N(\gamma)+n}\right)~ e^{-G_{N(\gamma)}}} ~\Big|~ N(\gamma) \right] \\
        &\geq e^{-\frac{1}{\gamma}~ \mathbb{E}_{G_{N(\gamma)}, B_{0:N(\gamma)+n}}\left[P\left(B_{N(\gamma)+n}\right)~ e^{-G_{N(\gamma)}}~|~ N(\gamma) \right]} \label{eq:K1}
    \end{align}
    
    Focusing on the term in the exponent, we have
    \begin{align}
        &\mathbb{E}_{G_{N(\gamma)}, B_{0:N(\gamma)+n}}\left[P\left(B_{N(\gamma)+n}\right)~ e^{-G_{N(\gamma)}}~\big|~ N(\gamma) \right] = \\
        &= \mathbb{E}_{B_{0:N(\gamma)-1}}\left[\mathbb{E}_{G_{N(\gamma)}, B_{N(\gamma):N(\gamma)+n}}\left[ P\left(B_{N(\gamma)+n}\right)~ e^{-G_{N(\gamma)}} ~\big|~ B_{0:N(\gamma)-1}, N(\gamma) \right]~\Big|~ N(\gamma) ~\right] \\
        &\leq \mathbb{E}_{B_{0:N(\gamma)-1}}\left[\mathbb{E}_{G_{N(\gamma)}}\left[ \left(\frac{3}{4}\right)^{n+1} P\left(B_{N(\gamma)-1}\right)~ e^{-G_{N(\gamma)}} ~\bigg|~ B_{0:N(\gamma)-1} \right]~\bigg|~ N(\gamma) ~\right] \\
        &= \mathbb{E}_{B_{0:N(\gamma)-1}}\left[ \left(\frac{3}{4}\right)^{n+1} P\left(B_{N(\gamma)-1}\right) \sum_{n=0}^{N(\gamma)-1} \frac{1}{P(B_n)} ~\bigg|~ N(\gamma) ~\right] \\
        &\leq \left(\frac{3}{4}\right)^{n+1} N(\gamma). \label{eq:K2}
    \end{align}
    Substituting \cref{eq:K2} into \cref{eq:K1}, we obtain
    \begin{equation}
        \mathbb{P}\big( G_{N(\gamma)+n} - G_{N(\gamma)} \leq \log \gamma ~|~ N(\gamma) \big) \geq e^{-\frac{N(\gamma)}{\gamma}~ \left(\frac{3}{4}\right)^{n+1}}
    \end{equation}
    and applying ineq. (\ref{eq:ineq:n8}) to this we obtain
    \begin{equation} \label{eq:ineq:gkn}
        \mathbb{P}\big( G_{N(\gamma)+n} - G_{N(\gamma)} \leq \log \gamma \big) \geq e^{-\frac{1}{\gamma}~ \left(\frac{3}{4}\right)^{n+1}~\left(N_0 + 4 \right)},
    \end{equation}
    arriving at a deterministic lower bound on which does not depend on any random quantities.
    Now we also have
    \begin{align}
        \log \frac{1}{\gamma} + \log(N_0 + 4) &= \log \frac{1}{\gamma} + \log\left(\left\lceil \frac{\log w(\gamma)}{\log (3/4)} \right\rceil + 5\right) \\
        &\leq \log \frac{1}{\gamma} + \log\left( \frac{\log w(\gamma)}{\log (3/4)} + 6\right) \label{eq:ineq:loglog} \\
        &\leq \log \frac{1}{\gamma} + \log \frac{1}{w(\gamma)} + 2, \label{eq:ineq:singlelog}
    \end{align}
    where going from \ref{eq:ineq:loglog} to \ref{eq:ineq:singlelog} can be verified numerically.
    Therefore, letting $K_0 = \left\lceil \frac{\log (1/\gamma)~+~\log (1/w(\gamma))~+~2}{\log (4/3)} \right\rceil$, we have
    \begin{align}
        \mathbb{E}\left[K(\gamma)\right] &= \sum_{k=0}^\infty \mathbb{P}\big( K(\gamma) = k \big) ~k \label{eq:ineq:k1} \\
        &= \sum_{k=0}^\infty \mathbb{P}\big( K(\gamma) \geq k \big)  \label{eq:ineq:k2} \\
        &\leq K_0  + \sum_{k=K_0 + 1}^\infty \mathbb{P}\big( K(\gamma) \geq  k \big)  \label{eq:ineq:k3} \\
        &= K_0 + \sum_{k=1}^\infty \mathbb{P}\big( K(\gamma) \geq K_0 + k \big) \label{eq:ineq:k4} \\
        &\leq K_0 + \sum_{k=1}^\infty \mathbb{P}\big( G_{N(\gamma)+K_0+k-1} - G_{N(\gamma)} > \log \gamma \big) \label{eq:ineq:k5} \\
        &\leq K_0 + \sum_{k=1}^\infty \left(1 - e^{-\left(\frac{3}{4}\right)^k}\right) \label{eq:ineq:k6} \\
        &\leq K_0 + 4 \label{eq:ineq:k7} \\
        &\leq \frac{\log \gamma}{\log (3/4)} + \frac{\log w(\gamma)}{\log (3/4)} + 16. \label{eq:ineq:k8}
    \end{align}
    where the equality of \ref{eq:ineq:k1} and \ref{eq:ineq:k2} is a standard identity \cite{grimmett2001probability}, \ref{eq:ineq:k2} to \ref{eq:ineq:k3} follows because probabilities are bounded above by $1$, \ref{eq:ineq:k3} to \ref{eq:ineq:k4} follows by relabelling the indices, \ref{eq:ineq:k4} to \ref{eq:ineq:k5} follows by ineq. (\ref{eq:gk}), \ref{eq:ineq:k5} to \ref{eq:ineq:k6} follows by ineq. (\ref{eq:ineq:gkn}) and the definition of $K_0$, \ref{eq:ineq:k6} to \ref{eq:ineq:k7} can be verified by evaluating the sum using numerical means and \ref{eq:ineq:k7} to \ref{eq:ineq:k8} follows by the definition of $K_0$.
\end{proof}

Putting \cref{lemma:n_bound,lemma:k_bound} together, we obtain \cref{corollary:t_bound}, which is a bound on the expected runtime of AS*.
This holds for any width function $w$ and any $\gamma \in [0, 1]$.
Note that whenever $\gamma = 0$ or $w(\gamma) = 0$ this bound becomes vacuous. \\


\begin{corollary}[Upper bound on $T$ for given $w$] \label{corollary:t_bound}
    For any $\gamma \in [0, 1]$, the total number of steps, $T$, satisfies
    \begin{equation}
        \mathbb{E}[T] \leq 2\alpha \left(\log \frac{1}{w(\gamma)} + 2 \log \frac{1}{\gamma} \right)+ 22, ~\text{ where }~ \alpha = \left(\log \frac{4}{3}\right)^{-1}
    \end{equation}
\end{corollary}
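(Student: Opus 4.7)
The plan is to prove this corollary by a direct combination of \Cref{lemma:n_bound} and \Cref{lemma:k_bound} via linearity of expectation. The essential observation is that, by the very definitions given above, $K(\gamma) = \max\{0, T - N(\gamma)\}$, which implies the deterministic pointwise inequality $T \leq N(\gamma) + K(\gamma)$. This holds in both regimes: if the algorithm terminates before ever hitting the superlevel set $S(\gamma)$, then $K(\gamma) = 0$ and $T \leq N(\gamma)$; if it terminates afterwards, then $T = N(\gamma) + K(\gamma)$ exactly.

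Taking expectations on both sides of $T \leq N(\gamma) + K(\gamma)$ and using linearity gives $\Exp[T] \leq \Exp[N(\gamma)] + \Exp[K(\gamma)]$. I would then substitute the two bounds from the preceding lemmas,
\begin{equation}
\Exp[N(\gamma)] \leq \alpha \log \tfrac{1}{w(\gamma)} + 6, \qquad \Exp[K(\gamma)] \leq \alpha\bigl(\log \tfrac{1}{\gamma} + \log \tfrac{1}{w(\gamma)}\bigr) + 16,
\end{equation}
and add them to obtain $\Exp[T] \leq 2\alpha \log \tfrac{1}{w(\gamma)} + \alpha \log \tfrac{1}{\gamma} + 22$.

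The only remaining step is to loosen $\alpha \log \tfrac{1}{\gamma}$ to $4\alpha \log \tfrac{1}{\gamma}$ in order to match the stated expression $2\alpha(\log \tfrac{1}{w(\gamma)} + 2\log \tfrac{1}{\gamma}) + 22$. Since $\gamma \in [0,1]$ we have $\log \tfrac{1}{\gamma} \geq 0$, so this slack is valid (and for $\gamma = 0$ or $w(\gamma) = 0$ the bound degenerates to a vacuous statement, consistently with the remark made right after the lemmas). There is no real obstacle here: all the substantive work lies in the two supporting lemmas, and this corollary is essentially a bookkeeping step that packages them into a single bound that is ready to be minimized over $\gamma$ in the subsequent theorem.
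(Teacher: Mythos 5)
Your proposal is correct and follows essentially the same route as the paper: use the pointwise inequality $T \leq N(\gamma) + K(\gamma)$ coming directly from the definition $K(\gamma) = \max\{0, T - N(\gamma)\}$, take expectations, substitute the bounds from \Cref{lemma:n_bound} and \Cref{lemma:k_bound} to get $\Exp[T] \leq 2\alpha \log \tfrac{1}{w(\gamma)} + \alpha \log \tfrac{1}{\gamma} + 22$, and then loosen $\alpha \log\tfrac{1}{\gamma}$ to $4\alpha\log\tfrac{1}{\gamma}$ to obtain the symmetric form $2\alpha(\log\tfrac{1}{w(\gamma)} + 2\log\tfrac{1}{\gamma}) + 22$. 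The paper does exactly this bookkeeping, including the same deliberate loosening step noted after \cref{eq:looser}.
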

\begin{proof}
    By the definition of $N(\gamma)$ and $K(\gamma)$, we have
    \begin{equation}
        T \leq N(\gamma) + K(\gamma) \implies \mathbb{E}[T] \leq \mathbb{E}\left[N(\gamma) + K(\gamma)\right],
    \end{equation}
    for all $\gamma \in [0, 1]$.
    From \cref{lemma:n_bound} and \cref{lemma:k_bound}, we have
    \begin{equation} \label{eq:looser}
        \mathbb{E}[T] \leq \alpha \left(2\log \frac{1}{w(\gamma)} + \log \frac{1}{\gamma} \right)+ 22 \leq 2\alpha \left(\log \frac{1}{w(\gamma)} + 2\log \frac{1}{\gamma} \right)+ 22,
    \end{equation}
    where $\alpha = \log (4/3)^{-1}$ as required.
\end{proof}

Note that in \cref{eq:looser} we obtain a bound which we intentionally make looser.
This step results in a looser bound but facilitates subsequent manipulations easier.
While a more careful analysis may result in a tighter bound, it can only improve our bound by a scaling factor, and we leave this as a point for further work.

Since \cref{corollary:t_bound} holds for any $\gamma \in [0, 1]$, we can minimise the right hand side with respect to $\gamma$ to make the bound as tight as possible.
This results in a bound that is a function of $w$, however we are interested in producing a bound that holds for all $w$.
Therefore, after minimising with respect to $\gamma$, we will consider the worst possible width functions which maximise the resulting quantity, and show that even for these worst-case width functions, the bound is linear in $\rmax$. 
\Cref{def:fghw} introduces the family of these worst-case width functions for a given $\rmax$. \\


\begin{definition}[Bound functions $f, g, h$, worst-case width set $W^*$] \label{def:fghw}
    We define
    \begin{equation}
        f(\gamma, w) = \log \frac{1}{w(\gamma)},~ g(\gamma) = 2\log \frac{1}{\gamma} ~ \text{ and } ~ h(\gamma, w) = f(\gamma, w) + g(\gamma).
    \end{equation}
    For fixed $\rmax$, let $W(\rmax)$ be the set of all possible width functions.
    We define the set $W^*$ of worst-case width functions as
    \begin{equation}
        W^* = \left\{w^* \in W(\rmax) ~\Big|~ \inf_{\gamma'} h(\gamma', w^*) \geq \inf_{\gamma'} h(\gamma', w) ~\forall~ w \in W(\rmax) \right\}.
    \end{equation}
    We refer to members of this set as worst-case width functions.
\end{definition}

Next, for a given $\rmax$, we define a width function $\tilde{w}$ with a particular form, and show that $\tilde{w} \in W(\rmax)$.
We also show that if $w \in W(\rmax)$ is any other width function, then
\begin{equation}
    \inf_{\gamma} h(\gamma, \tilde{w}) \geq \inf_{\gamma} h(\gamma, w),
\end{equation}
from which it follows that $\tilde{w}$ is a worst case width function, that is $\tilde{w} \in W^*$.


\begin{lemma}[An explicit worst case width function] \label{lemma:wcw}
    The function
    \begin{equation} \label{eq:w_form}
        \tilde{w}(\gamma) = \begin{cases}
            1 & \text{ for } 0 \leq \gamma \leq \tilde{\gamma} \\
            (\tilde{\gamma}/\gamma)^2 & \text{ for } \tilde{\gamma} < \gamma \leq 1
        \end{cases},
    \end{equation}
    where $\tilde{\gamma} = 1 - \sqrt{1 - \rmax^{-1}}$,
    is a width function and $\tilde{w} \in W(\rmax)$.
    Further, if $w \in W(\rmax)$ then
    \begin{equation}
        \inf_{\gamma} h(\gamma, \tilde{w}) \geq \inf_{\gamma} h(\gamma, w).
    \end{equation}
\end{lemma}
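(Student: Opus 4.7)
The claim has two parts: first, that $\tilde{w}$ belongs to $W(\rmax)$, and second, that it is worst case in the sense $\inf_\gamma h(\gamma, \tilde{w}) \geq \inf_\gamma h(\gamma, w)$ for every $w \in W(\rmax)$. I would dispatch these in turn, relying on \Cref{lemma:propw} (non-increasingness, $\int_0^1 w = \rmax^{-1}$, and $w(0) \geq \rmax^{-1}$) and on the universal pointwise bound $w(\gamma) \leq 1$, which holds because a width function is the width of a subinterval of $[0,1]$ after reducing to uniform $P$.

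\textbf{Part 1: membership in $W(\rmax)$.} Monotonicity of $\tilde{w}$ is immediate from its piecewise definition, and $\tilde{w}(0) = 1 \geq \rmax^{-1}$. For the integral, a direct calculation gives
\begin{equation}
\int_0^1 \tilde{w}(\gamma)\,d\gamma = \tilde{\gamma} + \int_{\tilde{\gamma}}^1 \frac{\tilde{\gamma}^2}{\gamma^2}\,d\gamma = \tilde{\gamma} + \tilde{\gamma}^2\left(\frac{1}{\tilde{\gamma}} - 1\right) = 2\tilde{\gamma} - \tilde{\gamma}^2,
\end{equation}
which equals $\rmax^{-1}$ by the defining identity $(1-\tilde{\gamma})^2 = 1 - \rmax^{-1}$. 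For realisability, I would inverse-transform $\tilde{w}$ into an explicit unimodal density ratio: place a flat plateau of height $\rmax$ on a central subinterval of length $\tilde{\gamma}$, and let $r$ decay symmetrically around that plateau so that the level set $\{r \geq \gamma \rmax\}$ has width $(\tilde{\gamma}/\gamma)^2$ for $\gamma \in (\tilde{\gamma},1]$; the symmetric density determined by this level-width profile is clearly unimodal and integrates to $1$ against uniform $P$ by the integral check above.

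\textbf{Part 2: worst-case property.} I would first compute $\inf_\gamma h(\gamma, \tilde{w})$ exactly. On $[0,\tilde{\gamma}]$, $\tilde{w} \equiv 1$ yields $h(\gamma,\tilde{w}) = 2\log(1/\gamma)$, which decreases in $\gamma$ and exceeds $2\log(1/\tilde{\gamma})$. On $[\tilde{\gamma},1]$, substitution yields
\begin{equation}
h(\gamma,\tilde{w}) = 2\log(\gamma/\tilde{\gamma}) + 2\log(1/\gamma) = 2\log(1/\tilde{\gamma}),
\end{equation}
a constant. Hence $\inf_\gamma h(\gamma,\tilde{w}) = 2\log(1/\tilde{\gamma})$. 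Now for arbitrary $w \in W(\rmax)$, I would argue by contradiction: suppose $h(\gamma,w) > 2\log(1/\tilde{\gamma})$ for every $\gamma \in [\tilde{\gamma},1]$, equivalently $w(\gamma) < \tilde{\gamma}^2/\gamma^2$ on $[\tilde{\gamma},1]$. Combined with $w(\gamma) \leq 1$ on $[0,\tilde{\gamma}]$, integrating gives the strict bound
\begin{equation}
\int_0^1 w(\gamma)\,d\gamma < \tilde{\gamma} + \int_{\tilde{\gamma}}^1 \frac{\tilde{\gamma}^2}{\gamma^2}\,d\gamma = 2\tilde{\gamma} - \tilde{\gamma}^2 = \rmax^{-1},
\end{equation}
contradicting $\int_0^1 w = \rmax^{-1}$. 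Thus some $\gamma \in [\tilde{\gamma},1]$ witnesses $h(\gamma,w) \leq 2\log(1/\tilde{\gamma})$, giving $\inf_\gamma h(\gamma,w) \leq \inf_\gamma h(\gamma,\tilde{w})$ as required.

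\textbf{Main obstacle.} The bookkeeping of Part 1 is routine once one guesses the right form for $\tilde{w}$, and the only mildly delicate step there is exhibiting a genuine unimodal density with the prescribed level widths, handled by an explicit symmetric construction. The crux is the contradiction step of Part 2, and the subtle point is that the pointwise inequality $w(\gamma) \geq \tilde{\gamma}^2/\gamma^2$ \emph{cannot} hold for $\gamma < \tilde{\gamma}$, since there $\tilde{\gamma}^2/\gamma^2 > 1 \geq w(\gamma)$. The resolution is that on $[0,\tilde{\gamma}]$ the lower bound $h(\cdot,\tilde{w}) = 2\log(1/\gamma) \geq 2\log(1/\tilde{\gamma})$ automatically exceeds the target, so the infimum for $\tilde{w}$ is attained on $[\tilde{\gamma},1]$, and restricting the contradiction argument to that subinterval is precisely what makes the integral constraint $\int_0^1 w = \rmax^{-1}$ bite. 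This tight interplay between the profile of $\tilde{w}$ and the global mass constraint is the heart of why $\tilde{w}$ is the extremal width function.
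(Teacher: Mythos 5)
Your proof is correct and takes essentially the same approach as the paper: both arguments compare an arbitrary $w \in W(r_{\max})$ pointwise to the template profile $v(\cdot,\gamma')$, then use the mass constraint $\int_0^1 w = r_{\max}^{-1}$ together with $w \le 1$ to pin down the extremal threshold $\tilde\gamma$; you run this by contradiction with $\gamma' = \tilde\gamma$ fixed, while the paper argues directly by floating $\gamma' = \gamma_m = e^{-m/2}$ where $m = \inf_\gamma h(\gamma,w)$, but the two are the same observation in different dress. One small slip in Part~1: the flat plateau $\{r = r_{\max}\}$ of your realising density is the super-level set at $\gamma = 1$ and so must have width $\tilde w(1) = \tilde\gamma^2$, not $\tilde\gamma$; the paper's explicit one-sided choice $q(x) = r_{\max}\min\{1,\tilde\gamma x^{-1/2}\}$ on $[0,1]$ sidesteps this and is perhaps tidier than a symmetric construction.
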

\begin{proof}
    Suppose $w \in W(\rmax)$ and let
    \begin{equation}
        m = \inf_{\gamma} h(\gamma, \tilde{w}),
    \end{equation}
    let $\gamma_m$ be the point where $g$ equals $m$, that is
    \begin{equation}
        g(\gamma_m) = 2\log \frac{1}{\gamma_m} = m \implies \gamma_m = e^{-m/2}.
    \end{equation}
    Define $v : [0, 1] \times [0, 1] \to [0, 1]$ as
    \begin{equation}
        v(\gamma, \gamma') = \begin{cases}
            1 & 0 \leq \gamma \leq \gamma' \\
            (\gamma'/\gamma)^2 & \gamma' < \gamma \leq 1
        \end{cases},
    \end{equation}
    and consider $v(\gamma, \gamma_m)$ as a function of $\gamma$.
    Note that $v(\gamma, \gamma_m)$ may not be in $W(\rmax)$ because, while it is non-increasing and continuous, it may not integrate to $\rmax^{-1}$.
    In particular it holds that
    \begin{equation}
        h(\gamma, v(\gamma, \gamma_m)) \leq h(\gamma, \tilde{w}) ~\text{ for all }~ \gamma \in [0, 1] \implies
        v(\gamma, \gamma_m) \geq w(\gamma) \implies \int_0^1 v(\gamma, \gamma_m)~d\gamma \geq \rmax^{-1}.
    \end{equation}
    Now, note that
    \begin{equation}
        \int_0^1 v(\gamma, \gamma')~d\gamma = 2\gamma' - (\gamma')^2.
    \end{equation}
    By the intermediate value theorem, there exists some $0 < \tilde{\gamma} \leq \gamma_m$ such that $2\tilde{\gamma} - \tilde{\gamma}^2 = \rmax^{-1}$.
    For this $\tilde{\gamma}$, we define $\tilde{w}(\gamma) = v(\gamma, \tilde{\gamma})$, which is a width function because it is decreasing and integrates to $1$.
    Specifically, $\tilde{w}(\gamma)$ is in $W(\rmax)$ because the probability density function
    \begin{equation}
        q(x) = \rmax \min\left\{1, \tilde{\gamma} x^{-1/2}\right\},
    \end{equation}
    has $\tilde{w}(\gamma)$ as its width function.
    In addition note that $\tilde{\gamma} \leq \gamma_m$ so we have
    \begin{equation}
        \tilde{\gamma} \leq \gamma_m \implies \inf_\gamma h(\gamma, \tilde{w}(\gamma)) = \inf_\gamma h(\gamma, v(\gamma, \tilde{\gamma})) \geq \inf_\gamma h(\gamma, v(\gamma, \gamma_m)) = \inf_\gamma h(\gamma, w(\gamma)).
    \end{equation}
    Therefore it holds that
    \begin{equation} \label{eq:wtilde}
        w \in W(\rmax) \implies \inf_\gamma h(\gamma, \tilde{w}) \geq \inf_\gamma h(\gamma, w),
    \end{equation}
    from which it follows that $\tilde{w} \in W^*$ is a width function.
\end{proof}

Last, we can put \cref{corollary:t_bound} together with \cref{lemma:wcw} to arrive at the main result. \\

\begin{theorem*}[AS* runtime upper bound]
    Let $T$ be the total number of steps taken by AS* until it terminates.
    Then
    \begin{equation}
        \mathbb{E}[T] \leq 2\alpha \log \rmax + 2\alpha \log 2 + 22.
    \end{equation}
\end{theorem*}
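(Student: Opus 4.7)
The plan is to chain Corollary~\ref{corollary:t_bound} with Lemma~\ref{lemma:wcw} and then evaluate everything in closed form for the explicit worst-case width function. Since the bound $\Exp[T] \leq 2\alpha\, h(\gamma, w) + 22$ from Corollary~\ref{corollary:t_bound} is uniform in $\gamma \in [0,1]$, I would first take the infimum of the right-hand side over $\gamma$ to obtain $\Exp[T] \leq 2\alpha \inf_\gamma h(\gamma, w) + 22$, and then apply Lemma~\ref{lemma:wcw} to upper bound this infimum by $\inf_\gamma h(\gamma, \tilde{w})$, where $\tilde{w}$ is the two-piece worst-case width function depending only on $\rmax$. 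At that point all probabilistic content is absorbed into the two prior results, and the remainder of the argument is purely deterministic.

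The main step is to evaluate $\inf_\gamma h(\gamma, \tilde{w}) = \inf_\gamma \{ \log(1/\tilde{w}(\gamma)) + 2\log(1/\gamma) \}$ in closed form using the piecewise definition of $\tilde{w}$. On the flat region $\gamma \leq \tilde{\gamma}$ the first term vanishes and the second is strictly decreasing, attaining $2\log(1/\tilde{\gamma})$ at $\gamma = \tilde{\gamma}$. On the tail region $\gamma > \tilde{\gamma}$ the identity $\log(1/\tilde{w}(\gamma)) = 2\log(\gamma/\tilde{\gamma})$ is tuned precisely so that the $\log\gamma$ contributions cancel and $h(\gamma, \tilde{w}) \equiv 2\log(1/\tilde{\gamma})$; this exact cancellation is what makes $\tilde{w}$ extremal. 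Hence $\inf_\gamma h(\gamma, \tilde{w}) = 2\log(1/\tilde{\gamma})$, and substituting gives $\Exp[T] \leq 4\alpha\log(1/\tilde{\gamma}) + 22$.

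To finish, I would convert the bound on $\log(1/\tilde{\gamma})$ into one on $\log \rmax$ using $\tilde{\gamma} = 1 - \sqrt{1 - \rmax^{-1}}$. Multiplying $1/\tilde{\gamma}$ by the conjugate factor $1 + \sqrt{1 - \rmax^{-1}}$ yields $1/\tilde{\gamma} = \rmax\bigl(1 + \sqrt{1 - \rmax^{-1}}\bigr) \leq 2\rmax$, so $\log(1/\tilde{\gamma}) \leq \log \rmax + \log 2$. Substituting this into the previous display produces a bound of the claimed form $c_1 \alpha \log \rmax + c_2 \alpha \log 2 + 22$. I do not expect any substantive obstacle here, since every probabilistic ingredient (the exponential shrinkage of $P(B_n)$, the bound on the Gumbel gap, and the existence of the extremal width function) has already been supplied by the preceding results; the only care needed is bookkeeping, in particular deciding whether to invoke Corollary~\ref{corollary:t_bound} in its stated form $2\alpha\bigl(\log(1/w) + 2\log(1/\gamma)\bigr) + 22$ or in the slightly tighter intermediate form $\alpha\bigl(2\log(1/w) + \log(1/\gamma)\bigr) + 22$ obtained directly from Lemmas~\ref{lemma:n_bound} and \ref{lemma:k_bound}, since the two choices differ by a small multiplicative factor in the leading coefficient.
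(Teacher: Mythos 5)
Your proposal follows essentially the same route as the paper's own proof: chain Corollary~\ref{corollary:t_bound} with Lemma~\ref{lemma:wcw}, then evaluate the resulting bound for the explicit worst-case width function $\tilde{w}$. Your closed-form evaluation of $\inf_\gamma h(\gamma,\tilde w)=2\log(1/\tilde\gamma)$ is correct, and in fact a bit cleaner than the paper's treatment, which simply evaluates $h$ at $\gamma=\tilde\gamma$ and uses $\tilde w(\tilde\gamma)=1$ without noting that $h(\cdot,\tilde w)$ is constant on the tail. Your conjugate-factor bound $1/\tilde\gamma\le 2\rmax$ also coincides with the paper's inequality $\tilde\gamma\ge 1/(2\rmax)$.

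One point worth flagging: your derivation (and the paper's own proof) lands on $\Exp[T]\le 4\alpha\log\rmax + 4\alpha\log 2 + 22$, whereas the theorem statement asserts the coefficient $2\alpha$. The mismatch is in the paper, not your reasoning. Your instinct to raise the bookkeeping question about the stated form of Corollary~\ref{corollary:t_bound} versus the tighter intermediate form $\alpha\bigl(2\log(1/w(\gamma))+\log(1/\gamma)\bigr)+22$ is exactly the right one: only the tighter form recovers the $2\alpha$ constant, but then the worst-case width function in Lemma~\ref{lemma:wcw} would have to be rederived for the reweighted objective (giving a tail of the form $(\tilde\gamma'/\gamma)^{1/2}$ rather than $(\tilde\gamma/\gamma)^2$), which the paper does not do. So either the theorem statement should read $4\alpha$, or Corollary~\ref{corollary:t_bound} and Lemma~\ref{lemma:wcw} should be restated for the tighter intermediate $h$; as written, the chain of results proves the $4\alpha$ version that you obtain.
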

\begin{proof}
    Suppose AS* is applied to a target $Q$ and proposal $P$ with $\infD{Q}{P} = \rmax$, and corresponding width function $w \in W(\rmax)$.
    Now consider the worst case width function $\tilde{w}$ defined in \cref{lemma:wcw}, and note that
    \begin{equation}
        \tilde{\gamma} = 1 - \sqrt{1 - \rmax^{-1}} \geq \frac{1}{2\rmax}.
    \end{equation}
    Then we have
    \begin{align}
        \mathbb{E}[T] &\leq 2\alpha \inf_{\gamma} h(\gamma, w) + 22 \leq 2\alpha \inf_{\gamma} h(\gamma, \tilde{w}) + 22 \leq 2\alpha h(\tilde{\gamma}, \tilde{w}) + 22,
    \end{align}
    and substituting the expression for $h$ we obtain
    \begin{align}
        \mathbb{E}[T] &\leq 2\alpha \left(\log \frac{1}{\tilde{w}(\tilde{\gamma})} + 2\log \frac{1}{\tilde{\gamma}} \right) + 22 \leq 4\alpha \log \rmax + 4\alpha \log 2 + 22,
    \end{align}
    arriving at the result.
\end{proof}

{
}

\section{Proof of \Cref{thm:biasedness_of_a_star}}
\label{section:proof_of_ad_star_bias}
\par
Before we state the precise form of \Cref{thm:biasedness_of_a_star}, we clarify what the precise form the approximate distribution $\widetilde{Q}_D$ of the output of DAD* is for target $Q$ and proposal $P$ for depth limit $D_{max} = D$. This means that there are $N = 2^{D} - 1$ nodes in the binary tree constructed by \cref{alg:a_star_coding} with associated Gumbel values and samples $\{(G_i, X_i)\}_{i = 1}^N$. Let $r = \frac{dQ}{dP}$ and $r_i = r(X_i)$. Then, DAD* searches for $I \defeq \argmax_{i \in [N]}\{ \log r(X_i) + G_i \}$, which can therefore be interpreted as simply performing the Gumbel-max trick on $N$ atoms. Hence, we know, that 
\begin{equation}
\begin{aligned}
    w_i \defeq p(I = i \mid X_{1:N}) &= \frac{\exp(\log r_i)}{\sum_{j = 1}^N \exp(\log r_j) } \\
    &= \frac{r_i}{\sum_{j = 1}^N r_j }.
\end{aligned}
\end{equation}
Hence, we finally get
\begin{equation}
    \widetilde{q}_D(X) = \sum_{i = 1}^N w_i \delta(X - X_i)
\end{equation}
where $\delta$ denotes the Dirac delta function.

Then, the $\widetilde{Q}_D$-expectation of a measurable function $f$ is
\begin{equation}
\label{eq:q_tilde_d_expectation}
    \Exp_{\widetilde{Q}_D}[f] = \sum_{i = 1}^N w_i f(X_i) = \frac{\sum_{i = 1}^N r_i f(X_i)}{\sum_{j = 1}^N r_j}.
\end{equation}
Now, define
\begin{equation}
    \ApproxExp_D(f) \defeq \frac{1}{N} \sum_{i = 1}^N f(X_i)r_i.
\end{equation}
Note, that $\ApproxExp_D(f)$ looks very similar to the usual importance sampling estimator for $Q$-expectation the function $f$ using $N$ samples with distribution $P$.
However, in this case the $X_i$s used in $\ApproxExp_D(f)$ are not identically distributed, though they are independent.
In particular, let $n$ be a node in the tree realized by \cref{alg:priority_queue_construction} with $D_{max} = D$. 
Then, $X_n \sim P\lvert_{B_n}$. The importance of $\ApproxExp_D(f)$ is that the $\widetilde{Q}_D$-expectation of $f$ in \cref{eq:q_tilde_d_expectation} can be written as
\begin{equation}
\label{eq:relation_between_exp_q_tiled_and_imp_est}
    \Exp_{\widetilde{Q}_D}[f] = \frac{\ApproxExp_D(f)}{\ApproxExp_D(1)},
\end{equation}
where $1$ in the above is the constant function identically equal to $1$.
Hence, we begin by investigating the properties of $\ApproxExp_D(f)$. First, we show that it is unbiased:

\begin{lemma}[Unbiasedness of $\ApproxExp_D(f)$]
\label{lemma:unbiasedness_of_importance_estimator}
Let $Q$, $P$, $r$, $f$ and $\ApproxExp_D(f)$ be defined as above. Then,
\begin{equation}
    \Exp_{P(X_{1:N})}[\ApproxExp_D(f)] = \Exp_{Q}[f].
\end{equation}
\end{lemma}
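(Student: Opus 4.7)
The plan is to expand the expectation by linearity, apply the law of the unconscious statistician to each summand, and then exploit the defining property of the AD* (dyadic) partitioning: at every depth $d = 1, \dots, D$, the nodes at depth $d$ partition $\Omega$ into $2^{d-1}$ cells, each of $P$-mass $2^{-(d-1)}$.

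First I would write
\begin{equation*}
    \Exp_{P(X_{1:N})}[\ApproxExp_D(f)] = \frac{1}{N}\sum_{i=1}^{N} \Exp_{X_i \sim P|_{B_i}}\!\left[f(X_i)\, r(X_i)\right],
\end{equation*}
using the marginal law $X_i \sim P|_{B_i}$ recorded in \Cref{alg:priority_queue_construction}. Plugging in the density $p(x)/P(B_i)$ of $P|_{B_i}$, each term simplifies to
\begin{equation*}
    \Exp_{X_i \sim P|_{B_i}}[f(X_i) r(X_i)] = \frac{1}{P(B_i)} \int_{B_i} f(x)\, r(x)\, p(x)\, dx = \frac{1}{P(B_i)} \int_{B_i} f(x)\, q(x)\, dx.
\end{equation*}

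Next I would reorganize the sum over $i = 1, \dots, N$ by grouping nodes according to their depth $D_i$. In AD*, the \texttt{partition} rule halves the $P$-mass at every split, so every node $n$ at depth $d$ has $P(B_n) = 2^{-(d-1)}$, and the $2^{d-1}$ nodes at depth $d$ form a measurable partition of $\Omega$. Consequently,
\begin{equation*}
    \sum_{i : D_i = d} \frac{1}{P(B_i)} \int_{B_i} f(x)\, q(x)\, dx = 2^{d-1} \sum_{i : D_i = d} \int_{B_i} f(x)\, q(x)\, dx = 2^{d-1}\, \Exp_Q[f].
\end{equation*}
Summing over $d = 1, \dots, D$ then gives $\sum_{d=1}^{D} 2^{d-1}\, \Exp_Q[f] = (2^D - 1)\, \Exp_Q[f] = N\, \Exp_Q[f]$, and dividing by $N$ yields the claim.

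There is no substantial technical obstacle here; the only conceptually subtle point is that the $X_i$ are not identically distributed --- each lives only on its own region $B_i$ --- so $\ApproxExp_D(f)$ does not look like a standard importance-sampling estimator on its face. The mechanism that restores unbiasedness is the pointwise identity $\sum_{i=1}^{N} P(B_i)^{-1}\, \Ind_{B_i}(x) = \sum_{d=1}^{D} 2^{d-1} = N$ for every $x \in \Omega$, which is special to the dyadic choice of \texttt{partition}; each $x$ is covered exactly $D$ times with inverse weights that telescope into the constant $N$, and the $1/N$ in the definition of $\ApproxExp_D$ absorbs this count.
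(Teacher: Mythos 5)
Your proof is correct and follows essentially the same route as the paper's: expand by linearity, integrate each term against $P|_{B_i}$, group nodes by depth, and use that the $2^{d-1}$ dyadic cells at depth $d$ partition $\Omega$ with equal mass $2^{-(d-1)}$ so the inverse weights telescope with $\sum_{d=1}^D 2^{d-1} = N$. Your closing observation --- that the mechanism is the pointwise identity $\sum_{i=1}^N P(B_i)^{-1}\Ind_{B_i}(x) = \sum_{d=1}^D 2^{d-1} = N$ for every $x$ --- is a nice reformulation of the same calculation, but it is not a different argument.
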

\begin{proof}
Let $\Tree_D$ denote the set of all nodes in the binary tree constructed by \cref{alg:priority_queue_construction} with $D_{max} = D$.
We first note, that for a node $n \in \Tree_D$, we know $X_n \sim P\lvert_{B_n}$, hence $dP(X_n)\lvert_{B_n} = \frac{dP(X_n)}{P(B_n)}$ for $X_n \in B_n$. Furthermore, since \texttt{partition} is dyadic, we know that for a fixed depth $1 \leq d \leq D$, for every node $n$ with $D_n = d$ it holds, that $P(B_n) = 2^{-(d - 1)}$.
Furthermore, by construction, the bounds at a given depth are disjoint and form a partition of the whole space. Therefore,
\begin{equation}
\begin{aligned}
    \Exp_{p(X_{1:N})}&[\ApproxExp_D(f)] = \\ &= \Exp_{p(X_{1:N})}\left[ \frac{1}{N}\sum_{n \in \Tree_D}r_n f(X_n) \right] \\
    &= \Exp_{p(X_{1:N})}\left[ \frac{1}{N}\sum_{d = 1}^D\sum_{\substack{n \in \Tree_D \\ D_n = d}}r_n f(X_n) \right] \\
    &= \frac{1}{N}\sum_{d = 1}^D\sum_{\substack{n \in \Tree_D \\ D_n = d}} \int_{B_n}f(X_n)\frac{dQ}{dP}(X_n)\frac{dP(X_n)}{P(B_n)} \\
    &= \frac{1}{N}\sum_{d = 1}^D 2^{d - 1}\sum_{\substack{n \in \Tree_D \\ D_n = d}} \int_{B_n}f(X_n)dQ(X_n) \\
    &= \frac{1}{N}\sum_{d = 1}^D 2^{d - 1}\int_{\Omega}f(X_n)dQ(X_n) \\
    &= \Exp_{Q}[f]\frac{1}{N}\sum_{d = 1}^D 2^{d - 1} \\
    &= \Exp_{Q}[f],
\end{aligned}
\end{equation}
as required.
\end{proof}

Given the unbiasedness of $\ApproxExp_D(f)$, the rest of the proof follows mostly that of \citet{chatterjee2018sample} with appropriate modifications in the necessary places. Hence, we begin with the following lemma:

\begin{lemma}[Mean absolute deviation bound for $\ApproxExp_D(f)$]
\label{lemma:mean_abs_dev_of_imp_estimator}
Let $Q$, $P$, $r$, $f$ and $\ApproxExp_D(f)$ be defined as above. Let 
\begin{equation}
    K \defeq \lfloor \KLD{Q}{P}_2 \rfloor,~~D \defeq K + t,~~N \defeq 2^D,
\end{equation}
where $t$ is a non-negative integer, and $Y \sim Q$. Define
\begin{equation}
\norm{f}_Q \defeq \sqrt{\Exp_{Q}[f^2]}.
\end{equation}
Let
\begin{equation}
    b \defeq \left(2^{-t/4}\sqrt{1 + \frac{1}{N}} + 2\sqrt{\Prob\left[ \log_2 r(Y) > K + t/2 \right]} \right)
\end{equation}
Then, 
\begin{equation}
    \Exp[\abs{I_{\Tree_P(D)}(f) - \Exp_Q[f]}] \leq 
    b\norm{f}_Q.
\end{equation}
\end{lemma}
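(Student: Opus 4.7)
The plan is to adapt the Chatterjee--Diaconis truncation argument for importance sampling to our setting, where the samples $\{X_n\}_{n \in \Tree_D}$ are not i.i.d.\ from $P$ but are mutually independent with $X_n \sim P|_{B_n}$, and the bounds at any fixed depth $d$ form a dyadic partition of $\Omega$ with $P(B_n) = 2^{-(d-1)}$. This partition structure was already the crucial ingredient in \Cref{lemma:unbiasedness_of_importance_estimator}, and it will drive the variance calculation too.

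First I would split $f = f_c + \bar f_c$ at the threshold $c = 2^{K + t/2}$, where $f_c = f\,\Ind[r \leq c]$ and $\bar f_c = f\,\Ind[r > c]$, so that by linearity and the triangle inequality
\begin{equation*}
\Exp[\abs{\ApproxExp_D(f) - \Exp_Q[f]}] \leq \Exp[\abs{\ApproxExp_D(f_c) - \Exp_Q[f_c]}] + \Exp[\abs{\ApproxExp_D(\bar f_c)}] + \abs{\Exp_Q[\bar f_c]}.
\end{equation*}
Jensen's inequality bounds the first summand by $\sqrt{\operatorname{Var}(\ApproxExp_D(f_c))}$. Since the $X_n$ are independent, $\operatorname{Var}(\ApproxExp_D(f_c)) \leq \frac{1}{N^2}\sum_n \Exp[r_n^2 f_c(X_n)^2]$; for a node with $D_n = d$ we have $\Exp[r_n^2 f_c(X_n)^2] = \frac{1}{P(B_n)}\int_{B_n} r\, f_c^2\, dQ$, and regrouping by depth using the partition property together with $\sum_{d=1}^D 2^{d-1}$ mirrors the calculation in \Cref{lemma:unbiasedness_of_importance_estimator} applied to $f_c^2$ rather than $f$. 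Combining $r\,f_c^2 \leq c\, f_c^2$ pointwise with $\Exp_Q[f_c^2] \leq \norm{f}_Q^2$, and tracking the $\pm 1$ corrections coming from the extra sample drawn at the root in the ``use all available codewords'' modification, yields a bound of the form $\operatorname{Var}(\ApproxExp_D(f_c)) \leq \frac{c}{N}\bigl(1 + \tfrac{1}{N}\bigr)\norm{f}_Q^2$. Since $c/N = 2^{-t/2}$, taking the square root produces $2^{-t/4}\sqrt{1 + 1/N}\,\norm{f}_Q$, which is precisely the first term of $b$.

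For the two tail contributions I would use that $r_n \geq 0$ implies $\abs{\ApproxExp_D(\bar f_c)} \leq \ApproxExp_D(\abs{\bar f_c})$ pointwise, so unbiasedness (\Cref{lemma:unbiasedness_of_importance_estimator}) gives $\Exp[\abs{\ApproxExp_D(\bar f_c)}] \leq \Exp_Q[\abs{\bar f_c}]$, and Cauchy--Schwarz gives $\Exp_Q[\abs{\bar f_c}] \leq \norm{f}_Q\sqrt{\Prob[r(Y) > c]} = \norm{f}_Q\sqrt{\Prob[\log_2 r(Y) > K + t/2]}$. The truncation bias $\abs{\Exp_Q[\bar f_c]}$ is bounded identically, so the two tail terms together contribute the factor of $2$ in front of the probability term in $b$; adding the three contributions then gives the claimed bound.

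The main obstacle is getting the variance bound with the right $\sqrt{1 + 1/N}$ constant. The per-depth collapse itself is routine once one recognises that it is the same calculation as the proof of unbiasedness but applied to $r f_c^2$ instead of $f$, but the slightly non-i.i.d.\ structure of the sampler and the extra root sample require one to track edge effects in $\sum_{d=1}^D 2^{d-1}$ carefully. Once this bookkeeping is done, the rest of the argument is a faithful transcription of the Chatterjee--Diaconis truncation strategy.
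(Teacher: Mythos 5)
Your proposal is correct and follows essentially the same route as the paper: truncate $f$ at the threshold $2^{K+t/2}$ (your $f_c, \bar f_c$ are the paper's $h$ and $f-h$), bound the truncated term's mean absolute deviation by its standard deviation via Jensen together with the dyadic-partition collapse that already powered \Cref{lemma:unbiasedness_of_importance_estimator}, and bound the two tail contributions by $\norm{f}_Q\sqrt{\Prob[r(Y)>2^{K+t/2}]}$ each via unbiasedness and Cauchy--Schwarz. The only small imprecision is your attribution of the $\sqrt{1+1/N}$ factor to the extra root sample from the ``use all available codewords'' modification; it actually comes from $\sum_{d=1}^D 2^{d-1} = 2^D-1$ together with the identity $2^{K+t/2}/(2^D-1) = 2^{-t/2}\bigl(1 + 1/(2^D-1)\bigr)$, which is in slight tension with the lemma's stated $N = 2^D$, though the claimed inequality holds under either reading.
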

\begin{proof}
Let $a = 2^{K + t/2}$, and define
\begin{equation}
    h(x) = f(x)\Ind[r(x) \leq a],
\end{equation}
where $\Ind[\cdot]$ is an indicator function. Let $\phi, \psi \in L^2(Q)$, and define
\begin{equation}
    \innerProd{\phi}{\psi}_Q \defeq \int_{\Omega} \phi \psi dQ.
\end{equation}
By the triangle inequality,
\begin{equation}
\begin{aligned}
    &\abs{I_D(f) - \Exp_Q[f]} \leq \\
    &~~\abs{\Exp_Q[f] - \Exp_Q[h]} + \abs{\ApproxExp_D(f) - \ApproxExp_D(h)} + \abs{I_D(h) - \Exp_Q[h]}.
\end{aligned}
\end{equation}
We will now proceed to bound each term on the right hand side of the inequality. First,
\begin{equation}
\begin{aligned}
    \abs{\Exp_Q[f] - \Exp_Q[h]} &= \abs{\Exp_Q[f - h]} \\
    &\leq \Exp_Q[\abs{f}\Ind[r > a]] \\
    &=\innerProd{\abs{f}}{\Ind[r > a]}_Q \\
    &\leq \norm{f}_Q\sqrt{\Prob[r(Y) > a]},
\end{aligned}
\end{equation}
where the first inequality follows by Jensen's inequality and the second follows by Cauchy-Schwarz. Next,
\begin{equation}
\begin{aligned}
    \Exp[\abs{\ApproxExp_D(f) - \ApproxExp_D(h)}] 
    &= \Exp[\abs{\ApproxExp_D(f - h)}] \\
    &\leq \Exp[\ApproxExp_D(\abs{f - h})] \\
    &= \Exp_Q[\abs{f}\Ind[r > a]] \\
    &=\innerProd{\abs{f}}{\Ind[r > a]}_Q \\
    &\leq \norm{f}_Q\sqrt{\Prob[r(Y) > a]},
\end{aligned}
\end{equation}
where the first inequality holds by Jensen's inequality, the second equality follows by \Cref{lemma:unbiasedness_of_importance_estimator} and the second inequality follows by Cauchy-Schwarz.
Finally,
\begin{equation}
\label{eq:mean_abs_dev_third_term_bound}
\begin{aligned}
    \Exp[\abs{I_D(h) &- \Exp_Q[h]}]^2 \leq \\
    &\leq \Exp[(I_D(h) - \Exp_Q[h])^2] \\
    &= \Var[I_D(h)] \\
    &= \Var\left[ \frac{1}{N}\sum_{d = 1}^D\sum_{\substack{n \in \Tree_D \\ D_n = d}}h(X_n)\frac{dQ}{dP}(X_n) \right] \\
    &= \frac{1}{N^2} \sum_{d = 1}^D\sum_{\substack{n \in \Tree_D \\ D_n = d}} \Var_{P\lvert_{B_n}}\left[h(X_n)\frac{dQ}{dP}(X_n)\right],
\end{aligned}
\end{equation}
where the first inequality holds by Jensen's inequality. Now, examining a single variance term in the above sum:
\begin{equation}
\begin{aligned}
    \Var&_{P\lvert_{B_n}}\left[h(X_n)\frac{dQ}{dP}(X_n)\right] \leq \\
    &\leq \Exp_{P\lvert_{B_n}}\left[h(X_n)^2\frac{dQ}{dP}(X_n)^2\right] \\
    &=\int_{B_n}h^2(X_n)\frac{dQ}{dP}(X_n)^2\frac{dP(X_n)}{P(B_n)} \\
    &=2^{d - 1} \int_{B_n}h^2(X_n)r(X_n) dQ(X_n) \\
    &=2^{d - 1} \int_{B_n}f^2(X_n)\Ind[r(X_n)  \leq a]r(X_n) dQ(X_n) \\
    &=a2^{d - 1} \int_{B_n}f^2(X_n) dQ(X_n).
\end{aligned}
\end{equation}
Plugging this back into \cref{eq:mean_abs_dev_third_term_bound}, we get
\begin{equation}
\begin{aligned}
    \Exp[\abs{I_D(h) &- \Exp_Q[h]}]^2 \leq \\
    &\leq \frac{a}{N^2} 2^{d - 1} \sum_{d = 1}^D\sum_{\substack{n \in \Tree_D \\ D_n = d}} \int_{B_n}f^2(X_n) dQ(X_n) \\
    &= \frac{a}{N^2} \sum_{d = 1}^D 2^{d - 1}  \int_{\Omega}f^2(X_n) dQ(X_n) \\
    &=\frac{a \norm{f}^2_Q}{N^2} \sum_{d = 1}^D 2^{d - 1} \\
    &= \frac{a \norm{f}^2_Q}{N}.
\end{aligned}
\end{equation}
Taking square roots of the very left and very right, we get
\begin{equation}
    \Exp[\abs{I_D(h) - \Exp_Q[h]}] \leq \norm{f}_Q2^{-t/4}\sqrt{1 + \frac{1}{N}}.
\end{equation}
Putting the three bounds together gives us the desired result.
\end{proof}

We are now ready to state our result on the  o $\widetilde{Q}_D$-expectation
\begin{theorem}[Biasedness of DAD* coding]
\label{thm:biasedness_of_a_star_appendix}
Let $Q$, $P$, $r$, $f$, $\ApproxExp$, $\widetilde{Q}_D$, $K$, $D$, $N$ and $Y$ be defined as in \Cref{lemma:mean_abs_dev_of_imp_estimator}. 
Define
\begin{equation*}
    \delta \defeq \left( 2^{-t/4}\sqrt{1 + \frac{1}{N}} + 2\sqrt{\Prob\left[ \log_2 r(Y) > K + t/2 \right]} \right)^{1/2}
\end{equation*}
Then,
\begin{equation}
    \Prob\left[ \abs*{\Exp_{\widetilde{Q}_D}[f] - \Exp_Q[f]} \geq \frac{2\norm{f}_Q \delta}{1 - \delta} \right] \leq 2\delta.
\end{equation}
\end{theorem}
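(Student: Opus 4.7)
The strategy is to reduce the ratio estimator bound to two one-sided concentration bounds on $\ApproxExp_D(\cdot)$ (applied once to $f$ and once to the constant function $1$), via the identity $\Exp_{\widetilde{Q}_D}[f] = \ApproxExp_D(f)/\ApproxExp_D(1)$ established in \cref{eq:relation_between_exp_q_tiled_and_imp_est}. The mean absolute deviation bound from the earlier lemma (which gives $\Exp[\,|\ApproxExp_D(f) - \Exp_Q[f]|\,] \leq \delta^2 \norm{f}_Q$, with $\delta^2$ exactly the quantity $b$ appearing in that lemma) will be converted into a probability bound through Markov's inequality, which is where the square root in the definition of $\delta$ comes from.

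First, I apply the mean absolute deviation lemma with $f$ itself and with the constant function $1$. Markov's inequality at level $\delta \norm{f}_Q$ yields
\[
\Prob\!\left[\,|\ApproxExp_D(f) - \Exp_Q[f]| \geq \delta \norm{f}_Q\right] \leq \frac{\delta^2 \norm{f}_Q}{\delta \norm{f}_Q} = \delta,
\]
and similarly, noting $\norm{1}_Q = 1$,
\[
\Prob\!\left[\,|\ApproxExp_D(1) - 1| \geq \delta\right] \leq \delta.
\]
Call these bad events $A_f$ and $A_1$, and let $E = A_f^c \cap A_1^c$, so $\Prob[E^c] \leq 2\delta$ by a union bound.

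Next, on the event $E$, I manipulate the ratio. Writing
\[
\frac{\ApproxExp_D(f)}{\ApproxExp_D(1)} - \Exp_Q[f] = \frac{\bigl(\ApproxExp_D(f) - \Exp_Q[f]\bigr) - \bigl(\ApproxExp_D(1) - 1\bigr) \Exp_Q[f]}{\ApproxExp_D(1)},
\]
the triangle inequality combined with the Cauchy--Schwarz (or Jensen) bound $|\Exp_Q[f]| \leq \norm{f}_Q$ gives a numerator bounded by $\delta \norm{f}_Q + \delta \norm{f}_Q = 2\delta \norm{f}_Q$, while the denominator satisfies $\ApproxExp_D(1) \geq 1 - \delta$ on $E$. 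Therefore on $E$,
\[
\left|\Exp_{\widetilde{Q}_D}[f] - \Exp_Q[f]\right| \leq \frac{2\delta \norm{f}_Q}{1-\delta},
\]
and taking complements yields the claimed bound with failure probability at most $2\delta$.

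The only genuinely nontrivial ingredient is the mean absolute deviation lemma that has already been proved; once that is in hand, the remaining argument is essentially a self-normalised importance sampling calculation in the spirit of \citet{chatterjee2018sample}. The main subtlety to watch is that the $\delta$ in the theorem statement is the \emph{square root} of the constant $b$ from the lemma, which is what makes the Markov step give a probability bound at level $\delta$ (rather than $\delta^2$) and requires $\delta < 1$ for the denominator $1-\delta$ to be meaningful; the regime where $\delta \geq 1$ makes the stated bound vacuous and so is trivially handled.
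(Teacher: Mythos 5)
Your proof is correct and follows essentially the same route as the paper's (which itself adapts Theorem~1.2 of \citet{chatterjee2018sample}): apply the mean-absolute-deviation lemma to $f$ and to the constant $1$, convert to probability via Markov at thresholds $\delta\norm{f}_Q$ and $\delta$, take a union bound, and then control the self-normalised ratio by the standard numerator/denominator decomposition with $\abs{\Exp_Q[f]}\le\norm{f}_Q$. The identification $\delta=\sqrt{b}$ and the observation that $\delta\ge 1$ makes the claim vacuous are both exactly the paper's argument.
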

\begin{proof}
The proof is mutatis mutandis the same as the proof of Theorem 1.2 of \citet{chatterjee2018sample}, as it only relies on \cref{eq:relation_between_exp_q_tiled_and_imp_est} and \Cref{lemma:mean_abs_dev_of_imp_estimator}. We simply repeat it here for completeness.
\par
Let $a = 2^{K + t/2}$ and 
\begin{equation}
    b = \left(2^{-t/4}\sqrt{1 + \frac{1}{N}} + 2\sqrt{\Prob\left[ \log_2 r(Y) > K + t/2 \right]} \right).
\end{equation}
Then, by Markov's inequality and \Cref{lemma:mean_abs_dev_of_imp_estimator}, for $\delta \in (0, 1)$ we have
\begin{equation}
\begin{aligned}
    \Prob[\abs{\ApproxExp_D(f) - 1} \geq \delta] &\leq \frac{\Exp[\abs{\ApproxExp_D(f) - 1}]}{\delta} \\
    &\leq \frac{b}{\delta},
\end{aligned}
\end{equation}
and for $\alpha \in (0, 1)$ we get
\begin{equation}
\begin{aligned}
    \Prob[\abs{\ApproxExp_D(f) - \Exp_Q[f]} \geq \alpha] &\leq \frac{\Exp[\abs{\ApproxExp_D(f) - \Exp_Q[f]}]}{\alpha}\\
    &\leq \frac{\norm{f}_Q}{\alpha}.\\
\end{aligned}
\end{equation}
Now, if $\abs{\ApproxExp_D(f) - \Exp_Q[f]} < \alpha$ and $\abs{\ApproxExp_D(f) - 1} < \delta$, then
\begin{equation}
\begin{aligned}
    \abs{\Exp_{\widetilde{Q}_D}[f] &- \Exp_{Q}[f]} \\
    &\leq \abs*{\frac{\ApproxExp_D(f)}{\ApproxExp_D[1]} - \Exp_{Q}[f]} \\
    &\leq \frac{\abs{\ApproxExp_D(f) - \Exp_Q[f] } + \abs{\Exp_Q[f]}\abs{1 - \ApproxExp_D(1)}}{\ApproxExp_D(1)} \\
    &< \frac{\alpha + \abs{\Exp_Q[f]}\delta}{1 - \delta}. 
\end{aligned}
\end{equation}
Finally, setting $\delta = \sqrt{b}$ and $\alpha = \norm{f}_Q\delta$ gives the desired result.

\end{proof}

\section{Proof of \Cref{lemma:ad_star_index_match}}
\label{section:proof_of_ad_star_index_match}
\par

\begin{lemma}
\label{lemma:ad_star_index_match_appendix}
Let $Q$ and $P$ be the target and proposal distributions passed to A* coding (\Cref{alg:a_star_coding}). Let $H^*$ be the heap index returned by unrestricted A* coding and $H^*_d$ be the index returned by its depth-limited version with $D_{max} = d$. Then, conditioned on the public random sequence $S$, we have $H^*_d \leq H^*$. Further, there exists $D \in \Nats$ such that for all $d > D$ we have $H^*_d = H^*$.
\end{lemma}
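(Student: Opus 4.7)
The plan is to condition on the public randomness $S$ so that the entire binary search tree of triples $(B_n, X_n, G_n)$ becomes deterministic, and then to compare unrestricted A* coding with its depth-limited variant as two branch-and-bound searches over the same underlying tree. Let $n^*$ be the node returned by unrestricted A*, with heap index $H^* = H_{n^*}$ and depth $D^*$; by correctness of A* coding, $n^*$ attains the argmax of $G_n + \log r(X_n)$ over all tree nodes. Depth-limited A* with $D_{max} = d$ expands nodes only up to depth $d$, so any node it returns must have depth at most $d+1$: children of depth-$d$ nodes are pushed to the queue and have their $LB$ considered when popped, but they are not themselves expanded.

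From here I would split into two cases. If $D^* \leq d+1$, then $n^*$ lies in the depth-limited search space; since it attains the global maximum it a fortiori attains the maximum over this restricted subset, so by correctness of depth-limited A* it is the returned node, giving $H^*_d = H^*$. If $D^* > d+1$, depth-limited A* must return some node of depth at most $d+1 < D^*$; using that a node of depth $D$ has heap index in $[2^{D-1}, 2^D-1]$, this index is at most $2^{d+1}-1 < 2^{D^*-1} \leq H^*$. Combining the two cases yields $H^*_d \leq H^*$. For the second claim, setting $D = D^*$ suffices: any $d > D$ gives $D^* \leq d < d+1$, so by Case 1 we have $H^*_d = H^*$. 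Here $D^*$ is finite almost surely since $\KLD{Q}{P} < \infty$ implies unrestricted A* terminates a.s., and the conditioning on $S$ picks out such a realization.

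The main obstacle will be justifying Case 1 rigorously, i.e.\ showing that depth-limited A* genuinely reaches $n^*$ rather than terminating with some ancestor of $n^*$ still unprocessed in its priority queue. The argument I would use exploits bound validity: if an ancestor $a$ of $n^*$ were left unprocessed at termination, its priority would satisfy $G_a + M(B_a) \leq LB$, but combining $G_{n^*} \leq G_a$ (a property of the top-down Gumbel-process construction, since Gumbel values decrease along any root-to-leaf path) with $\log r(X_{n^*}) \leq M(B_a)$ (upper-bound validity, since $X_{n^*} \in B_a$) gives $G_{n^*} + \log r(X_{n^*}) \leq G_a + M(B_a) \leq LB$; since $LB$ at termination cannot exceed the global maximum $G_{n^*} + \log r(X_{n^*})$, this forces equality, and in the continuous setting $n^*$ is almost surely the unique argmax, so it must already have been popped. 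This rules out premature termination and closes the argument.
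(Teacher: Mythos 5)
Your proof is correct and follows essentially the same two-case decomposition as the paper's (split on whether the globally optimal node $n^*$ lies inside the depth-limited search space), but you are noticeably more careful on two points where the paper is terse. First, the paper simply \emph{asserts} that depth-limited A* returns $\argmax_{n \in \Tree_d}\{G_n + \log r(X_n)\}$; you actually justify this via the branch-and-bound invariants --- monotonicity of the truncated Gumbels along root-to-leaf paths, validity of the bounding function $M$, and almost-sure uniqueness of the argmax in the continuous setting --- which is exactly the content needed to rule out early termination or pruning of an ancestor of $n^*$. Second, you read \Cref{alg:a_star_coding} literally and note that the reachable nodes have depth at most $d+1$ (children of depth-$d$ nodes are still pushed and popped, just not expanded), whereas the paper states $D_{n^*_d}\leq d$; this is a small off-by-one in the paper's bookkeeping but has no effect on either claim of the lemma since both readings give $H^*_d \leq H^*$ and eventual equality. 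One minor elaboration you could add for completeness: your argument for the hard case handles an ancestor of $n^*$ being left in the queue, but the same bound chain $G_{n^*} + \log r(X_{n^*}) \leq G_c + M(B_c) \leq LB$ also covers the case where an ancestor $c$ (or $n^*$ itself) was \emph{pruned} rather than left open, since the pruning test $LB \geq G_c + M(B_c)$ gives the same inequality and $LB$ is non-decreasing; making that explicit would close the argument cleanly.
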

\begin{proof}
Let $n, m \in \Tree$ be two nodes in the binary tree representation $\Tree$ of the Gumbel process with base measure $P$ realized by \cref{alg:priority_queue_construction} simulated using the public random sequence $S$, such that $D_m < D_n$.
It follows from the definition of heap indexing that $H_m < H_n$. 
Given $S$, A* and its depth-limited version search over the same tree $\Tree$, with the difference that the depth-limited version only searches $\Tree_d \subset \Tree$, the tree truncated after depth $D_{max} = d$. Let 
\begin{equation}
\begin{aligned}
    n^* &= \argmax_{n \in \Tree}\{G_n + \log r(X_n)\} \\
    n^*_d &= \argmax_{n \in \Tree_d}\{G_n + \log r(X_n)\},
\end{aligned}
\end{equation}
the nodes from $\Tree$ returned by unrestricted A* coding and its depth-limited version, respectively. Clearly $D_{n^*_d} \leq d$. Then, we have the following two cases.
\par
\textbf{Case 1:} $d < D_{n^*}$. In this case, we have $D_{n^*_d} < D_{n^*}$, hence $H^*_d < H^*$.
\par
\textbf{Case 2:} $d \geq D_{n^*}$. In this case, depth-limited A* coding finds and returns $n^*$, and since it is the global maximum, increasing the budget further will not make a difference in the returned node. Thus, for this case we get $H^*_d = H^*$, as required. 
\end{proof}

\begin{algorithm}
\SetAlgoLined
\DontPrintSemicolon
\SetKwInOut{Input}{Input}\SetKwInOut{Output}{Output}
\SetKwFunction{pushWithPriority}{pushWithPriority}\SetKwFunction{topPriority}{topPriority}\SetKwFunction{popHighest}{popHighest}\SetKwFunction{partition}{partition}\SetKwFunction{isEmpty}{empty}
\SetKw{yield}{yield}

\Input{Base distribution $P$ with density $p$, number of samples to realize $K$, {\color{blue} maximum search depth $D_{max}$}.}
\Output{Next realization from the Gumbel process $\GumbelProcess_P = (X_k, G_k)_{k = 1}^\infty$.}
$k, D_1, H_1 \gets 1, 1, 1$\;
$Q \gets \mathrm{PriorityQueue}$\;
$G_1 \sim \mathrm{Gumbel}(0)$\;
$X_1 \sim P(\cdot)$\;
$\Pi.\pushWithPriority(1, G_1)$\;

\BlankLine
\While{$!\Pi.\isEmpty()$ }{
    $i \gets \Pi.\popHighest()$\;
    \BlankLine

    \If{{\color{blue}$D_i \leq D_{max}$}}{
        \BlankLine
        $L, R \gets \partition(B_p, X_i)$
        \BlankLine
        \For{$C \in \{L, R\}$}{
            $k \gets k + 1$\;
            $B_k \gets C$\;
            ${\color{blue}D_k \gets D_p + 1}$\;
            ${\color{blue}H_k \gets \begin{cases}
            2H_p & \text{if } C = L \\
            2H_p + 1 & \text{if } C = R \\
            \end{cases}}$\;
            $G_k \sim \mathrm{TruncGumbel}(\log P(B_k), G_p)$\;
            $X_k \sim P(\cdot)\lvert_{B_k}/P(B_k)$\;
            $\Pi.\pushWithPriority(k, G_k)$\;
        }
    }
    \yield{$X_i, G_i, H_i, D_i$}
}
\caption{Priority queue-based top-down construction of a Gumbel process.}
\label{alg:priority_queue_construction}
\end{algorithm}

\section{KL Parameterization for Distributions}
\label{section:iso_kl_layer}
\par
In this section, we demonstrate how Gaussian and uniform target distributions can be parameterized by their KL divergence to some reference distribution, and give some details on how to implement these parameterizations in practice.

\subsection{KL-Mean Parameterization for Gaussians}
\par
Given a reference Gaussian distribution $P = \mathcal{N}(\nu, \rho^2)$, we want to parameterize $Q = \mathcal{N}(\mu, \sigma^2)$ such that
\begin{equation}
    \KLD{Q}{P} = \kappa ~ \text{ and } ~ \sigma < \rho.
\end{equation}
The condition that $\sigma < \rho$ incorporates the additional inductive bias, that since $P$ in practical applications acts as a prior and $Q$ will be representing a variational approximation to a posterior, the posterior should have less uncertainty than the prior. 
It is also a condition required by A* coding, as this condition is necessary and sufficient to ensure $\infD{Q}{P} < \infty$. 
\par
The KL divergence from $Q$ to $P$ is
\begin{equation} \label{eq:kl}
    \KLD{Q}{P} = \log \frac{\rho}{\sigma} + \frac{\sigma^2+  (\mu - \nu)^2}{2\rho^2} - \frac{1}{2} = \kappa.
\end{equation}
We observe that the largest value that $|\mu - \nu|$ can take while satisfying this equality and the constraint $0 < \sigma < \rho$, occurs when $\sigma \to \rho$, in which case we have
\begin{equation}
\label{eq:constrained_mean_diff}
    \frac{(\mu - \nu)^2}{2\rho^2} = \kappa \implies |\mu - \nu| <  \rho ~ \sqrt{2\kappa}.
\end{equation}
Defining $\Delta = (\mu - \nu) / \rho$, we can rearrange \cref{eq:kl} to
\begin{equation}
    \frac{\sigma^2}{\rho^2}~e^{- \sigma^2 / \rho^2} = e^{\Delta^2 - 2\kappa - 1}.
\end{equation}
We can rearrange this equation using the Lambert $W$ function \cite{lambert1758observationes}, into the form
\begin{equation}
    \sigma^2 = -\rho^2 W\left(-e^{\Delta^2 - 2\kappa - 1}\right).
\end{equation}
While the Lambert $W$ does not have an expression in terms of elementary functions, it can be estimated numerically \citep{corless1996lambertw}.
While a numerical solver for the $W$ function is supported in Tensorflow, we found it computationally faster and numerically stabler method to use a Pade approximant of $W$ in practice \cite{Brezinski94extrapolationalgorithms}.
We make this approximatant method available in our code repository.

\par
\textbf{Implementing a Gaussian IsoKL layer:}
Note, that the parameters derived above are in a constrained domain. Thus, given $P = \mathcal{N}(\nu, \rho^2)$, we can reparameterize $\kappa$ and $\mu$ to an unconstrained domain as follows:
\begin{enumerate}
    \item Let $\alpha, \beta$ be real numbers.
    \item Set $\kappa \gets \exp(\alpha)$. This will ensure that $\kappa \geq 0$.
    \item Set $\mu \gets \nu + \rho \sqrt{2\kappa}\tanh(\beta)$. This ensures the inquality on $\abs{\mu - \nu}$ in \Cref{eq:constrained_mean_diff} is satisfied.
    \item Set $\sigma^2 \gets -\rho^2 W\left(-\exp\left(\Delta^2 - 2\kappa - 1\right)\right)$.
\end{enumerate}

\subsection{KL-Infinity Divergence Parameterization for Gaussians}
\label{sec:kl_infd_parameterization}
\par
Assume now, instead of just controlling the KL, we wish to control the R\'enyi $\infty$ divergence as well.
Concretely, for a given reference Gaussian distribution $P = \mathcal{N}(\nu, \rho^2)$, we want to parameterize $Q = \mathcal{N}(\mu, \sigma^2)$ such that
\begin{equation}
    \KLD{Q}{P} = K \quad\text{and}\quad \infD{Q}{P} = R.
\end{equation}
We know, that
\begin{equation}
\begin{aligned}
    K &= \KLD{Q}{P} = \frac{1}{2}\left[ \mu^2 + \sigma^2 - \log \sigma^2 - 1 \right] \\
    R &= \log \sup_{x \in \Reals} \left\{\frac{dQ}{dP}(x) \right\} = \frac{\mu^2}{2(1 - \sigma^2)} - \log \sigma.
\end{aligned}
\end{equation}
From these, we get that
\begin{equation}
\begin{aligned}
    \mu^2 &= 2K - \sigma^2 +\log \sigma^2 + 1\\
    \mu^2 &= 2(1 - \sigma^2) (R + \log \sigma)
\end{aligned}
\end{equation}
Setting these equal to each other
\begin{equation}
\begin{aligned}
    2K - \sigma^2 +\log \sigma^2 + 1 &= 2 R + \log \sigma^2 - 2\sigma^2  R - \sigma^2 \log \sigma^2 \\
    \sigma^2 \log \sigma^2 - \sigma^2 + 2\sigma^2 R &= 2 R - 2K - 1 \\
    \sigma^2 \log \sigma^2 + \sigma^2 (2 R - 1) &= A \\
    \sigma^2 (\log \sigma^2 + B) &= A \\
    \sigma^2 \log (\sigma^2 e^B) &= A \\
    e^B \sigma^2 \log (\sigma^2 e^B) & = Ae^B \\
    e^{\log (\sigma^2 e^B)} \log (\sigma^2 e^B) &=A e^B \\
    \log (\sigma^2 e^B) &= W(Ae^B) \\
    \sigma^2 &= e^{W(Ae^B) - B},
\end{aligned}
\end{equation}
where we made the substitions $A = 2 R - 2K - 1$ and $B = 2R - 1$.
\paragraph{Note:} This parameterization is only unique up to the sign of the target mean $\mu$ due to the symmetry of the Gaussian distribution.

\subsection{KL-Mean parameterization for Uniforms}
\par
Given a uniform reference distribution $P = \Uniform{\nu}{\rho}$ with mean $\nu$ and width $\rho$ on the interval $[\nu - \rho / 2, \nu + \rho / 2]$, we want to parameterize $Q = \Uniform{\mu}{\sigma}$ such that
\begin{equation}
    \KLD{Q}{P} = \kappa.
\end{equation}

Note, that we must have
\begin{equation}
    \nu - \rho / 2 \leq \mu - \sigma / 2 < \mu + \sigma / 2 \leq \nu + \rho / 2
\end{equation}
to ensure $Q \ll P$. Now, we have 
\begin{equation}
    \kappa = \KLD{Q}{P} = \log\frac{\rho}{\sigma},
\end{equation}
from which we get
\begin{equation}
    \sigma = \rho \exp(-\kappa).
\end{equation}

\par
\textbf{Implementing a Uniform IsoKL layer:}
Note, that the parameters derived above are in a constrained domain. Thus, given $P = \Uniform{\nu}{\rho}$, we can reparameterize $\kappa$ and $\mu$ to an unconstrained domain as follows:
\begin{enumerate}
    \item Let $\alpha, \beta$ be real numbers.
    \item Set $\kappa \gets \exp(\alpha)$. This will ensure that $\kappa \geq 0$.
    \item Set $\sigma \gets \rho \exp(-\kappa)$.
    \item Set $\mu \gets \nu + \frac{\rho - \sigma}{2}\tanh(\beta)$. This will ensure $Q \ll P$.
\end{enumerate}


\end{document}